\documentclass[12pt]{article}

\usepackage[utf8]{inputenc}
\usepackage[english]{babel}

\usepackage{titlesec}

\titleformat{\section}[block]
{\fontsize{12}{15}\bfseries\sffamily}
{\thesection}
{1em}
{}
\titleformat{\subsection}[hang]
{\fontsize{12}{15}\bfseries\sffamily}
{\thesubsection}
{1em}
{}

\setcounter{secnumdepth}{5}

\setcounter{tocdepth}{6}



\usepackage{amsthm}
\usepackage{amsmath}

\newtheorem{theorem}{Theorem}[section]
\newtheorem{lemma}[theorem]{Lemma}
\newtheorem{corollary}[theorem]{Corollary}

\theoremstyle{definition}
\newtheorem{definition}[theorem]{Definition}

\theoremstyle{remark}

\usepackage{graphicx}
\usepackage{epstopdf}
\usepackage{caption}
\usepackage{subcaption}
\usepackage{float}

\renewcommand\footnotemark{}

\RequirePackage{bbding} 

\usepackage{csquotes}

\usepackage[numbers]{natbib}

\usepackage[colorlinks = true,
linkcolor = blue,
urlcolor  = blue,
citecolor = blue,
anchorcolor = blue]{hyperref}

\RequirePackage{fix-cm}
\makeatletter
\def\munderbar#1{\underline{\sbox\tw@{$#1$}\dp\tw@\z@\box\tw@}}
\makeatother

\begin{document}	
	
\title{\textbf{Design-based estimators of distribution function in ranked set sampling with an application}}
\author{ Yusuf Can Sevil\textsuperscript{a} and Tugba Ozkal Yildiz\textsuperscript{b}}

\date{\small \textsuperscript{a}The Graduate School of Natural and Applied Sciences, Dokuz Eylul University, Tinaztepe Campus, 35160, Buca, Izmir, Turkey \\
\textsuperscript{b}Department of Statistics, Faculty of Science, Dokuz Eylul University, Tinaztepe Campus, 35160, Buca, Izmir, Turkey}

\maketitle

\begin{abstract}
	Empirical distribution functions (EDFs) based on ranked set sampling (RSS) and its modifications have been examined by many authors. In these studies, the proposed estimators have been investigated for infinite population setting. However, developing EDF estimators in finite population setting would be more valuable for areas such as environmental, ecological, agricultural, biological, etc. This paper introduces new EDF estimators based on level-0, level-1 and level-2 sampling designs in RSS. Asymptotic properties of the new EDF estimators have been established. Numerical results have been obtained for the case when ranking is imperfect under different distribution functions. It has been observed that level-2 sampling design provides more efficient EDF estimator than its counterparts of level-0, level-1 and simple random sampling. In real data application, we consider a pointwise estimate of distribution function and estimation of the median of sheep's weights at seven months using RSS based on level-2 sampling design.
\end{abstract}

\vspace{0.2cm}
\hspace{-0.6cm}\textbf{Key Words:} Design-based estimators; Empirical distribution function; Finite population; H\'{a}jek type estimator; Ranked set sampling; Sheep data. \\
\hspace{-0.6cm}\textbf{Mathematics Subject Classifications 2020:} 62G30, 62D05, 65C05, 62P12

\vspace{2cm}
\section{Introduction}

Ranked set sampling (RSS) was proposed by McIntyre \cite{mcintyre1952method}. In this study, RSS was used for seeking to estimate the yield of pasture in Australia, effectively. Because, making precise yield measurements requires harvesting the crops and so it is expensive. McIntyre \cite{mcintyre1952method} proved that mean estimator of RSS is unbiased regardless of any error in ranking process. Then, Halls and Dell \cite{halls1966trial} examined the performance of RSS for estimating the weights of browse and of herbage in a pine. Also, they investigated the effect of ranking errors in practice. Theoretical properties of RSS were investigated by Takahasi and Wakimoto \cite{takahasi1968unbiased}. In this study, they showed that the mean estimator of RSS is unbiased and the variance of the estimator is always smaller than the variance of the mean estimator of simple random sampling (SRS) under perfect ranking. Dell and Clutter \cite{dell1972ranked} evaluated the effect of ranking errors on RSS. For the detailed literature, see Kaur et al. \cite{kaur1995ranked}, Chen et al. \cite{chen2003ranked}, Al-Omari and Bouza \cite{al2014review} and Bouza and Al-Omari \cite{bouza2018ranked}.

Let $X_{1},\cdots,X_{N}$ be population units with absolutely continuous distribution function $F(x)$,
\begin{equation}
	F\left( x\right) =\frac{1}{N}\sum\limits_{i=1}^{N}I\left( X_{i}\leq
	x\right).
\end{equation}
It is assumed that units are easy to rank and difficult and/or costly to measure. Also, $N$ is deemed to be fairly large. According to McIntyre \cite{mcintyre1952method}'s definition, the procedure of RSS as follows. A set of size $k$ is selected without replacement from the population. Then, these units are ranked from the smallest to the largest and the first smallest unit, say $X_{1(1)}$, is selected for full measurement. Here, the subscript $r(r)$ indicate that $X$ is the $r$th smallest unit in the $r$th set. Also, the parenthesis $()$ is used for the case when the ranking is perfect. If there is a doubt that the ranking is imperfect, the bracket $[]$ is used instead of the parenthesis $()$. After the first smallest unit is measured, the second set of size $k$ is selected without replacement. In the set, the second smallest unit, say $X_{2(2)}$, is selected for full measurement. This procedure is repeated until the largest unit, say $X_{k(k)}$, is selected from the $k$th set. To obtain a ranked set sample of size $n=mk$, the same steps are repeated $m$ times. The notation $m$ is called as the number of cycles. The units in the ranked set sample are denoted by $X_{r(r)j}$, $r=1,\cdots,k$ and $j=1,\cdots,m$. Thus, $m$ units are selected from each $r$th order statistic having distribution function $F_{(r)}(x)$, $r=1,\cdots,k$.

In the literature of RSS, estimation of distribution function is a remarkable topic. First, Stokes and Sager \cite{stokes1988characterization} suggested the following estimator,
\begin{equation}
	\hat{F}_{RSS}(x)=\frac{1}{mk}\sum\limits_{j=1}^{m}\sum\limits_{r=1}^{k}I%
	\left( X_{r(r)j}\leq x\right).
\end{equation}
They showed that $\hat{F}_{RSS}(x)$ is unbiased with variance
\begin{equation*}
	V\left[\hat{F}_{RSS}(x)\right]=\frac{1}{mk^{2}}\sum%
	\limits_{r=1}^{k}F_{(r)}(x)\left( 1-F_{(r)}(x)\right)
\end{equation*}
and
\begin{equation*}
	\frac{\hat{F}_{RSS}(x)-E\left( \hat{F}_{RSS}(x)\right) }{\left( V\left[\hat{F}_{RSS}(x)\right]\right)^{1/2}}
\end{equation*}
converges in distribution to standard normal as $m\rightarrow \infty$, when $x$ and $k$ are held fixed. Then, the estimator is modified by using other ranked based sampling methods such as extreme RSS \cite{samawi2001estimation}, double RSS \cite{abu2002distribution}, extreme median RSS \cite{Kim205edf}, L RSS \cite{al2015efficiency}, quartile RSS \cite{al2016quartile}, partially rank-ordered set \cite{nazari2016distribution}, pair RSS \cite{zamanzade2019edf} and percentile RSS \cite{sevil2021estimation}. 

In the literature, research in RSS draw considerable attention in finite population setting as well. Patil et al. \cite{patil1995finite} examined without replacement procedure in RSS to estimate the population mean. Deshpande et al. \cite{deshpande2006nonparametric} expanded the sampling design in Patil et al. \cite{patil1995finite} and introduced three different sampling designs which are level-0, level-1 and level-2 for the case when the population size $N$ is fairly small. These sampling designs are introduced in the Section 2. It is appeared that particular attention is paid to estimating first and second order inclusion probabilities in the literature since estimating the inclusion probabilities is a major problem to obtain some estimators such as Horvitz-Thompson mean and total estimators, Horvitz and
Thompson \cite{horvitz1952generalization}. Here, the first order inclusion probability $\pi_{i}$ is probability that $i$th population unit is appeared in the sample while the second order inclusion probability $\pi_{ii^{\prime}}$ is probability that both $i$th and $i^{\prime}$th population units are appeared in the sample, $1\leq i, i^{\prime}\leq N$. The first order inclusion probabilities are the same for all population units under SRS or under level-2 sampling design while the population units have different inclusion probabilities under level-0 sampling design or under level-1 sampling design. For the level-0 sampling design, Jafari Jozani and Johnson \cite{jafari2011design} obtained the first and second order inclusion probabilities. Also, they developed mean and ratio estimators based on level-0 sampling design. Inclusion probabilities for level-1 sampling design was investigated by Al-Saleh and Samawi \cite{al2007note} and Ozdemir and Gokpinar \cite{oezdemir2007generalized}. In these studies, authors estimated the inclusion probabilities by using conditional probabilities of all possible selections in the population. Therefore, the proposed estimators are useful only for small sample size. Then, Frey \cite{frey2011recursive} suggested a recursive algorithm that can used even if the sample size is relatively large. 

Recently, statistical inference for mean, total, variance and quantiles have been discussed by Ozturk \cite{ozturk2014statistical, ozturk2014estimation, ozturk2016estimation}, Ozturk and Bayramoglu Kavlak \cite{ozturk2018model, ozturk2021model} in finite population setting. Also, Sevil and Yildiz \cite{sevil2017power, sevil2020performances} and Yildiz and Sevil \cite{yildiz2018performances, yildiz2019empirical} proposed the following empirical distribution function (EDF) in finite population setting.
\begin{equation*}
	\hat{F}_{L-t}(x)=\frac{1}{mk}\sum\limits_{j=1}^{m}\sum\limits_{r=1}^{k}I%
	\left( X_{r(r)j}^{(t)}\leq x\right)
\end{equation*}
where $t=0$, $1$ and $2$ for level-0, level-1 and level-2, respectively. They proved that the EDF is unbiased and is more efficient than the EDF based on SRS. Moreover, the EDFs based on the sampling designs was applied to air quality data by Yildiz and Sevil \cite{yildiz2019empirical} and to body mass index by Sevil and Yildiz \cite{sevil2020performances}.

Ozturk et al. \cite{ozturk2005estimation} showed that the procedure of RSS can be used to collect data from farm animals, such as sheep, cattle and cows. In these experiments, taking measurements from any variables such as milk, meat and wool yields is time-consuming because of the size and physical behavior of the animals. Considering that easy to measure variables (including cheap and/or inexpensive observations) are often available, it can be said that RSS is a useful sampling technique in these experiments. By using the cheap and/or inexpensive observations, the interested variable is divided into several artificial strata without taking measurement. Thus, observations which are worthy to be measured are selected from each stratum. 

In the present paper, we investigate the distribution function of sheep weight (kg) from the Research Farm of Ataturk University, Erzurum, Turkey. Examining the quantiles can be provided valuable information for research. If the distribution function of the sheep weights is estimated, both the quantiles and the probabilities corresponding to the specific quantiles can be obtained. Unlike the studies Sevil and Yildiz \cite{sevil2017power, sevil2020performances} and Yildiz and Sevil \cite{yildiz2018performances, yildiz2019empirical}, we provide design-based estimators for distribution function. These estimators use the information of inclusion probabilities as well. Therefore, this paper brings a different perspective on estimation of distribution function. We give details about the new estimators in further sections. Rest of the paper organized as follows. Also, we define the procedures of the sampling designs in Section 2. Then, we examine the design-based estimators and theoretical properties of the estimators in Section 3. In Section 4, asymptotic properties of the proposed estimators are provided. Also, the asymptotic relative efficiencies of the proposed EDF estimators based on level-0, level-1 and level-2 sampling designs w.r.t the estimator based on SRS are presented in this section. In Section 5, empirical results for imperfect ranking are reported. In Section 6, the EDF estimator which has the best performance is applied to sheep data to estimate the distribution of their weights. Finally, we give some concluding remarks in Section 7.

\section{Sampling designs}

In this section, we describe the procedure of the sampling designs. These sampling designs have different procedures in terms of their replacement policies. Let $X_{1},\cdots,X_{N}$ be population units. The sampling design was defined by Deshpande et al. \cite{deshpande2006nonparametric} as follows:
\begin{itemize}
	\item Level-0 sampling design:
	\begin{enumerate}
		\item[(1)] Select $k$ units without replacement from the population.
		\item[(2)] By using a single auxiliary variable, rank the units and measured the $r$th order statistic from the $r$th set, $X^{(0)}_{r(r)j}$.
		\item[(3)] All units are replaced back into the population.
		\item[(4)] (1)-(3) are repeated $k$ times for $r=1,\cdots,k$.
		\item[(5)] (1)-(4) are repeated $m$ times for $j=1,\cdots,m$.
	\end{enumerate}
	\item Level-1 sampling design:
	\begin{enumerate}
		\item[(1)] Select $k$ units without replacement from the population.
		\item[(2)] By using a single auxiliary variable, rank the units and measured the $r$th order statistic from the $r$th set, $X^{(1)}_{r(r)j}$.
		\item[(3)] The measured unit is not replaced back into the population.
		\item[(4)] (1)-(3) are repeated $k$ times for $r=1,\cdots,k$.
		\item[(5)] (1)-(4) are repeated $m$ times for $j=1,\cdots,m$.
	\end{enumerate}
	\item Level-2 sampling design:
	\begin{enumerate}
		\item[(1)] Select $k$ units without replacement from the population.
		\item[(2)] By using a single auxiliary variable, rank the units and measured the $r$th order statistic from the $r$th set, $X^{(2)}_{r(r)j}$.
		\item[(3)] None of the units in the set are replaced back into the population.
		\item[(4)] (1)-(3) are repeated $k$ times for $r=1,\cdots,k$.
		\item[(5)] (1)-(4) are repeated $m$ times for $j=1,\cdots,m$.
	\end{enumerate}
\end{itemize}

For the level-0 sampling design, a population unit may be selected more than once both in the ranking process and in the final sample. In the level-1 sampling design, a population unit may be appeared in the ranking process, but is not be appeared in the final sample. If the level-2 sampling design is used, a unit in the population appeared more than once neither in the ranking process nor in the final sample. Also, we need $k\leq N$ for the level-0 sampling design, $\max_\iota\left\lbrace \left(\iota-1 \right) +k \right\rbrace \leq N$ for level-1 sampling design, and $mk^{2}\leq N$ for the level-2 sampling design, Frey \cite{frey2011recursive} where $\iota=1,\cdots,mk$.

The sampling designs have similar behaviors for large population ($N$ is fairly large), but they perform differently for small finite population.  Deshpande et al. \cite{deshpande2006nonparametric} proposed nonparametric confidence intervals based on these sampling designs for population median. They recommended the level-2 sampling design for small finite populations such as $N=20$, $30$, $40$ and $50$.

\section{Design-based estimators}

Assumed that $X_{1},\cdots,X_{N}$ is population units. We supposed that $\pmb{D}$ is a sample of size $n$ which is selected from the population by using a sampling design with $\pi_{i}$($>0$) and $\pi_{ii^{\prime}}$($>0$), $1\leq i,i^{\prime}\leq N$. For estimation of $F(x)$ that is given by Eq. (1), the following estimator has been proposed.
\begin{equation}
	\hat{F}_{h}(x)=\sum\limits_{i\in \pmb{D}}\frac{\frac{I\left( X_{i}\leq x\right) 
		}{\pi _{i}}}{\sum\limits_{\imath\in \pmb{D}}\frac{1}{\pi _{\imath}}},
\end{equation}
where $1\leq i\leq N$. The estimator is called H\'{a}jek type estimator, Arnab \cite{arnab2017survey}. The first and second order inclusion probabilities, $\pi _{i}$ and $\pi _{ii^{\prime}}$ vary depending on the sampling design. In the present paper, the sample units in $\pmb{D}$ is obtained by using SRS without replacement, so the first and second order inclusion probabilities are $\pi _{i}= n/N$ and $\pi _{ii^{\prime}}= n(n-1)/(N(N-1))$ for $1\leq i,i^{\prime}\leq N$, respectively. The properties of H\'{a}jek type estimator are as follows:
\begin{theorem}
	Let $\pmb{D}$ is a sample of size $n$ and is obtained by using SRS without replacement design. Then, 
	\begin{enumerate}
		\item[1.] $\hat{F}_{h}(x)$ is unbiased for $F(x)$.
		\item[2.] 
		\begin{equation}
			\begin{split}
				V\left[ \hat{F}_{h}(x)\right] =N^{-2}\sum\limits_{i=1}^{N}\sum%
				\limits_{i^{\prime }=1}^{N} & \left( \pi _{ii^{\prime }}-\pi _{i}\pi
				_{i^{\prime }}\right) \\ 
				&\hspace{-1.5cm}\times \left( \frac{I\left( X_{i}\leq x\right) -F\left(
					x\right) }{\pi _{i}}\right) \left( \frac{I\left( X_{j}\leq x\right) -F\left(
					x\right) }{\pi _{i^{\prime }}}\right)
			\end{split}
		\end{equation}  
		\item[3.] 
		\begin{equation}
			\begin{split}
				\hat{V}\left[ \hat{F}_{h}(x)\right] =-\frac{1}{2\left( \sum\limits_{i\in D}\frac{1}{\pi _{i}}\right)^{-2}}%
				\sum\limits_{i\in D}\sum\limits_{i^{\prime }\in D} & \frac{\left( \pi
					_{ii^{\prime }}-\pi _{i}\pi _{i^{\prime }}\right) }{\pi _{ii^{\prime }}} \\ 
				&\hspace{-3cm}\times
				\left( \frac{I\left( X_{i}\leq x\right) -\hat{F}_{h}(x)}{\pi _{i}}-\frac{%
					I\left( X_{j}\leq x\right) -\hat{F}_{h}(x)}{\pi _{i^{\prime }}}\right)^{2}
			\end{split}
		\end{equation} 
	\end{enumerate}
	where $1\leq i,i^{\prime}\leq N$. 
\end{theorem}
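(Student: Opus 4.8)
The plan is to reduce the H\'{a}jek estimator to a Horvitz--Thompson statistic and then invoke standard design-based identities, exploiting throughout that simple random sampling without replacement gives constant first-order inclusion probabilities $\pi_i\equiv n/N$. Writing $\delta_i=I(i\in\pmb{D})$ for the membership indicator and abbreviating $y_i=I(X_i\le x)$, the first observation is that the H\'{a}jek denominator is nonrandom, $\sum_{i\in\pmb{D}}1/\pi_i=n\cdot(N/n)=N$, so that
\begin{equation*}
\hat{F}_{h}(x)=\frac{1}{N}\sum_{i=1}^{N}\frac{y_i}{\pi_i}\,\delta_i .
\end{equation*}
This is $N^{-1}$ times the Horvitz--Thompson estimator of the total $\sum_i y_i=N F(x)$, and all three parts will follow from this representation.

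For part 1 I would take the design expectation term by term, using $E[\delta_i]=\pi_i$, to get $E[\hat{F}_{h}(x)]=N^{-1}\sum_i (y_i/\pi_i)\pi_i=N^{-1}\sum_i y_i=F(x)$; unbiasedness holds and does not even require $\pi_i$ to be constant. For part 2 I would apply the Horvitz--Thompson variance identity to the linear statistic above, using $\mathrm{Cov}(\delta_i,\delta_{i'})=\pi_{ii'}-\pi_i\pi_{i'}$ with the convention $\pi_{ii}=\pi_i$, which produces the \emph{uncentred} double sum $N^{-2}\sum_{i,i'}(\pi_{ii'}-\pi_i\pi_{i'})(y_i/\pi_i)(y_{i'}/\pi_{i'})$. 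The one substantive step is to insert the centring constant $F(x)$ and recover the stated form (4). This rests on the fixed-size identity $\sum_{i'=1}^{N}(\pi_{ii'}-\pi_i\pi_{i'})=0$, which holds because $\sum_{i'\in\pmb{D}}1=n$ is fixed, so that the row sums obey $\sum_{i'}\pi_{ii'}=n\pi_i=\sum_{i'}\pi_i\pi_{i'}$; since $\pi_i$ is constant the factors $F(x)/\pi_i$ pull out of the inner sums, and the cross and constant terms arising from expanding $(y_i-F(x))(y_{i'}-F(x))$ all vanish.

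For part 3 the target is the Sen--Yates--Grundy variance estimator written in H\'{a}jek form. I would first note that, because $\pi_i$ is constant, the centring by $\hat{F}_{h}(x)$ inside the squared bracket is cosmetic:
\begin{equation*}
\frac{y_i-\hat{F}_{h}(x)}{\pi_i}-\frac{y_{i'}-\hat{F}_{h}(x)}{\pi_{i'}}=\frac{y_i-y_{i'}}{\pi_i}=\frac{y_i}{\pi_i}-\frac{y_{i'}}{\pi_{i'}},
\end{equation*}
so the estimator (5) coincides with the ordinary Sen--Yates--Grundy estimator built from the raw indicators, carrying the deterministic prefactor $(\sum_{i\in\pmb{D}}1/\pi_i)^{-2}=N^{-2}$. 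Unbiasedness then follows from the design-expectation rule $E\big[\sum_{i,i'\in\pmb{D}}g_{ii'}/\pi_{ii'}\big]=\sum_{i,i'=1}^{N}g_{ii'}$, where the diagonal terms drop out because the squared bracket vanishes when $i=i'$, applied to $g_{ii'}=-\tfrac12(\pi_{ii'}-\pi_i\pi_{i'})(y_i/\pi_i-y_{i'}/\pi_{i'})^2$.

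The main obstacle I anticipate is the algebraic conversion underpinning part 3: one must verify that the Sen--Yates--Grundy double sum actually equals the variance (4), that is, carry out the identity turning $\sum_{i,i'}(\pi_{ii'}-\pi_i\pi_{i'})a_i a_{i'}$ into $-\tfrac12\sum_{i,i'}(\pi_{ii'}-\pi_i\pi_{i'})(a_i-a_{i'})^2$, which once more invokes the fixed-size row-sum identity, and then confirm that replacing the unknown $F(x)$ by the random $\hat{F}_{h}(x)$ leaves the estimator unbiased --- precisely the constant-$\pi_i$ cancellation displayed above. Everything else is routine substitution of $\pi_i=n/N$ and $\pi_{ii'}=n(n-1)/(N(N-1))$.
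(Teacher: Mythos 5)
Your proof is correct, and it is genuinely more self-contained than the paper's. The paper proves part 1 by a short direct computation (writing $E[I(X_i\le x)]=F(x)$ for each sampled unit and using $\sum_{\imath\in\pmb{D}}1/\pi_\imath=N$), and then handles parts 2 and 3 purely by citation to S{\"a}rndal et al., Section 5.11, formulas (5.11.7) and (5.11.9). You instead derive all three parts from first principles: you observe that under SRS without replacement the H\'{a}jek denominator is the nonrandom constant $N$, so $\hat F_h(x)$ is exactly $N^{-1}$ times a Horvitz--Thompson estimator; unbiasedness then follows from $E[\delta_i]=\pi_i$, part 2 from $\mathrm{Cov}(\delta_i,\delta_{i'})=\pi_{ii'}-\pi_i\pi_{i'}$ combined with the fixed-size row-sum identity $\sum_{i'}(\pi_{ii'}-\pi_i\pi_{i'})=0$ to insert the centring constant, and part 3 from the Sen--Yates--Grundy identity plus the observation that centring by $\hat F_h(x)$ is vacuous when the $\pi_i$ are equal. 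What your route buys is exactness: the general H\'{a}jek machinery the paper points to rests on ratio/Taylor-linearization approximations, whereas your reduction shows that under constant $\pi_i$ every statement of the theorem holds exactly, with proofs short enough to include. Two small caveats: (i) your aside in part 1 that unbiasedness \emph{does not even require $\pi_i$ to be constant} is misleading as phrased --- the reduction to the Horvitz--Thompson form does require the denominator to be nonrandom, and for unequal $\pi_i$ the H\'{a}jek estimator is only approximately unbiased, as the paper itself concedes for the level-1 design in Theorem 3.2; (ii) you silently read the prefactor in (5) as $-\tfrac12\bigl(\sum_{i\in\pmb{D}}1/\pi_i\bigr)^{-2}=-\tfrac12 N^{-2}$, which is indeed the correct Sen--Yates--Grundy/H\'{a}jek form; the paper's displayed expression, taken literally, inverts that factor and appears to be a typographical error, so it would be worth stating the correction explicitly rather than applying it tacitly.
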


\begin{proof}
	\begin{enumerate}
		\item[1.] Recall that $\pi _{i}= \frac{n}{N}$ for SRS without replacement,
		\begin{equation}
			\begin{split}
				E\left[ \hat{F}_{h}(x)\right] =&\sum\limits_{i\in D}\frac{\frac{E\left[
						I\left( X_{i}\leq x\right) \right] }{n/N}}{\sum\limits_{\imath\in D}\frac{1}{n/N}%
				} \\ 
				=&\frac{1}{N}\sum\limits_{i\in D}\frac{N}{n}F(x) \\
				=& F(x)
			\end{split}
		\end{equation}
		Item $2$ and $3$ can be found in S{\"a}rndal et al. \cite{sarndal2003model} (formula (5.11.7) p.202) and (formula (5.11.9) p.203), respectively.
	\end{enumerate}
\end{proof}

Without loss of generality, it is supposed that $X_{1}\leq \cdots \leq X_{N}$. Here, ranking the population units can be performed by using an available auxiliary variable such as outcomes of previous experiment or another study variable which is correlated with interested variable $X$. Measurements of the auxiliary variable is deemed to require no additional cost or be easier and/or much cheaper than the interested variable. On the other hand, the ranking process can be done via visual inspection of the units if an auxiliary variable is not available. In both cases, we only need consistent ranking scheme to determine ranks of the population units. In other words, there must be a stable ranking of the units so that such ranks are well defined. Considering that the obtained ranked set sample based on level-t sampling design is denoted by $\pmb{D}_{t}$ which includes $n=mk$ units, the proposed estimators are expressed as follows:
\begin{equation}
	\hat{F}_{L-t}^{\ast }(x)=\sum\limits_{i\in \pmb{D}_{t}}\frac{\frac{
			I\left( X_{i}\leq x\right) }{\pi^{(t)}_{i}}}{\sum\limits_{\imath\in \pmb{D}_{t}}\frac{1}{%
			\pi^{(t)}_{\imath}}}
\end{equation}
where $1\leq i\leq N$ and $t=0,1,2$. The properties of the proposed estimators are as follows:
\begin{theorem}
	Let $\pmb{D}_{t}$ is a ranked set sample and is obtained by using level-t sampling design for $t=0,1,2$. Under a consistent ranking scheme, 
	\begin{enumerate}
		\item[1.] $\hat{F}_{L-0}^{\ast }(x)$ and $\hat{F}_{L-2}^{\ast }(x)$ are unbiased but $\hat{F}_{L-1}^{\ast }(x)$ is approximately unbiased for $F(x)$.
		\item[2.] 
		\begin{equation}
			\begin{split}
				V\left[ \hat{F}_{L-t}^{\ast }(x)\right] =N^{-2}\sum\limits_{i=1}^{N}\sum%
				\limits_{i^{\prime }=1}^{N} & \left( \pi^{(t)} _{ii^{\prime }}-\pi^{(t)} _{i}\pi^{(t)}
				_{i^{\prime }}\right) \\ 
				&\hspace{-1.5cm}\times \left( \frac{I\left( X_{i}\leq x\right) -F\left(
					x\right) }{\pi^{(t)} _{i}}\right) \left( \frac{I\left( X_{i^{\prime }}\leq x\right) -F\left(
					x\right) }{\pi^{(t)} _{i^{\prime }}}\right)
			\end{split}
		\end{equation}
		where $1\leq i,i^{\prime}\leq N$.  
		\item[3.] 
		\begin{equation}
			\begin{split}
				\hat{V}\left[ \hat{F}_{L-t}^{\ast }(x)\right] =-\frac{1}{2\left( \sum\limits_{i\in \pmb{D}_{t}}\frac{1}{\pi^{(t)} _{i}}\right)^{-2}}%
				\sum\limits_{i\in \pmb{D}_{t}}\sum\limits_{i^{\prime }\in \pmb{D}_{t}} & \frac{\left( \pi^{(t)}
					_{ii^{\prime }}-\pi^{(t)} _{i}\pi^{(t)} _{i^{\prime }}\right) }{\pi^{(t)} _{ii^{\prime }}} \\ 
				&\hspace{-4cm}\times
				\left( \frac{I\left( X_{i}\leq x\right) -\hat{F}_{L-t}^{\ast }(x)}{\pi^{(t)} _{i}}-\frac{%
					I\left( X_{i^{\prime }}\leq x\right) -\hat{F}_{L-t}^{\ast }(x)}{\pi^{(t)} _{i^{\prime }}}\right)^{2}
			\end{split}
		\end{equation}
		\item[4.] $V\left[ \hat{F}_{L-t}^{\ast }(x)\right]\leq V\left[ \hat{F}_{h}(x)\right]$ for both perfect and imperfect ranking.
	\end{enumerate} 
\end{theorem}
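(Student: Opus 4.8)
The plan is to compare directly the two variance expressions appearing in item~2 of Theorem~3.1 and item~2 of Theorem~3.2, exploiting the special structure of the inclusion probabilities under each design. First I would record that under SRS without replacement all first-order inclusion probabilities coincide, $\pi_{i}=n/N$, so the H\'{a}jek ratio in Eq.~(3) collapses to the unweighted sample proportion $\hat{F}_{h}(x)=n^{-1}\sum_{i\in\pmb{D}}I(X_{i}\leq x)$. Substituting $\pi_{i}=n/N$ and $\pi_{ii^{\prime}}=n(n-1)/(N(N-1))$ for $i\neq i^{\prime}$ into Eq.~(4) then yields the familiar finite-population form
\begin{equation*}
	V\left[\hat{F}_{h}(x)\right]=\frac{N-n}{N-1}\cdot\frac{F(x)\left(1-F(x)\right)}{n},
\end{equation*}
which provides a clean benchmark against which the level-$t$ variance of Eq.~(8) must be compared. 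Next I would decompose that level-$t$ variance by judgment rank: since the ranked-set sample draws, in each cycle $j$ and for each rank $r$, a unit whose (possibly imperfectly assigned) rank is $r$ with marginal distribution $F_{[r]}(x)$, the diagonal part of the double sum behaves like $\tfrac{1}{mk^{2}}\sum_{r=1}^{k}F_{[r]}(x)(1-F_{[r]}(x))$ up to finite-population corrections, while the off-diagonal part collects the covariances induced by the without-replacement policy that distinguishes level-$0$, level-$1$ and level-$2$.

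The engine of the inequality is the fundamental identity $\tfrac{1}{k}\sum_{r=1}^{k}F_{[r]}(x)=F(x)$, which holds for \emph{any} consistent ranking scheme and is precisely why the conclusion survives imperfect ranking. Combined with Jensen's inequality $\tfrac{1}{k}\sum_{r=1}^{k}F_{[r]}(x)^{2}\geq\left(\tfrac{1}{k}\sum_{r=1}^{k}F_{[r]}(x)\right)^{2}=F(x)^{2}$, it gives
\begin{equation*}
	\frac{1}{k}\sum_{r=1}^{k}F_{[r]}(x)\left(1-F_{[r]}(x)\right)\leq F(x)\left(1-F(x)\right),
\end{equation*}
so that the within-rank part of $V[\hat{F}_{L-t}^{\ast}(x)]$ never exceeds the corresponding SRS quantity. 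For $t=2$ the equal-probability structure $\pi_{i}^{(2)}=n/N$ lets the H\'{a}jek ratio reduce to the RSS empirical distribution function, and this step essentially closes the argument, recovering the classical comparison of the RSS-EDF against the SRS-EDF.

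The hard part will be controlling the cross-rank and finite-population correction terms for the level-$0$ and level-$1$ designs, where the $\pi_{i}^{(t)}$ are unequal, the ratio in Eq.~(7) does not collapse to a simple mean, and the replacement policy makes $\pi_{ii^{\prime}}^{(t)}$ genuinely design-dependent. I would write the difference $V[\hat{F}_{h}(x)]-V[\hat{F}_{L-t}^{\ast}(x)]$ as a single sum, group the same-rank and cross-rank contributions separately, and argue that the negative cross-rank covariances together with the finite-population factor cannot overturn the strictly favourable within-rank comparison established above. Verifying this non-negativity uniformly in $t=0,1,2$, and under both perfect and imperfect ranking, is the step that requires the most care.
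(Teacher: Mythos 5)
Your proposal addresses only item~4 of the theorem, and even there it has a gap at exactly the point where the real work lies. First, items~1--3 are ignored entirely: the unbiasedness claim (item~1) is not a triviality here, since it rests on the observation that $\sum_{\imath\in \pmb{D}_{t}}1/\pi_{\imath}^{(t)}=N$ holds exactly for the level-0 and level-2 designs but only approximately for level-1 (which is precisely why $\hat{F}_{L-1}^{\ast}$ is only approximately unbiased), while items~2--3 follow from the Horvitz--Thompson/H\'{a}jek theory in S{\"a}rndal et al.; none of this appears in your plan.

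Second, for item~4 your central tool --- the identity $\tfrac{1}{k}\sum_{r}F_{[r]}(x)=F(x)$ combined with Jensen's inequality --- only yields
\begin{equation*}
\frac{1}{mk^{2}}\sum_{r=1}^{k}F_{[r]}(x)\left(1-F_{[r]}(x)\right)\leq \frac{F(x)\left(1-F(x)\right)}{n},
\end{equation*}
i.e.\ a comparison against $\sigma^{2}/n$ \emph{without} the finite-population correction. But the benchmark you yourself computed is $V[\hat{F}_{h}(x)]=\tfrac{N-n}{N-1}\,\sigma^{2}/n$, which is strictly smaller (for $N=20$, $n=12$ the factor is about $0.42$). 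So the within-rank comparison alone cannot establish the inequality: the negative cross-set covariances are not a nuisance term to be ``controlled,'' they are the essential ingredient that must overcome the fpc, and your proposal explicitly defers this step (``the step that requires the most care''). Relatedly, your claim that the $t=2$ case ``essentially closes the argument'' by reducing to the classical RSS-vs-SRS comparison is wrong for the same reason: that classical result is an infinite-population, with-replacement statement, and equal first-order inclusion probabilities do not neutralize the second-order dependence structure of the level-2 design. The paper's proof is organized precisely to avoid this trap: it expresses $(mk)^{2}V[\hat{F}_{L-t}^{\ast}(x)]$ via Patil et al.'s covariance representation $C(r,s)=\pmb{\Psi}^{T}(\pmb{\Delta}_{rs}-\pmb{\nabla}_{r}\pmb{\nabla}_{s}^{T})\pmb{\Psi}$, then decomposes $V[\hat{F}_{h}(x)]$ itself by partitioning the simple random sample into $m$ ranked subsamples of size $k$, so that the within-rank variances $\sum_{r}\sigma_{(r)}^{2}$ cancel exactly in the difference $V[\hat{F}_{h}]-V[\hat{F}_{L-t}^{\ast}]$, leaving only covariance terms whose signs are known from Takahasi and Futatsuya (nonnegative within a set, nonpositive across sets). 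To repair your argument you would either need to reproduce that cancellation or prove a quantitative lower bound on the cross-set negative covariances, which is exactly what is missing.
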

\begin{proof}
	For the proposed estimators based on RSS designs to be unbiased, the sum of $\pi _{\imath}^{(t)}$s must be $N$. However, the first order inclusion probabilities differ depending on which unit is sampled especially for level-1 sampling design. Therefore, it is difficult to show the unbiasedness for the EDF based on level-1 sampling design. For level-0 and level-2 sampling designs $N$, $\sum\limits_{\imath\in D_{t}}1/\pi _{\imath}^{(t)}= N$,
	\begin{equation*}
		\begin{split}
			E\left[ \hat{F}_{L-t}^{\ast }(x)\right] =&\sum\limits_{i\in D_{t}}\frac{%
				\frac{E\left( I\left( X_{i}\leq x\right) \right) }{\pi _{i}^{(t)}}}{%
				N} \\ 
			=&\frac{F(x)}{N}\sum\limits_{i\in D_{t}}\frac{1}{\pi _{i}^{(t)}} \\
			=& F(x)
		\end{split}
	\end{equation*}
	where $t=0$ and $2$. On the other hand, we can say that the EDF based on level-1 sampling design is not unbiased but only approximately unbiased since $\sum\limits_{\imath\in D_{1}}1/\pi _{\imath}^{(1)}\approx N$ when ranking is performed consistently. The Item 2 and Item 3 hold by the usual theory in Section 5.11 of S{\"a}rndal et al. \cite{sarndal2003model}. A proof is provided for the Item 4 in appendix.
\end{proof}

To compute the Eqs. (7), (8) and (9), the first and the second order inclusion probabilities ($\pi^{(t)}_{i}$, $\pi^{(t)} _{ii^{\prime }}$) are calculated for $1\leq i,i^{\prime}\leq N$. Let $I\left( i;r,k,N\right)$ be the probability that the $ith$ ranked unit in a population of size $N$ has rank $r$ in a set of size $k$. 
\begin{equation*}
	I\left( i;r,k,N\right) =\frac{\binom{i-1}{r-1}\binom{N-i}{k-r}}{\binom{N}{k}%
	}
\end{equation*}
Jafari Jozani and Johnson \cite{jafari2011design} used the following lemma to estimate the inclusion probabilities for level-0 sampling design.
\begin{lemma}
	When the ranked set sample is obtained by using level-0 sampling design, the first order inclusion probability is
	\begin{equation*}
		\pi _{i}^{(0)}=1-\prod\limits_{h=1}^{n}\left( 1-I\left( i;r_{h},k,N\right)
		\right)
	\end{equation*}
	and the second order inclusion probability is
	\begin{equation*}
		\begin{split}
			\pi _{ii^{\prime }}^{(0)} =&1-\prod\limits_{h=1}^{n}\left( 1-I\left(
			i;r_{h},k,N\right) \right) -\prod\limits_{h=1}^{n}\left( 1-I\left(
			i^{\prime };r_{h},k,N\right) \right) \\ 
			& +\prod\limits_{h=1}^{n}\left(
			1-I\left( i;r_{h},k,N\right) -I\left( i^{\prime };r_{h},k,N\right) \right)
		\end{split}
	\end{equation*}
	where $r_{h}$ is the rank of the measured unit in the $h$th set. 
\end{lemma}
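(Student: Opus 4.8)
The plan is to exploit the defining feature of the level-0 design: because every unit drawn in a set is returned to the population before the next set is selected (step (3) of the level-0 procedure), the $n=mk$ sets are each drawn independently and identically from the full population of size $N$. Consequently, whether a given population unit is the measured unit in the $h$th set is determined by that set alone, and these events are mutually independent across the $n$ sets. This independence is the engine that converts both inclusion probabilities into products.

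For the first order probability I would first identify, for the $h$th set, the event that unit $i$ is the measured one. Since the measured unit in the $h$th set is the one receiving rank $r_{h}$, unit $i$ is measured in set $h$ precisely when it is selected into that set and is assigned rank $r_{h}$; by the definition of $I(i;r,k,N)$ this event has probability $I(i;r_{h},k,N)$. Unit $i$ appears in the final sample if and only if it is measured in at least one of the $n$ sets, so the complementary event is that it is measured in none of them. By independence of the sets, the probability of this complement factorizes as $\prod_{h=1}^{n}\bigl(1-I(i;r_{h},k,N)\bigr)$, and taking one minus this quantity yields the stated expression for $\pi_{i}^{(0)}$.

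For the second order probability I would let $A_{i}$ and $A_{i^{\prime}}$ denote the events that units $i$ and $i^{\prime}$ respectively appear in the final sample, and apply inclusion–exclusion in the form $P(A_{i}\cap A_{i^{\prime}})=P(A_{i})+P(A_{i^{\prime}})-P(A_{i}\cup A_{i^{\prime}})$. The first two terms are exactly the first order probabilities already computed. For the union I would pass to its complement, the event that neither $i$ nor $i^{\prime}$ is ever measured. The crucial observation — and the step I expect to require the most care — is that within a single set the events ``$i$ is measured'' and ``$i^{\prime}$ is measured'' are mutually exclusive, since at most one unit can occupy rank $r_{h}$ in a given set of size $k$; hence the probability that at least one of them is measured in set $h$ is simply $I(i;r_{h},k,N)+I(i^{\prime};r_{h},k,N)$, and the probability that neither is measured in set $h$ is $1-I(i;r_{h},k,N)-I(i^{\prime};r_{h},k,N)$. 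Invoking independence across the $n$ sets then gives $P(A_{i}\cup A_{i^{\prime}})=1-\prod_{h=1}^{n}\bigl(1-I(i;r_{h},k,N)-I(i^{\prime};r_{h},k,N)\bigr)$. Substituting the three pieces into the inclusion–exclusion identity and collecting terms reproduces exactly the four-term formula for $\pi_{ii^{\prime}}^{(0)}$, completing the proof.
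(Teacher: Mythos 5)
Your proof is correct. The paper itself gives no proof of this lemma---it simply quotes the result from Jafari Jozani and Johnson---but your argument (independence of the $n$ sets under the level-0 full-replacement policy, complementation for $\pi_{i}^{(0)}$, and inclusion--exclusion combined with the within-set mutual exclusivity of ``$i$ is measured'' and ``$i^{\prime}$ is measured'' for $\pi_{ii^{\prime}}^{(0)}$) is exactly the standard derivation of these formulas, and each step, including the identification of $I(i;r_{h},k,N)$ as the probability that unit $i$ is the measured unit of the $h$th set under a consistent ranking, is sound.
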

\hspace{-0.35cm}Estimation of the first and second order inclusion probabilities for level-1 are more complex than estimation of their counterparts for level-0 and level-2 sampling designs. Frey \cite{frey2011recursive} proposed recursive algorithm to compute the first and second order inclusion probabilities. This algorithm basically calculates conditional probabilities for each unit in the population. Then, the first and second order inclusion probabilities are computed using the conditional probabilities. For the detailed mathematical backgrounds of the recursive algorithm, see Frey \cite{frey2011recursive}. Also, R codes for the first and second order inclusion probabilities are available on the Frey's web site at \href{\myURL}{\nolinkurl{\myURL}}. On the other hand, the inclusion probabilities for level-2 sampling design can be obtained by using the following lemma. The lemma includes the results in Theorem 3.1 and Theorem 3.2 of Patil et al. \cite{patil1995finite}.
\begin{lemma}
	When the ranked set sample is obtained by using level-2 sampling design, the first order inclusion probability is
	\begin{equation*}
		\pi _{i}^{(2)}=\sum\limits_{h=1}^{n}\left( 1-I\left( i;r_{h},k,N\right)
		\right)
	\end{equation*}
	and, for $i<i^{\prime }$, the second order inclusion probability is
	\begin{equation*}
		\pi _{ii^{\prime }}^{(2)}=\sum\limits_{h=1}^{n}\sum\limits_{h\neq
			h^{\prime }}\sum\limits_{\lambda =0}^{h-h^{\prime }-1}\frac{\binom{i-1}{%
				r_{h}-1}\binom{i-i^{\prime }-1}{\lambda }\binom{N-i^{\prime }}{k-\lambda
				-r_{h}}\binom{i^{\prime }-1-r_{h}-\lambda }{r_{h^{\prime }}-1}\binom{%
				N-i^{\prime }-k+\lambda +r_{h}}{k-r_{h^{\prime }}}}{\binom{N}{k}\binom{N-k}{k%
		}}
	\end{equation*}
	where $r_{h}$ and $r_{h^{\prime }}$ are the ranks of the measured units in the $h$th and $h^{\prime }$th sets, respectively. 
\end{lemma}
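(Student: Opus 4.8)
The plan is to view the level-2 design as a single simple random sample of $nk=mk^2$ units drawn without replacement from $\{1,\dots,N\}$, partitioned into $n$ ordered sets of size $k$, with the rank-$r_h$ unit being the one measured in set $h$ (ranks referring to the fixed consistent ordering). Everything then follows from one symmetry fact, which I would isolate as the first step: although the $n$ sets are jointly dependent through disjointness, the draws are exchangeable, so the positions forming any single set constitute a uniformly random $k$-subset of the population. Hence set $h$ is marginally a simple random sample of size $k$, and the probability that unit $i$ occupies rank $r_h$ in it is exactly $I(i;r_h,k,N)$. This is precisely what lets the level-2 probabilities be written purely in terms of the single-set quantity $I(\cdot\,;\cdot,k,N)$.

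For the first-order probability, the key observation is that no unit is ever replaced, so unit $i$ can be the measured unit of at most one set; the events ``$i$ has rank $r_h$ in set $h$'' are therefore mutually exclusive across $h=1,\dots,n$. Summing their probabilities gives $\pi_i^{(2)}=\sum_{h=1}^{n} I(i;r_h,k,N)$; this contrasts with the level-0 lemma above, where the sets are drawn independently so that the inclusion probability instead takes the complementary-product form. The result matches Theorem 3.1 of Patil et al.\ \cite{patil1995finite}. As a consistency check, since exactly $n$ units are measured and $\sum_{i=1}^{N} I(i;r,k,N)=1$ for each rank, one gets $\sum_{i=1}^{N}\pi_i^{(2)}=n$, confirming the normalization.

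For the second-order probability with $i<i'$, both units are measured exactly when $i$ is the measured unit of some set $h$ and $i'$ is the measured unit of a different set $h'$; I would sum the joint probability over all ordered pairs $(h,h')$ with $h\neq h'$. Invoking exchangeability again, the pair $(S_h,S_{h'})$ is distributed as two disjoint simple random samples of size $k$, i.e.\ a sample of size $2k$ split into two labelled halves, so the denominator $\binom{N}{k}\binom{N-k}{k}$ counts the admissible ordered disjoint pairs. To count favourable configurations I would partition the population by rank relative to $i$ and $i'$ into the $i-1$ units below $i$, the $i'-i-1$ units strictly between, and the $N-i'$ units above $i'$. Fixing $i$ at rank $r_h$ in $S_h$ forces $r_h-1$ of its members from the bottom block and $k-r_h$ above $i$; introducing $\lambda$ for how many of these come from the middle block records the one genuine degree of freedom, which is exactly why a third summation index appears.

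The main obstacle is the bookkeeping for $S_{h'}$, whose pool has been depleted by $S_h$. Once $\lambda$ is fixed, $i'-1-r_h-\lambda$ units below $i'$ and $N-i'-k+r_h+\lambda$ units above $i'$ remain available, producing the factors $\binom{i'-1-r_h-\lambda}{r_{h'}-1}$ and $\binom{N-i'-k+r_h+\lambda}{k-r_{h'}}$; multiplying these against the three choices made for $S_h$ and dividing by $\binom{N}{k}\binom{N-k}{k}$ yields a single joint term. Summing over the admissible $\lambda$ and over $h\neq h'$ reproduces the stated expression, which is Theorem 3.2 of Patil et al.\ \cite{patil1995finite}. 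I would finish by noting that the binomial coefficients vanish outside their natural ranges, so they enforce the correct limits on $\lambda$ automatically and no separate boundary cases are needed.
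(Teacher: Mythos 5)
Your argument is correct, but note that the paper contains no proof of this lemma at all: the authors simply remark that it ``includes the results in Theorem 3.1 and Theorem 3.2 of Patil et al.'' and leave the derivation to that reference. Your write-up therefore supplies what the paper omits, and it is the natural argument behind those theorems: exchangeability of the underlying without-replacement draws, so that each set is marginally a simple random sample of size $k$ and each ordered pair of distinct sets is marginally a uniform ordered pair of disjoint $k$-subsets (whence the denominator $\binom{N}{k}\binom{N-k}{k}$); mutual exclusivity of the events ``unit $i$ is measured in set $h$'' across $h$, which turns the first-order probability into a plain sum rather than the complementary product of the level-0 case; and the three-block count (below $i$, strictly between $i$ and $i'$, above $i'$) with the single free index $\lambda$ for the second-order probability. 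The bookkeeping for the depleted pool of $S_{h'}$, giving $\binom{i'-1-r_h-\lambda}{r_{h'}-1}$ and $\binom{N-i'-k+r_h+\lambda}{k-r_{h'}}$, is exactly right.

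One thing you must make explicit, however: what you proved is not literally the statement as printed, because the printed lemma contains typographical errors that your derivation silently corrects. First, the displayed first-order formula $\pi_i^{(2)}=\sum_{h=1}^{n}\left(1-I\left(i;r_h,k,N\right)\right)$ cannot be a probability: it equals $n$ minus your (correct) expression, hence is at least $n-1$; the ``$1-$'' appears to have been carried over by accident from the product form in the level-0 lemma. Your formula $\pi_i^{(2)}=\sum_{h=1}^{n}I\left(i;r_h,k,N\right)$, together with the check $\sum_{i=1}^{N}\pi_i^{(2)}=n$, is the intended result. Second, in the second-order formula the factor $\binom{i-i'-1}{\lambda}$ must read $\binom{i'-i-1}{\lambda}$ (with $i<i'$ the printed top index is negative), and the upper summation limit $h-h'-1$ is meaningless as a bound on $\lambda$; it should be $i'-i-1$, or the sum may be left unrestricted since, as you observe, the binomial coefficients vanish outside the admissible range. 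Reconciling your formulas with the printed ones in this way is necessary for the proof to count as a proof of the stated lemma rather than of a corrected variant of it.
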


\section{Large sample properties}

In this section, we investigate the asymptotic properties of the proposed estimators under the case when the ranking is perfect. In the next section, we give the experimental results under the case when the ranking is imperfect. 

First, we recall the concepts of consistency and asymptotic normality from the general theory of statistical inference. Let $X_{1},\cdots,X_{n}$ be a simple random sample that is selected from a population of size $N$. Assumed that $\hat{\theta}_{n}$ is an unbiased estimator of $\theta$ based on SRS. When the $N$ is large enough, the asymptotic properties of the estimator $\hat{\theta}_{n}$ are discussed as $n\rightarrow \infty$. According to central limit theorem, $\hat{\theta}_{n}$ is said to have an asymptotic normal distribution with mean $\theta$ and asymptotic variance $V\left[ \hat{\theta}_{n}\right]=\sigma^{2}/n$,
\begin{equation*}
	\frac{\hat{\theta}_{n}-\theta }{\sigma /\sqrt{n}}\sim N\left(
	0,1\right)
\end{equation*}
as $n\rightarrow \infty$. Also, the estimator $\hat{\theta}_{n}$ is said to be consistent estimator of $\theta$ if for every $\epsilon>0$,
\begin{equation*}
	\underset{n\rightarrow \infty }{\lim} \Pr \left\{ \left\vert \hat{\theta}%
	_{n}-\theta \right\vert <\epsilon \right\} =1.
\end{equation*}

Now, let us consider that $N$ is small enough. In this case, it is supposed that both $n$ and $N$ tend to infinity for showing asymptotic results. Therefore, we first define a sequence of populations $U_{1},U_{2},U_{3},\cdots$ where $U_{v}$ includes $N_{v}$ units, $v=1,2,3,\cdots$. Consider that $U_{1}\subset U_{2}\subset U_{3}\cdots$ and so $N_{1}\leq N_{2}\leq N_{3}\leq \cdots$. Also, the distribution function of each population is denoted by $F_{v}(x)$,
\begin{equation*}
	F_{v}\left( x\right) =\frac{1}{N_{v}}\sum\limits_{i=1}^{N_{v}}I\left( X_{i}\leq
	x\right).
\end{equation*}
From each population, a ranked set sample of size $n_{v}=m_{v}k$ is selected by using level-t sampling design where $t=0,1,2$. Note that sample of sizes $n_{1}\leq n_{2}\leq n_{3}\leq \cdots$ increase since the numbers of cycles $m_{1}\leq m_{2}\leq m_{3}\leq \cdots$ are increased and the set size $k$ is fixed. We note that the set sizes $k_{1}\leq k_{2}\leq k_{3}\leq \cdots$ can also be increased to increase the size of the sample. Each ranked set sample is notated by $\pmb{D}_{t_{v}}$, $v=1,2,3,\cdots$ and $t=0,1,2$. Also, the inclusion probabilities, $\pi^{(t)}_{vi}$ and $\pi^{(t)}_{vii^{\prime}}$, are determined by the level-t sampling design where $1\leq i,i^{\prime}\leq N_{v}$. Thus, a sequence of estimators is defined as follows:
\begin{equation*}
	\hat{F}_{L-t_{v}}^{\ast }(x)=\sum\limits_{i\in \pmb{D}_{t_{v}}}\frac{\frac{
			I\left( X_{i}\leq x\right) }{\pi^{(t)}_{vi}}}{\sum\limits_{\imath\in \pmb{D}_{t_{v}}}\frac{1}{%
			\pi^{(t)}_{v\imath}}}
\end{equation*}
where $v=1,2,3,\cdots$ and $t=0,1,2$. The sequence of the variances of the estimators is  
\begin{equation*}
	\begin{split}
		V\left[ \hat{F}_{L-t_{v}}^{\ast }(x)\right] =N_{v}^{-2}\sum\limits_{i=1}^{N_{v}}\sum%
		\limits_{i^{\prime }=1}^{N_{v}} & \left( \pi^{(t)} _{vii^{\prime }}-\pi^{(t)} _{vi}\pi^{(t)}
		_{vi^{\prime }}\right) \\ 
		&\hspace{-1.5cm}\times \left( \frac{I\left( X_{i}\leq x\right) -F\left(
			x\right) }{\pi^{(t)} _{vi}}\right) \left( \frac{I\left( X_{j}\leq x\right) -F\left(
			x\right) }{\pi^{(t)} _{vi^{\prime }}}\right)
	\end{split}
\end{equation*}
where $1\leq i,i^{\prime}\leq N_{v}$.
Under the conditions, the Definition 4.1 determines that the proposed estimator $\hat{F}_{L-t}^{\ast }(x)$ is consistent estimator of $F(x)$.
\begin{definition}
	Let us define a sequence of estimators $\hat{F}_{L-t_{v}}^{\ast }(x)$ that are obtained from the sample $\pmb{D}_{t_{v}}$ where $t=0,1,2$ and $v=1,2,3\cdots$. Also, we know that $\pmb{D}_{t_{v}}$ is selected from the population $U_{v}$ of size $N_{v}$. Then, the estimator $\hat{F}_{L-t}^{\ast }(x)$ is said to be consistent estimator of $F(x)$ if for every $\epsilon>0$,
	\begin{equation*}
		\underset{v\rightarrow \infty }{\lim} \Pr \left\{ \left\vert \hat{F}_{L-t_{v}}^{\ast }(x)-F_{v}(x) \right\vert <\epsilon \right\} =1.
	\end{equation*}
\end{definition}

Some regularity conditions are given by Theorem 4.2 for Corollary 4.3. These conditions are results of the usual theory which is given by S{\"a}rndal et al. \cite{sarndal2003model} on p. 56.
\begin{theorem}
	While $v\rightarrow \infty$,
	\begin{enumerate}
		\item[1.] 
		\begin{equation}
			\frac{\hat{F}_{L-t_{v}}^{\ast }(x)-F_{v}(x)}{\left( V\left[ \hat{F}%
				_{L-t_{v}}^{\ast }(x)\right] \right) ^{1/2}}\rightarrow N(0,1).
		\end{equation}
		\item[2.] There exists a consistent variance estimator $\hat{V}\left[ \hat{F}_{L-t}^{\ast }(x)\right]$ for $V\left[ \hat{F}_{L-t}^{\ast }(x)\right]$.
	\end{enumerate}
\end{theorem}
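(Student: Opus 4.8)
The plan is to treat the H\'{a}jek-type estimator $\hat{F}_{L-t_v}^{\ast}(x)$ in Eq. (7) as a ratio of two Horvitz--Thompson (HT) statistics and to reduce both claims to the standard finite-population asymptotic theory cited on p.~56 and in Section 5.11 of S{\"a}rndal et al.\ \cite{sarndal2003model}. Writing $\hat{T}_{Y,v}=\sum_{i\in\pmb{D}_{t_v}}I(X_i\le x)/\pi^{(t)}_{vi}$ for the HT estimator of the population total $N_vF_v(x)$ and $\hat{T}_{1,v}=\sum_{i\in\pmb{D}_{t_v}}1/\pi^{(t)}_{vi}$ for the HT estimator of $N_v$, one has $\hat{F}_{L-t_v}^{\ast}(x)=\hat{T}_{Y,v}/\hat{T}_{1,v}$. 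I would first establish the asymptotic normality of the numerator via the finite-population central limit theorem, then transfer it to the ratio by a Taylor (delta-method) linearization, and finally invoke design-consistency of the Sen--Yates--Grundy--type variance estimator.

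For Item~1, I would impose the regularity conditions of \cite{sarndal2003model} along the nested sequence $U_1\subset U_2\subset\cdots$: the first-order inclusion probabilities $\pi^{(t)}_{vi}$ are bounded away from $0$ and from $1$, the second-order inclusion probabilities satisfy $\limsup_v n_v\max_{i\neq i^{\prime}}|\pi^{(t)}_{vii^{\prime}}-\pi^{(t)}_{vi}\pi^{(t)}_{vi^{\prime}}|<\infty$, and a H\'{a}jek--Lindeberg condition holds. Because the study variable here is the bounded indicator $I(X_i\le x)\in\{0,1\}$, every moment requirement in the Lindeberg condition is automatically met, so the finite-population CLT yields $(\hat{T}_{Y,v}-N_vF_v(x))/(V[\hat{T}_{Y,v}])^{1/2}\to N(0,1)$ as $v\to\infty$.

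The central step, and the place where I expect the real work, is the linearization of the ratio. Expanding $\hat{F}_{L-t_v}^{\ast}(x)=\hat{T}_{Y,v}/\hat{T}_{1,v}$ about the point $(N_vF_v(x),N_v)$ gives the approximation $\hat{F}_{L-t_v}^{\ast}(x)-F_v(x)\approx N_v^{-1}\sum_{i\in\pmb{D}_{t_v}}(I(X_i\le x)-F_v(x))/\pi^{(t)}_{vi}$, whose variance is exactly Eq. (8). For the level-$0$ and level-$2$ designs this step is clean, since the proof of Theorem~3.2 shows $\hat{T}_{1,v}=\sum_{i\in\pmb{D}_{t_v}}1/\pi^{(t)}_{vi}=N_v$ exactly, so the denominator is deterministic and the remainder vanishes identically. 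For level-$1$ the denominator is only approximately $N_v$, so I would have to control the additional stochastic remainder and show it is $o_p((V[\hat{F}_{L-1_v}^{\ast}(x)])^{1/2})$, using $\hat{T}_{1,v}/N_v\to 1$ in probability under a consistent ranking scheme; this is the main obstacle, since it requires a quantitative bound on the discrepancy $\sum_i 1/\pi^{(1)}_{vi}-N_v$ as $v\to\infty$. Combining the CLT for the linear part with Slutsky's theorem then yields Eq. (10).

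For Item~2, I would take $\hat{V}[\hat{F}_{L-t}^{\ast}(x)]$ to be the estimator in Eq. (9), which is the Sen--Yates--Grundy variance estimator applied to the linearized residuals $e_i=(I(X_i\le x)-\hat{F}_{L-t}^{\ast}(x))/\pi^{(t)}_i$. The convergence $\hat{V}[\hat{F}_{L-t_v}^{\ast}(x)]/V[\hat{F}_{L-t_v}^{\ast}(x)]\to 1$ in probability then follows from the design-consistency results cited on p.~56 of \cite{sarndal2003model}, provided the second-order inclusion probabilities remain bounded away from $0$ so that each summand is well defined and a design-based weak law of large numbers applies. As in Item~1, the boundedness of the indicators removes any moment condition, so consistency of the variance estimator reduces to the same inclusion-probability regularity already assumed above.
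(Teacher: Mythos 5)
The first thing to note is that the paper does not actually prove this theorem: Theorem 4.2 is introduced as a set of regularity conditions that ``are results of the usual theory'' of S{\"a}rndal et al.\ (p.~56 and Section 5.11), and the only argument supplied in Section 4 is the proof of Corollary 4.3, which takes Items 1 and 2 as hypotheses. Your proposal therefore supplies more than the paper does, and it runs along exactly the lines the paper's citation points to: the H\'ajek estimator as a ratio of Horvitz--Thompson statistics, a finite-population CLT for the numerator, linearization of the ratio, and design-consistency of the Sen--Yates--Grundy-type estimator of Eq.~(9). Your use of the identity $\sum_{i\in\pmb{D}_{t_v}}1/\pi^{(t)}_{vi}=N_v$ for $t=0,2$ (taken from the proof of Theorem 3.2) to make the denominator deterministic is faithful to the paper's own reasoning.

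Two remarks on gaps. First, your self-declared ``main obstacle'' for level-1 is less severe than you think, because no additive remainder bound or rate is actually required: writing $\hat{T}_{1,v}=\sum_{i\in\pmb{D}_{1_v}}1/\pi^{(1)}_{vi}$, one has the exact identity $\hat{F}^{\ast}_{L-1_v}(x)-F_v(x)=\bigl(N_v/\hat{T}_{1,v}\bigr)\cdot N_v^{-1}\sum_{i\in\pmb{D}_{1_v}}\bigl(I(X_i\le x)-F_v(x)\bigr)/\pi^{(1)}_{vi}$, and the variance in Eq.~(8) is precisely the design variance of the second (linear) factor. Hence asymptotic normality for $t=1$ follows from a CLT for that linear factor together with the weak law $\hat{T}_{1,v}/N_v\rightarrow 1$ in probability and Slutsky's theorem; no quantitative bound on the discrepancy $\hat{T}_{1,v}-N_v$ is needed. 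Second, and this is the genuine gap your sketch shares with the paper: the finite-population CLT and the variance-estimator consistency you invoke are theorems about sequences of designs satisfying explicit conditions on $\pi^{(t)}_{vi}$, $\pi^{(t)}_{vii^{\prime}}$ and the dependence structure, and neither you nor the paper verifies these conditions for the level-0, level-1 and level-2 designs --- you ``impose'' them. For level-1 in particular, where even computing $\pi^{(1)}_{vii^{\prime}}$ requires Frey's recursive algorithm, checking such conditions (and the weak law above) is real work, not a formality. As a reconstruction of what the paper actually asserts --- normality and consistent variance estimation under imported regularity conditions --- your proposal is sound; as a self-contained proof it is incomplete at exactly this point.
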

\hspace{-0.35cm}The Corollary 4.3 follows the conditions in Theorem 4.2.
\begin{corollary}
	Let $\hat{F}_{L-t}^{\ast }(x)$ be a point in the sequence of estimators. Then, an approximate $100(1-\alpha)\%$ confidence interval for $F(x)$ can be defined as follows:
	\begin{equation}
		\hat{F}_{L-t}^{\ast }(x)\pm z_{\alpha/2}\left( \hat{V}\left[ \hat{F}%
		_{L-t}^{\ast }(x)\right] \right) ^{1/2}
	\end{equation}
	where $z_{\alpha/2}$ is the upper $100\left(\alpha/2\right)\%$ quantile of the $N(0,1)$ and $\hat{V}\left[ \hat{F}%
	_{L-t}^{\ast }(x)\right]$ is given by Eq. (9).
\end{corollary}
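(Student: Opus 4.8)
The plan is to derive the interval directly from the two conclusions of Theorem 4.2, glued together by Slutsky's theorem. Item 1 supplies a pivot that is asymptotically standard normal, but its denominator contains the unknown variance $V[\hat{F}_{L-t_v}^{\ast}(x)]$; the function of Item 2 is precisely to let us swap that variance for its estimator without disturbing the limiting law.

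First I would record the standardized statistic of Eq. (10),
\[
Z_v = \frac{\hat{F}_{L-t_v}^{\ast}(x) - F_v(x)}{\left(V\left[\hat{F}_{L-t_v}^{\ast}(x)\right]\right)^{1/2}} \to N(0,1) \qquad (v \to \infty).
\]
By the consistency asserted in Item 2, the ratio $\hat{V}[\hat{F}_{L-t_v}^{\ast}(x)]/V[\hat{F}_{L-t_v}^{\ast}(x)]$ converges to $1$ in probability, and hence so does its square root. Writing the studentized statistic as
\[
T_v = \frac{\hat{F}_{L-t_v}^{\ast}(x) - F_v(x)}{\left(\hat{V}\left[\hat{F}_{L-t_v}^{\ast}(x)\right]\right)^{1/2}} = Z_v \left(\frac{V\left[\hat{F}_{L-t_v}^{\ast}(x)\right]}{\hat{V}\left[\hat{F}_{L-t_v}^{\ast}(x)\right]}\right)^{1/2},
\]
Slutsky's theorem then yields $T_v \to N(0,1)$, since $Z_v$ converges in distribution to $N(0,1)$ while the second factor converges in probability to $1$.

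Finally I would read the interval off the studentized limit. For large $v$ we have $\Pr\{-z_{\alpha/2} \le T_v \le z_{\alpha/2}\} \approx 1-\alpha$, and inverting the two inequalities for $F_v(x)$ places it in $\hat{F}_{L-t_v}^{\ast}(x) \pm z_{\alpha/2}\bigl(\hat{V}[\hat{F}_{L-t_v}^{\ast}(x)]\bigr)^{1/2}$ with approximate probability $1-\alpha$. Suppressing the index $v$, i.e.\ treating the realized estimator as a single point in the sequence, gives exactly the interval of Eq. (11) and establishes its approximate $100(1-\alpha)\%$ coverage. The argument is entirely routine; the only point demanding care is that the Slutsky step requires the variance estimator to be ratio-consistent, and this is exactly what Item 2 of Theorem 4.2 guarantees, so no genuine obstacle remains---the substantive work all lay in proving that theorem.
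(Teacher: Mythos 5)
Your proof is correct and follows essentially the same route as the paper: both decompose the studentized statistic as the standardized statistic of Theorem 4.2 Item 1 multiplied by $\left(V\left[\hat{F}_{L-t_{v}}^{\ast}(x)\right]/\hat{V}\left[\hat{F}_{L-t_{v}}^{\ast}(x)\right]\right)^{1/2}$, which tends to $1$ by Item 2, giving an asymptotically $N(0,1)$ pivot. Your version is in fact slightly more complete, since you explicitly invoke Slutsky's theorem and carry out the final inversion of the probability statement into the interval, steps the paper leaves implicit.
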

\begin{proof}
	While $v\rightarrow \infty$,
	\begin{equation*}
		\frac{\hat{F}_{L-t_{v}}^{\ast }(x)-F_{v}(x)}{\left( \hat{V}\left[ \hat{F}_{L-t_{v}}^{\ast }(x)\right] \right)^{1/2} } = \frac{\hat{F}_{L-t_{v}}^{\ast }(x)-F_{v}(x)}{\left( V\left[ \hat{F}_{L-t_{v}}^{\ast }(x)\right] \right)^{1/2} } \times \left( \frac{V\left[ \hat{F}_{L-t_{v}}^{\ast }(x)\right]}{\hat{V}\left[ \hat{F}_{L-t_{v}}^{\ast }(x)\right]}\right)^{1/2} 
	\end{equation*}
	Under the first condition of the Theorem 4.2, the first term of the left-hand side of the equation has an approximate $N(0,1)$. Under the second condition of Theorem 4.2, the second term of the left-hand side of the equation is
	\begin{equation*}
		\left( \frac{V\left[ \hat{F}_{L-t_{v}}^{\ast }(x)\right]}{\hat{V}\left[ \hat{F}_{L-t_{v}}^{\ast }(x)\right]}\right)^{1/2} \rightarrow 1. 
	\end{equation*}
	and this completes the proof.
\end{proof}

Now, we report some numerical results which are obtained for perfect ranking case. We investigate REs of EDFs based on level-t sampling designs to its counterpart in SRS by using the following equation.
\begin{equation*}
	RE(\hat{F}_{h}(x_{p}),\hat{F}_{L-t}^{\ast }(x_{p}))=\frac{V\left[ \hat{F}%
		_{h}(x_{p})\right] }{V\left[ \hat{F}_{L-t}^{\ast }(x_{p})\right] }
\end{equation*}
for $t=0,1,2$ where $x_{p}$ is the value of quantile corresponding to $F(x)=p\in\left\lbrace 0,0.1,0.2,\cdots,0.9,1 \right\rbrace $. We have considered three different populations which have $N=20$, $N=50$ and $N=100$ units, respectively. The set sizes are $k\in\left\lbrace 2,3,4 \right\rbrace$ for $N=20$ and $k\in\left\lbrace 2,3,4,5,6,7 \right\rbrace$ for $N=50$. For $N=100$, the number of cycles and the set sizes have been taken as $m\in\left\lbrace 2,4 \right\rbrace$ and $k\in\left\lbrace 3,5 \right\rbrace$, respectively. The values of REs are given by Figures \ref{fig1}-\ref{fig3}. According to the figures, it is appeared that the EDFs based on the sampling designs are more efficient than the EDF based on SRS. It is observed that the values of RE is monotone increasing as $F(x)$ increase for $0\leq F(x)\leq 0.5$. The EDF estimator based on level-2 sampling design has highest efficiencies among the all studied sampling methods except for $F(x)\geq 0.8$ when $N=20$ and $N=50$. It can be seen that the EDF based on level-1 is slightly more efficient than the EDF based on level-2 for $F(x)\geq 0.8$ when $N=20$ and $N=50$. 
\begin{figure}[http!]
	\centering
	\subfloat[For $k=2$.]{\includegraphics[width=6.8cm,height=7.5cm]{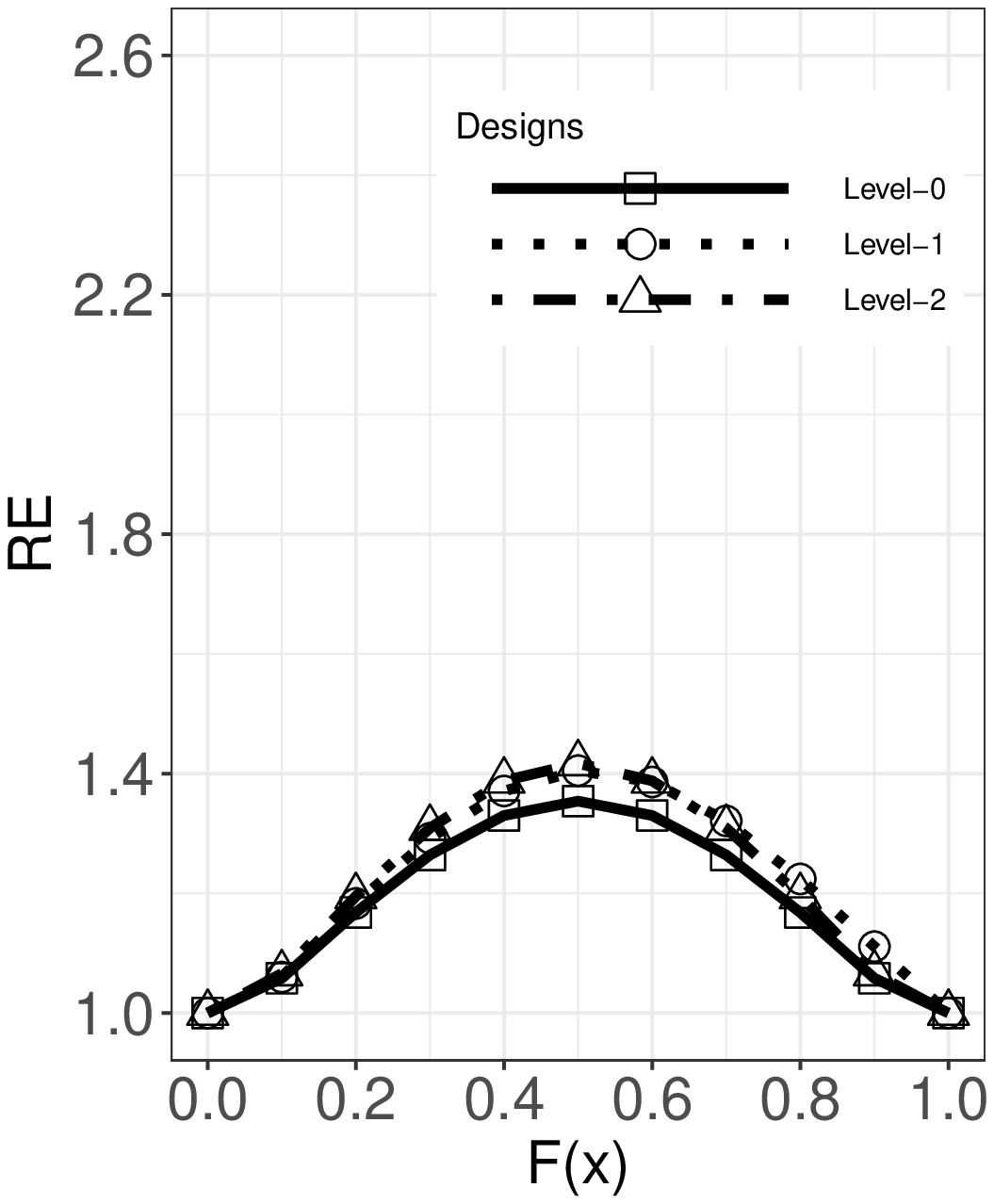}}
	\subfloat[For $k=3$.]{\includegraphics[width=6.8cm,height=7.5cm]{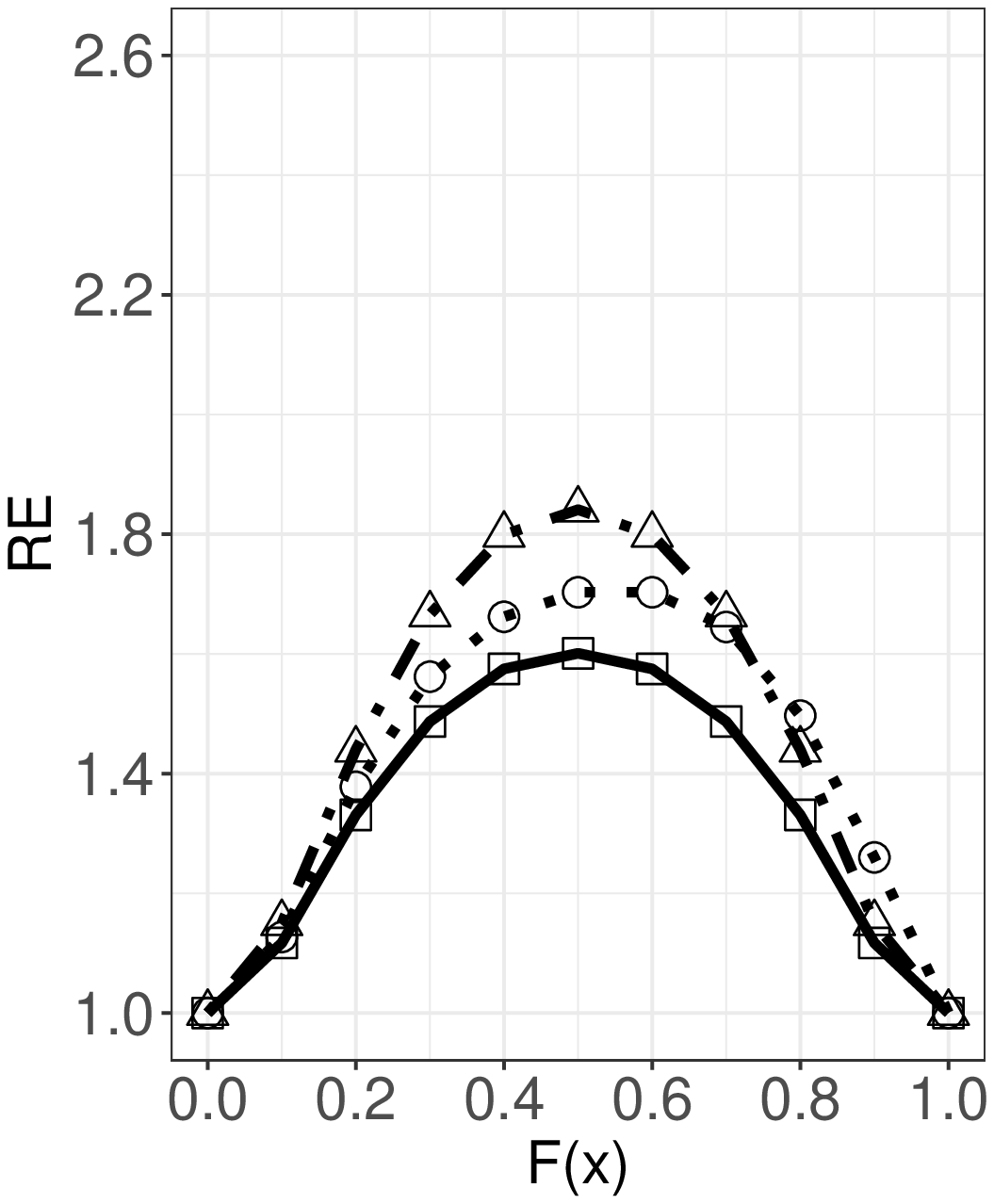}}
	\hskip2em
	\subfloat[For $k=4$.]{\includegraphics[width=6.8cm,height=7.5cm]{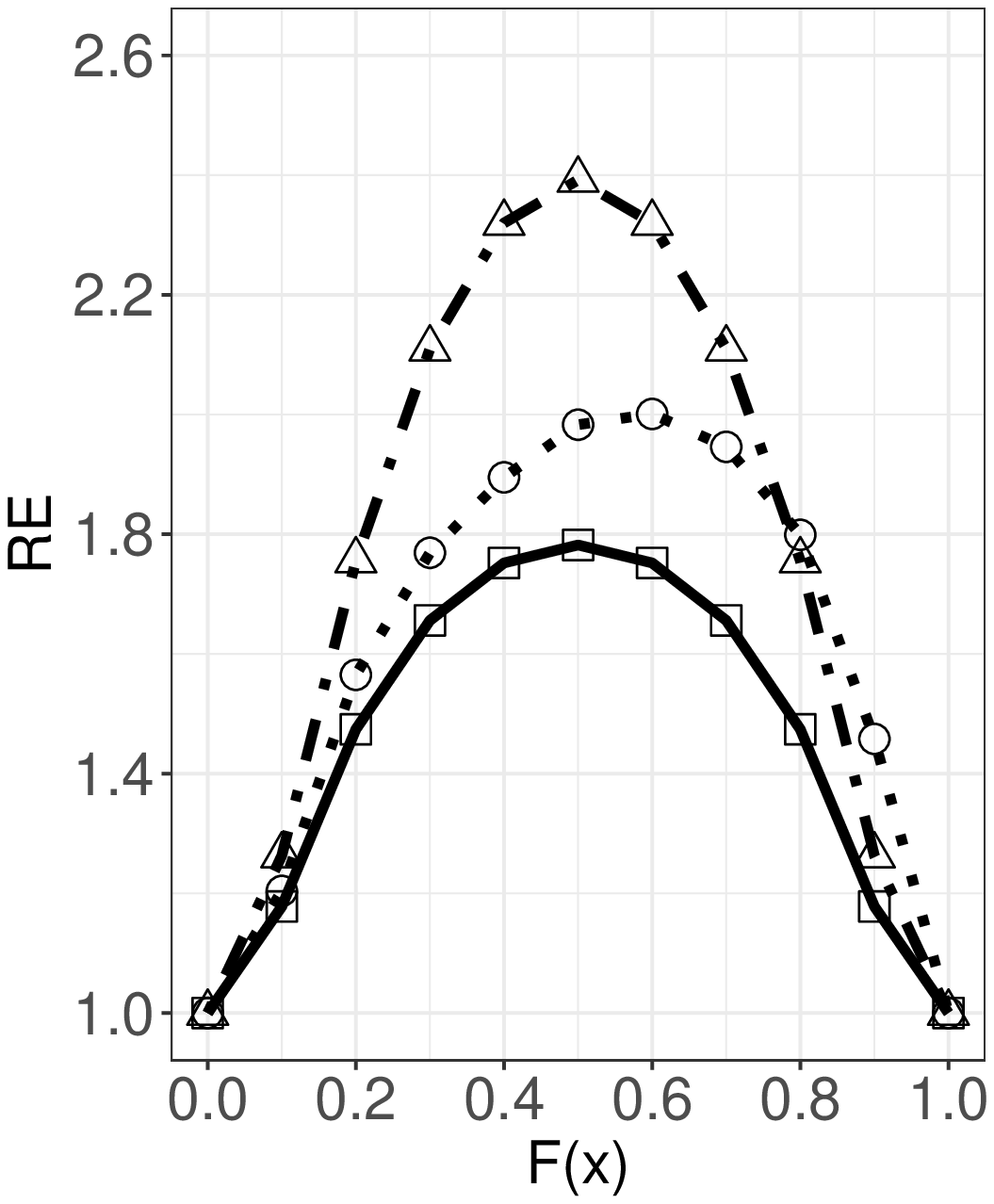}}
	\caption{For $N=20$, the values of $RE(\hat{F}_{h}(x_{p}),\hat{F}_{L-t}^{\ast }(x_{p}))$ for $t=0,1,2$ where $F(x)=p$}
	\label{fig1}
\end{figure}
\begin{figure}[http!]
	\centering
	\subfloat[For $k=2$.]{\includegraphics[width=6.8cm,height=5.6cm]{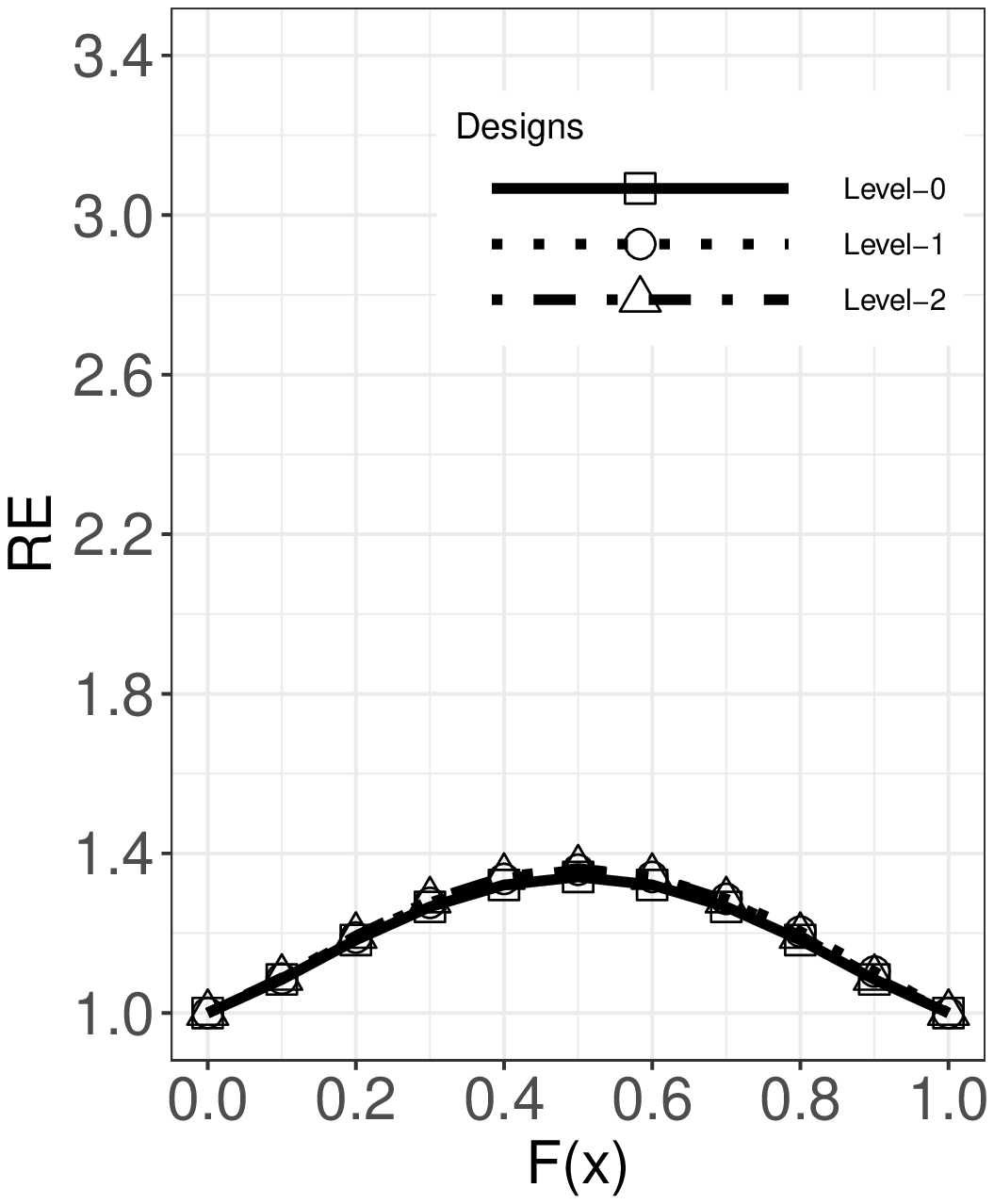}}
	\subfloat[For $k=3$.]{\includegraphics[width=6.8cm,height=5.6cm]{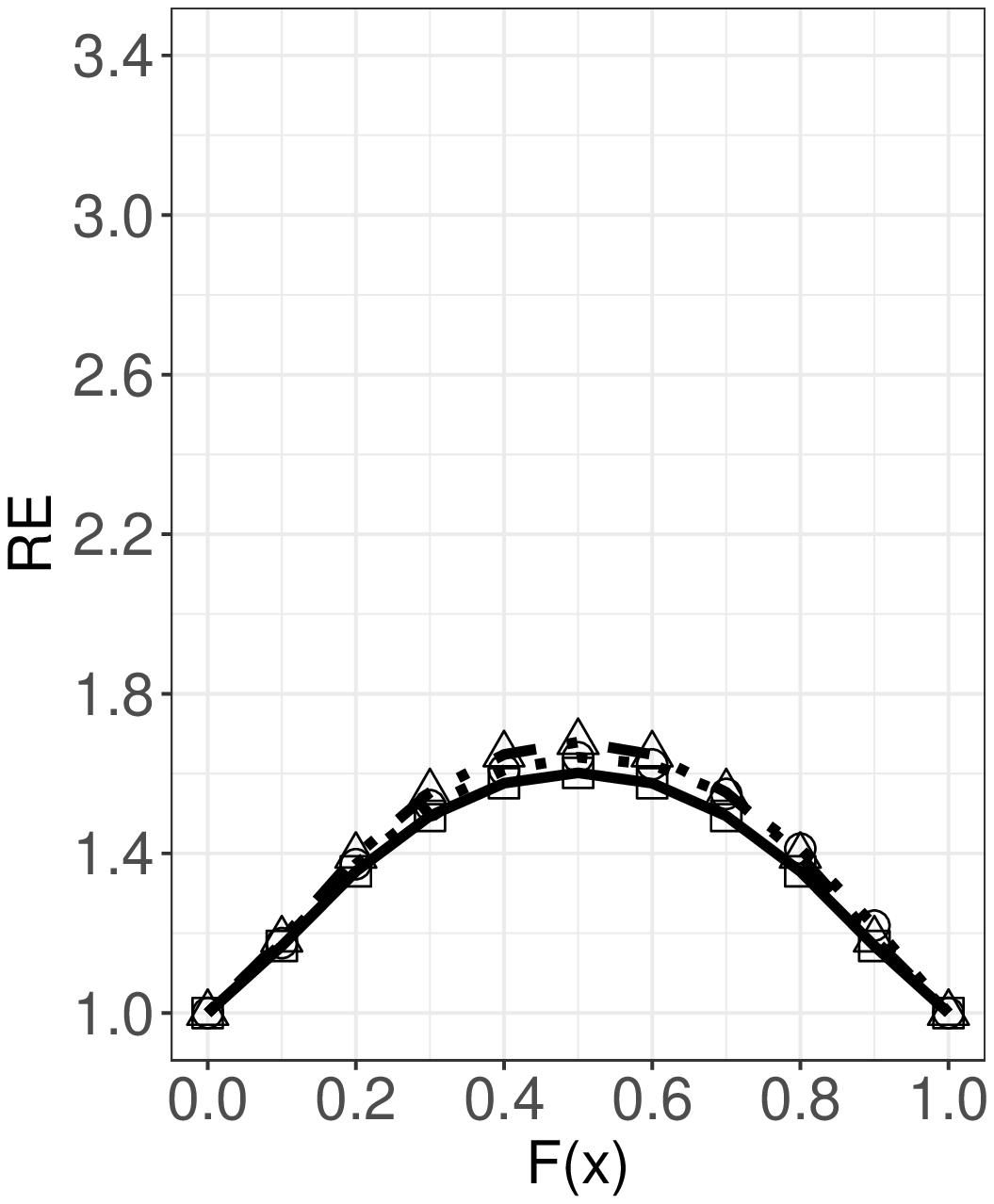}}
	\hskip2em
	\subfloat[For $k=4$.]{\includegraphics[width=6.8cm,height=5.6cm]{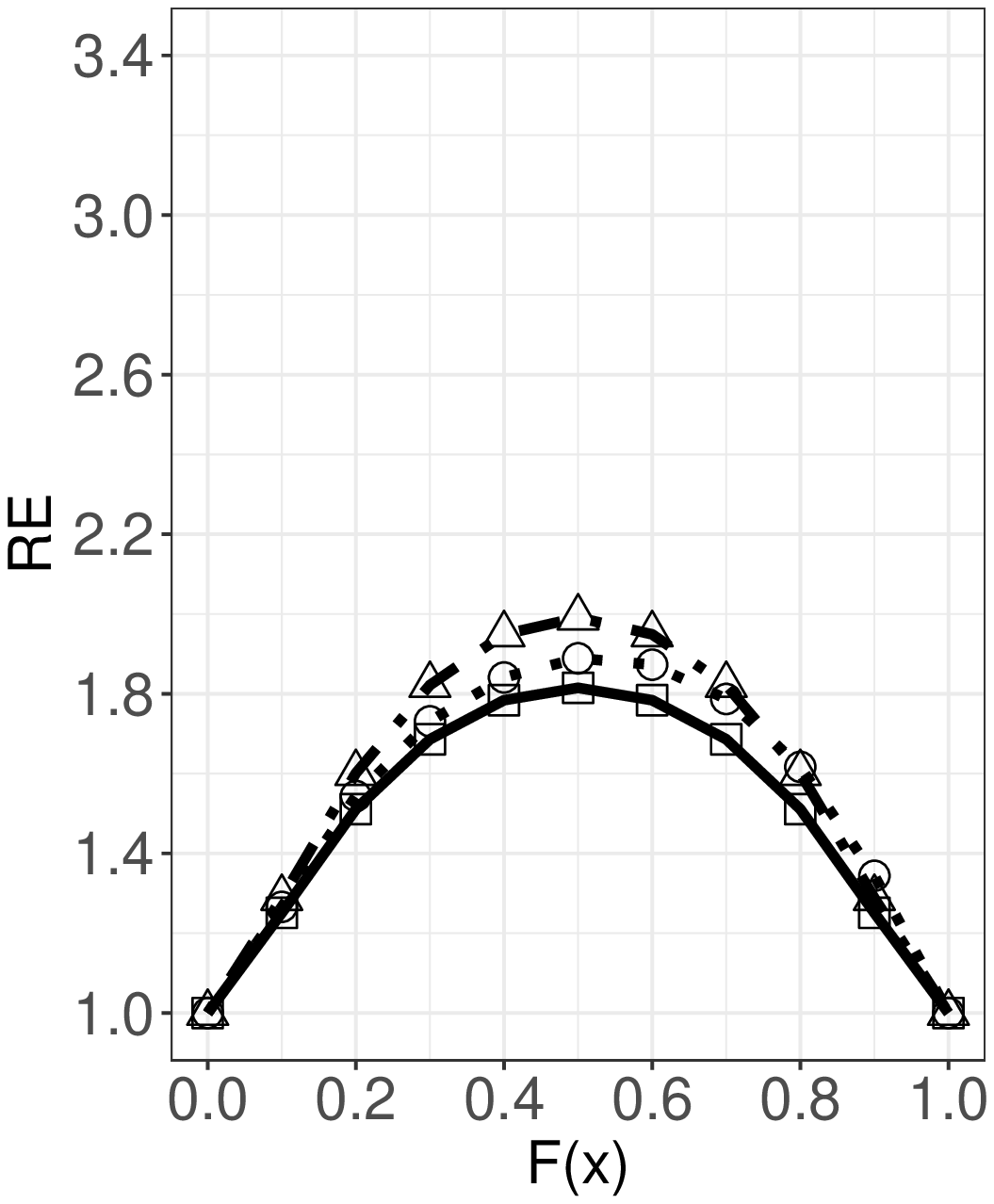}}
	\subfloat[For $k=5$.]{\includegraphics[width=6.8cm,height=5.6cm]{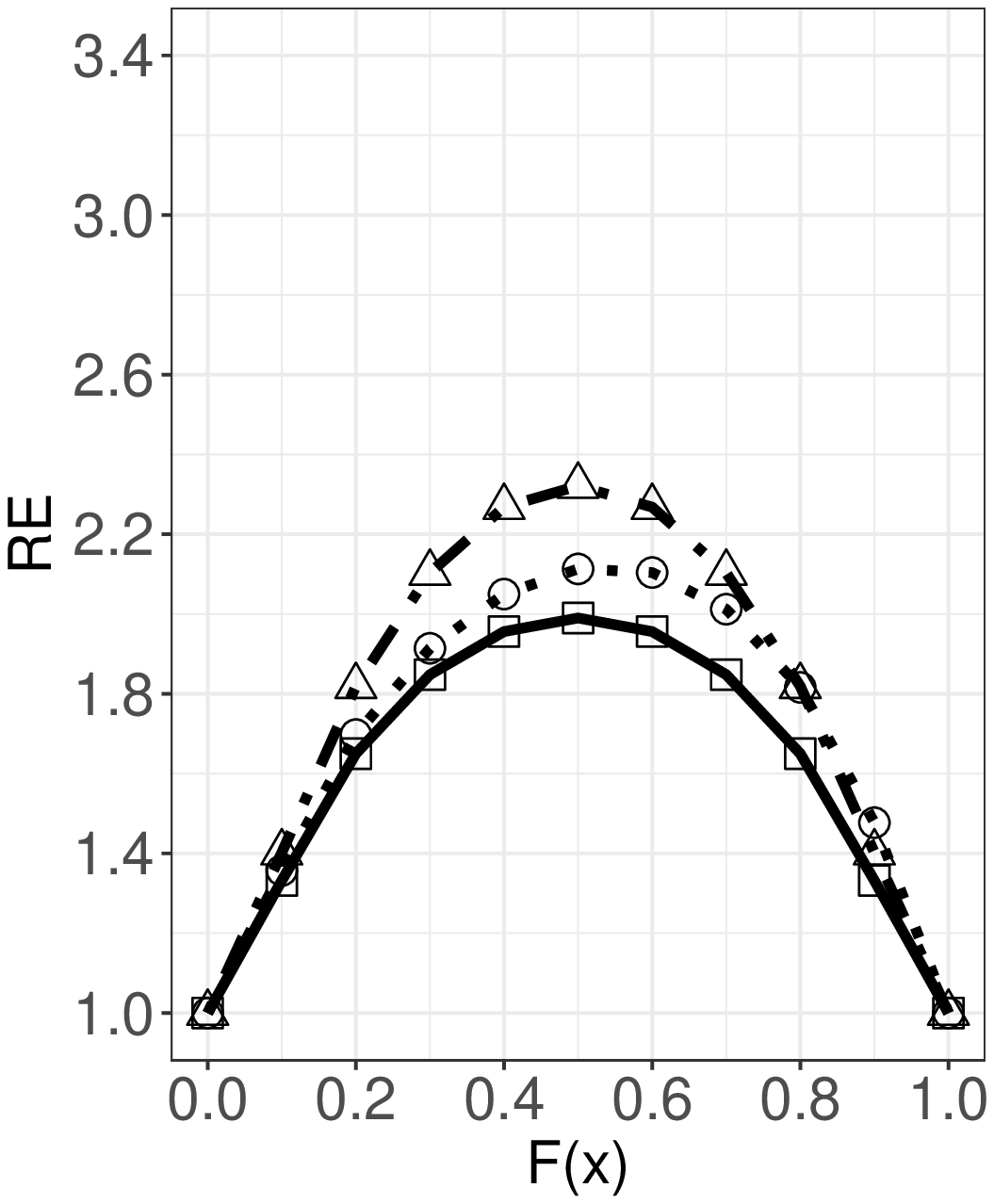}}
	\hskip2em
	\subfloat[For $k=6$.]{\includegraphics[width=6.8cm,height=5.6cm]{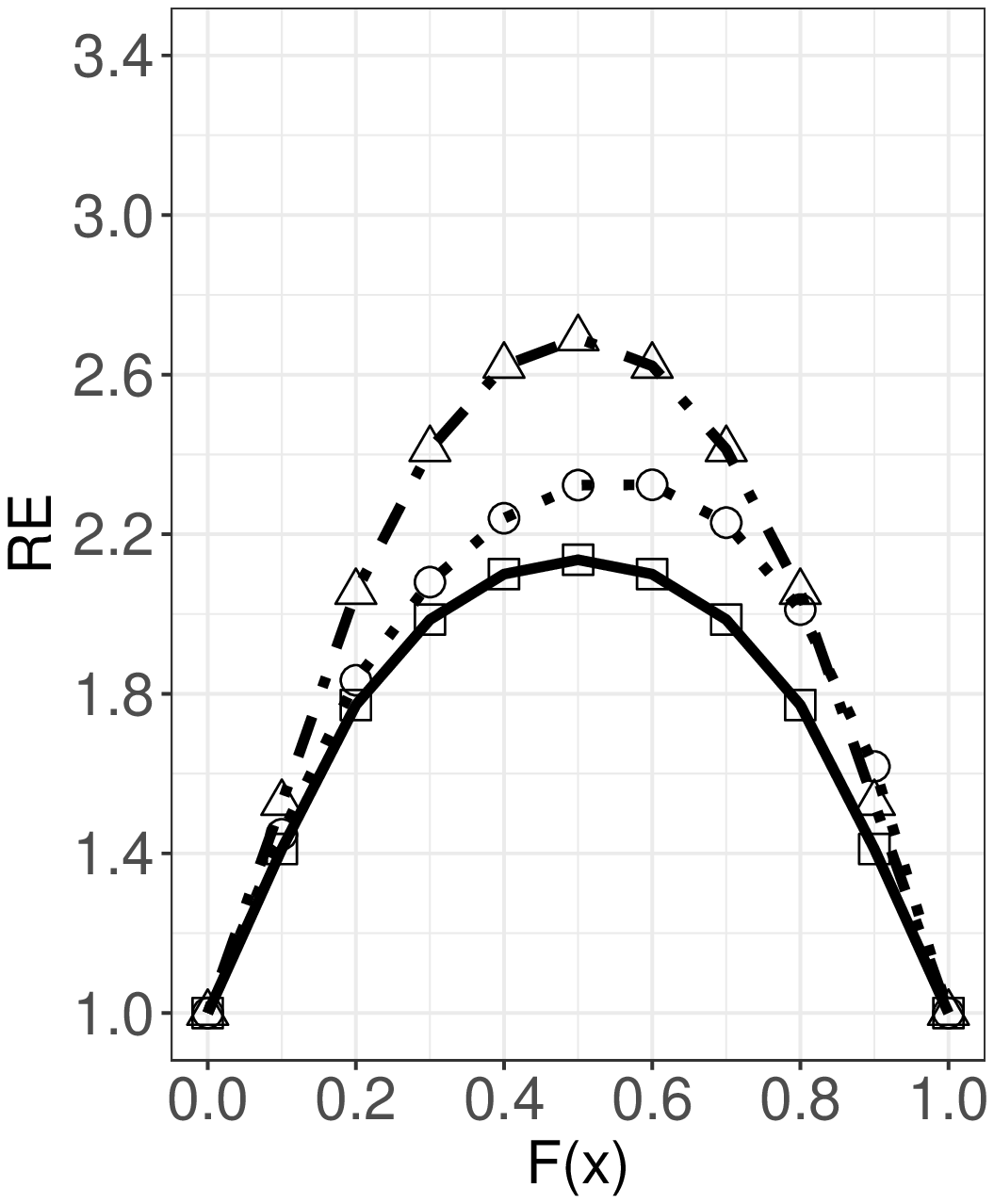}}
	\subfloat[For $k=7$.]{\includegraphics[width=6.8cm,height=5.6cm]{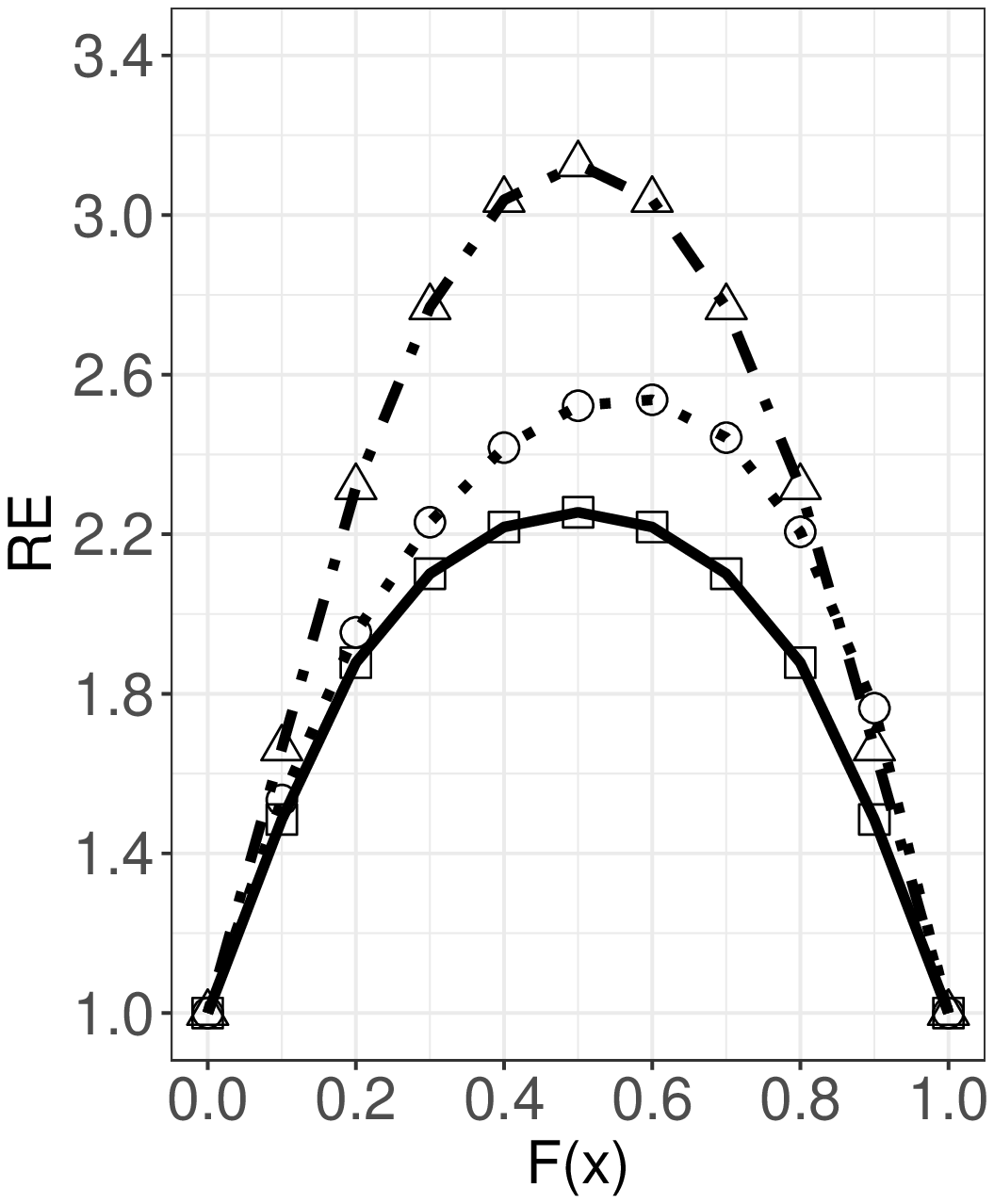}}
	\caption{For $N=50$, the values of $RE(\hat{F}_{h}(x_{p}),\hat{F}_{L-t}^{\ast }(x_{p}))$ for $t=0,1,2$ where $F(x)=p$}
	\label{fig2}
\end{figure}
\begin{figure}[http!]
	\centering
	\subfloat[For $m=2$ and $k=3$.]{\includegraphics[width=6.8cm,height=7.5cm]{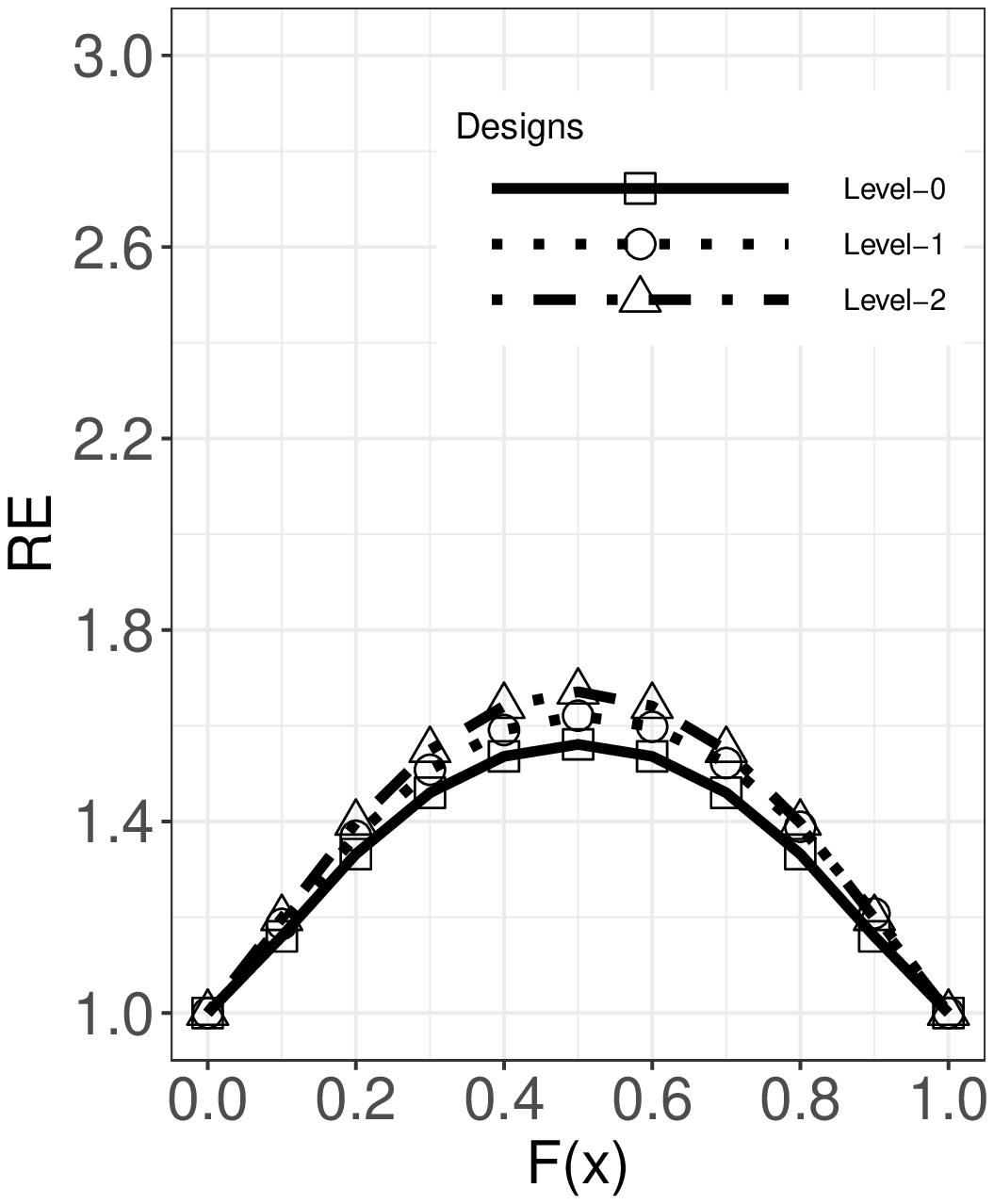}}
	\subfloat[For $m=2$ and $k=5$.]{\includegraphics[width=6.8cm,height=7.5cm]{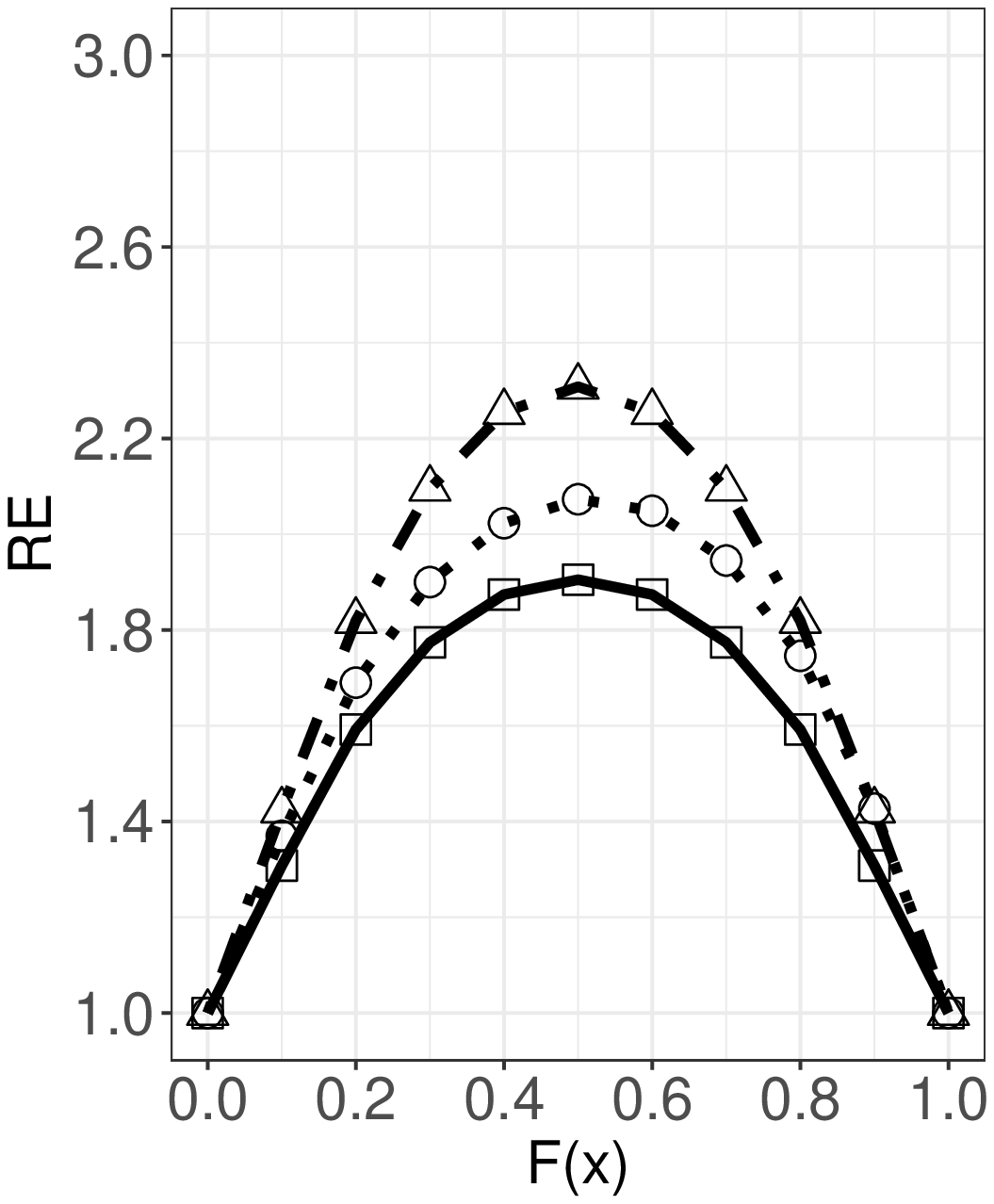}}
	\hskip2em
	\subfloat[For $m=4$ and $k=3$.]{\includegraphics[width=6.8cm,height=7.5cm]{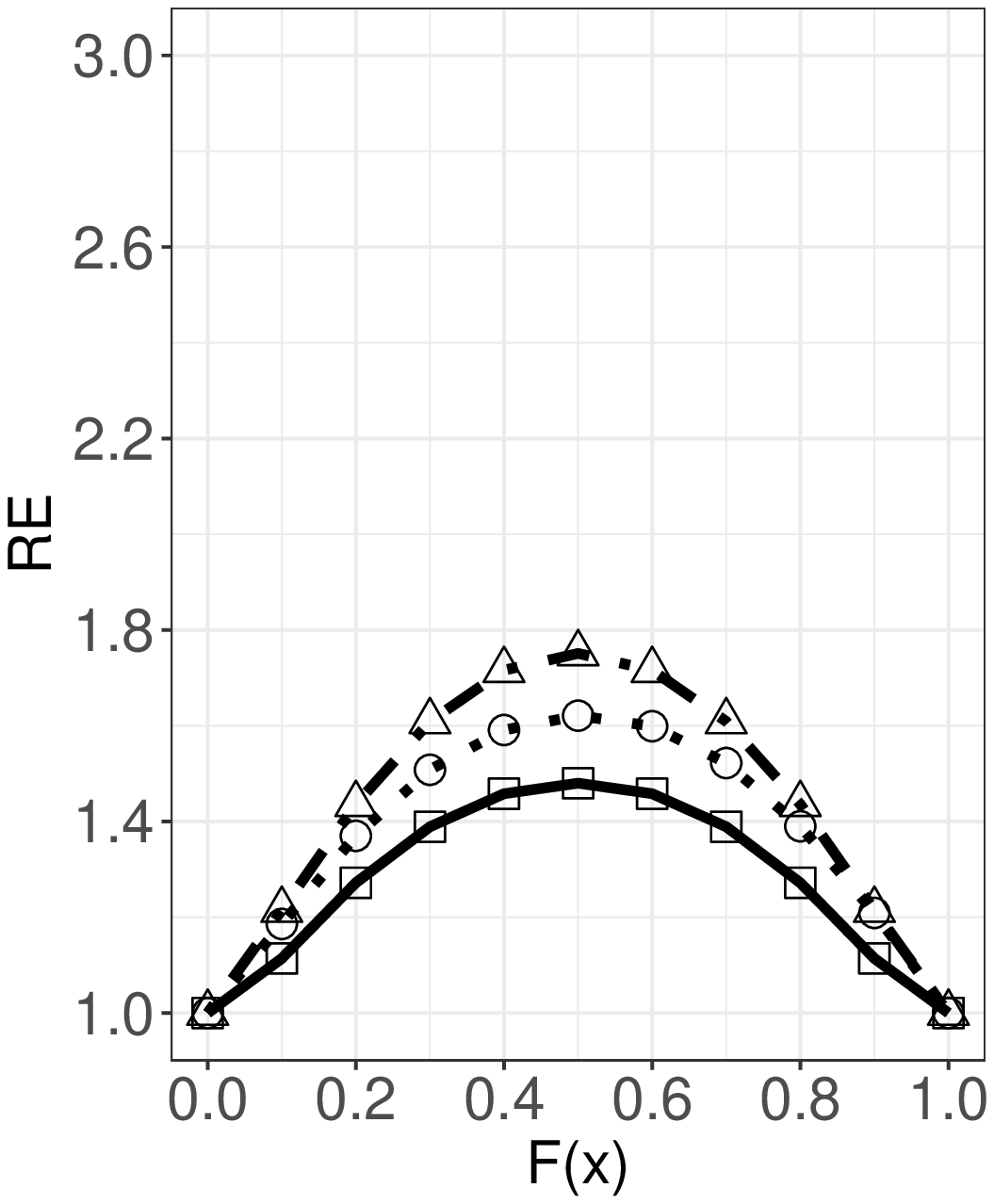}}
	\subfloat[For $m=4$ and $k=5$.]{\includegraphics[width=6.8cm,height=7.5cm]{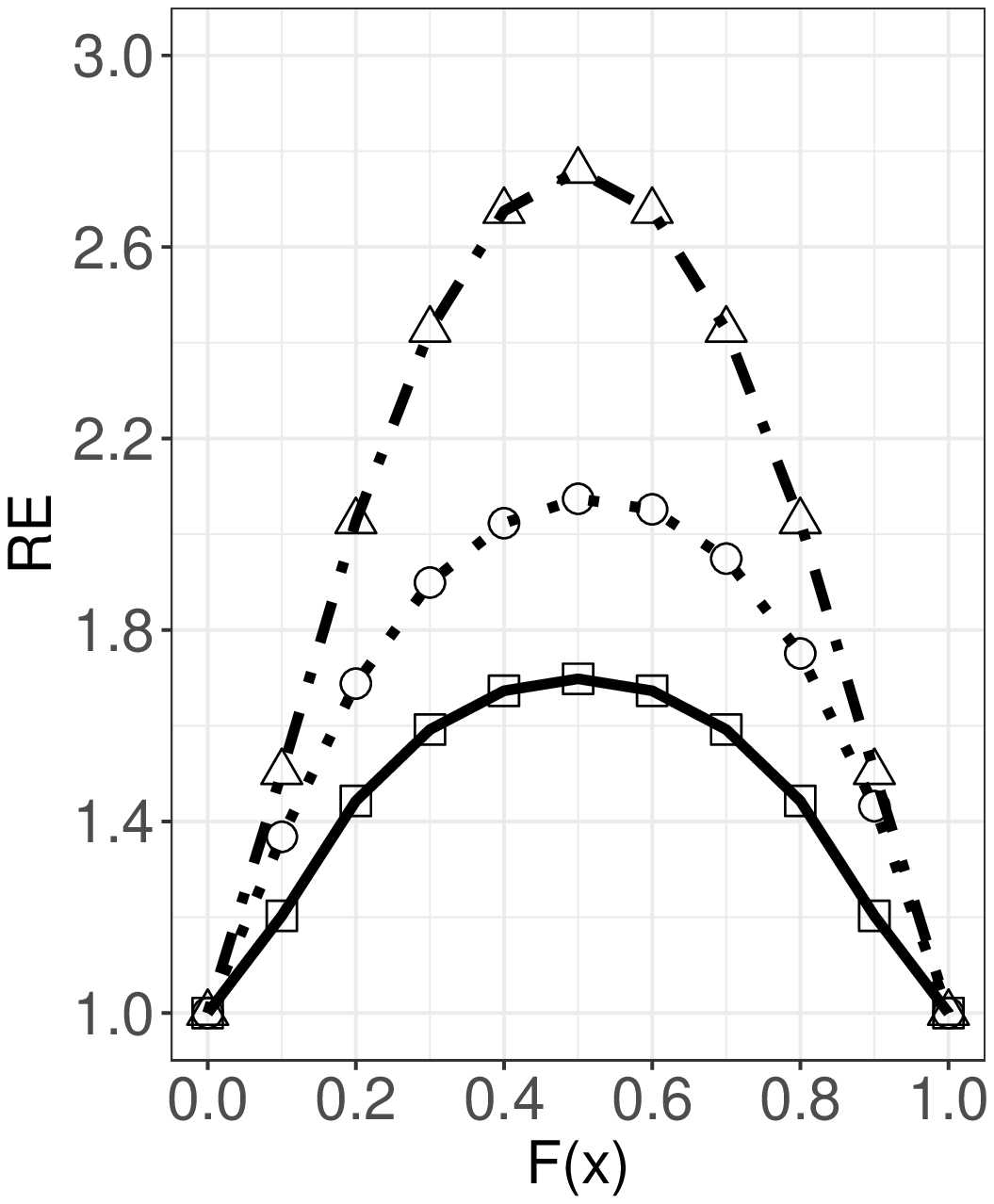}}
	\caption{For $N=100$, the values of $RE(\hat{F}_{h}(x_{p}),\hat{F}_{L-t}^{\ast }(x_{p}))$ for $t=0,1,2$ where $F(x)=p$}
	\label{fig3}
\end{figure}
However, this superiority stems from lack of symmetry in the level-1 sampling process. Because the order in which the observations are collected matters. This could be a factor and this superiority can be ignored. The estimator $\hat{F}_{L-2}^{\ast }(x)$ has highest performance when $N=100$ for any values of $F(x)=p$. The efficiency values of the proposed estimators are increasing functions of the set size $k$ and the number of cycles $m$. Moreover, it is clearly seen that increase in the set size $k$ and the number of cycle $m$ gain advantage especially for the EDF based on level-2. Because level-2 is obtained without replacement policy. For example, Figure \ref{fig2} indicates that the RE of the EDF based on level-2 is $1.4$ for $k=2$ and $F(x)=0.5$ while the RE is $3.2$ for $k=7$ and $F(x)=0.5$. On the other hand, the RE of the EDF based on level-0 is $1.4$ for $k=2$ and $F(x)=0.5$ while the RE is $2.2$ for the set size $k=7$ and $F(x)=0.5$. According to Figure \ref{fig3}, it can be said that the increase in set size has a greater effect on the relative efficiency than the increase in the number of cycles. For example, the RE of level-2 is $1.8$ for $m=4$, $k=3$ and $F(x)=0.5$ while the RE is $2.8$ for $m=4$, $k=5$ and $F(x)=0.5$. However, the RE of the estimator based on level-2 is $1.6$ for $m=2$, $k=3$ and $F(x)=0.5$ while the RE is $1.8$ for $m=4$, $k=3$ and $F(x)=0.5$. Note that the same REs values can be obtained regardless of the distribution of population. Because our calculations figure out that the distribution which is used has no effect on REs under perfect ranking case.

\section{Imperfect ranking}

Even if the theoretical background of the proposed estimators in the perfect ranking case has been examined, the performance of the new EDF estimators should also be examined for the imperfect ranking case since the perfect ranking assumption is not realistic in practice. 

Ranking procedure is usually performed through subjective judgement or single auxiliary variable. Let us give an example. Assumed that five sheep is selected without replacement among $224$ sheep. Here, the problem is to rank the five sheep according to their weights from the smallest to the largest. The sheep must be ranked without measurement since it is difficult to take measurement from a sheep. If an expert assign ranks ($1\leq r\leq 5$) to the sheep, the ranking quality depends on his/her knowledge about the sheep. On the other hand, an auxiliary variable such as mother's weights or the weights of selected sheep at birth can be used instead of subjective judgement ranking. In this case, the quality of ranking depends on the magnitude of the correlation coefficient between sheep's weights and the selected auxiliary variable. Another important issue is to assign ranks to $224$ sheep to calculate the first and second order inclusion probabilities. This process also varies depending on the ranking error. An important note is that the assignment of rank values to the population units and the ranking of units in the sets must be consistent. In other words, let the mother's weight be preferred for the ranking process. Both the assignment of rank values to the population units and the ranking of units in the sets are performed by mother's weights of the sheep. Therefore, it is important to investigate that the performance of the proposed estimators for the case when the ranking is imperfect. 

For this purpose, we construct a Monte Carlo simulation. To simulate small finite populations such as $N=20$ and $N=100$, Frey \cite{frey2011recursive} suggested an idea. According to this idea, the interested variable $x$ generated by setting $x_{\gamma}=Q_{\kappa}\left(\left(\gamma -0.5 \right)/N  \right) $ where $1\leq \gamma \leq N$, $\kappa=1$, $2$, $3$ and $4$ for quantile functions of standard normal ($N(0,1)$), standard uniform ($U(0,1)$), standard exponential ($Exp(1)$) and beta ($Beta(5,2)$), respectively. Note that $Beta(5,2)$ is left skewed distribution. There are two advantages of this idea. First, it is to create reproducible small finite populations from any distribution. The other is to obtain the values of the interested variable from the entire distribution. Thus, a small finite population which represents the preferred distribution can be obtained. 

In the simulation, the other parameters are taken to be $k\in\left\lbrace 2,3,4 \right\rbrace $ for $N=20$, $m\in\left\lbrace 2,4 \right\rbrace$ and $k\in\left\lbrace 3,5 \right\rbrace$ for $N=100$. The ranking procedure is performed by using the following ranking error model.
\begin{equation}
	Y_{\gamma}=\rho \left( \frac{X_{\gamma}-\mu _{x}}{\sigma _{x}}\right) +\sqrt{1-\rho ^{2}}Z_{\gamma}
\end{equation}
where $Y$ is the auxiliary variable, $Z$ follows the standard normal distribution and independent from $X$, $1\leq\gamma\leq N$. Note that the values of $Z$s cannot be generated by using the reproducible way. Because, a strong correlation between $X$ and $Y$ arises. Different way may be considered for this problem in further studies. However, we can say that this problem has little effect on the results and these results can be reproduced by the reader in a simulation with $10,000$ iterations. In Eq. (12), the ranking quality is controlled by the magnitude of the correlation coefficient $\rho \in \left[-1,1 \right] $. In the simulation, the correlation coefficient is taken to be $\rho \in \left\lbrace 0.9,0.75,0.5\right\rbrace $ where $\rho=0.9$ means nearly perfect ranking, $\rho=0.75$ means imperfect ranking which is good enough and $\rho=0.5$ means imperfect ranking. In this simulation, we have generated $10,000$ random samples from SRS and level-t sampling design where $t=0,1,2$.

The RE of EDF in the level-t sampling design to its counterpart in SRS is computed using 
\begin{equation}
	RE(\hat{F}_{h}(x_{p}),\hat{F}_{L-t}^{\ast }(x_{p}))=\frac{V\left[ \hat{F}%
		_{h}(x_{p})\right] }{MSE\left[ \hat{F}_{L-t}^{\ast }(x_{p})\right] }
\end{equation}
where 	
\begin{equation*}
	MSE\left[ \hat{F}_{L-t}^{\ast }(x_{p})\right]=\hat{V}\left[ \hat{F}_{L-t}^{\ast }(x_{p})\right] + bias^{2}\left[ \hat{F}_{L-t}^{\ast }(x_{p})\right]
\end{equation*}
for $t=0,1,2$ where $x_{p}$ is the value of quantile corresponding to $F(x)=p\in\left\lbrace 0,0.1,0.2,\cdots,0.9,1 \right\rbrace $. In this section, we give some remarkable results in Figures \ref{fig4}-\ref{fig7}. The other results are provided by Figures S1-S17 of the supplementary material. Figures \ref{fig4}-\ref{fig5} consists of the REs which are obtained for $N=20$ and $k=4$. The EDF based on level-2 sampling design have outperformance in most cases. The shape of REs for $N(0,1)$ and $U(0,1)$ is symmetric. Also, the shape of REs for $Exp(1)$ are left-skewed while the shape of REs for $Beta(5,2)$ is slightly right-skewed.
\begin{figure}[http!]
	\centering
	\subfloat[For $N(0,1)$.]{\includegraphics[width=6.8cm,height=7.5cm]{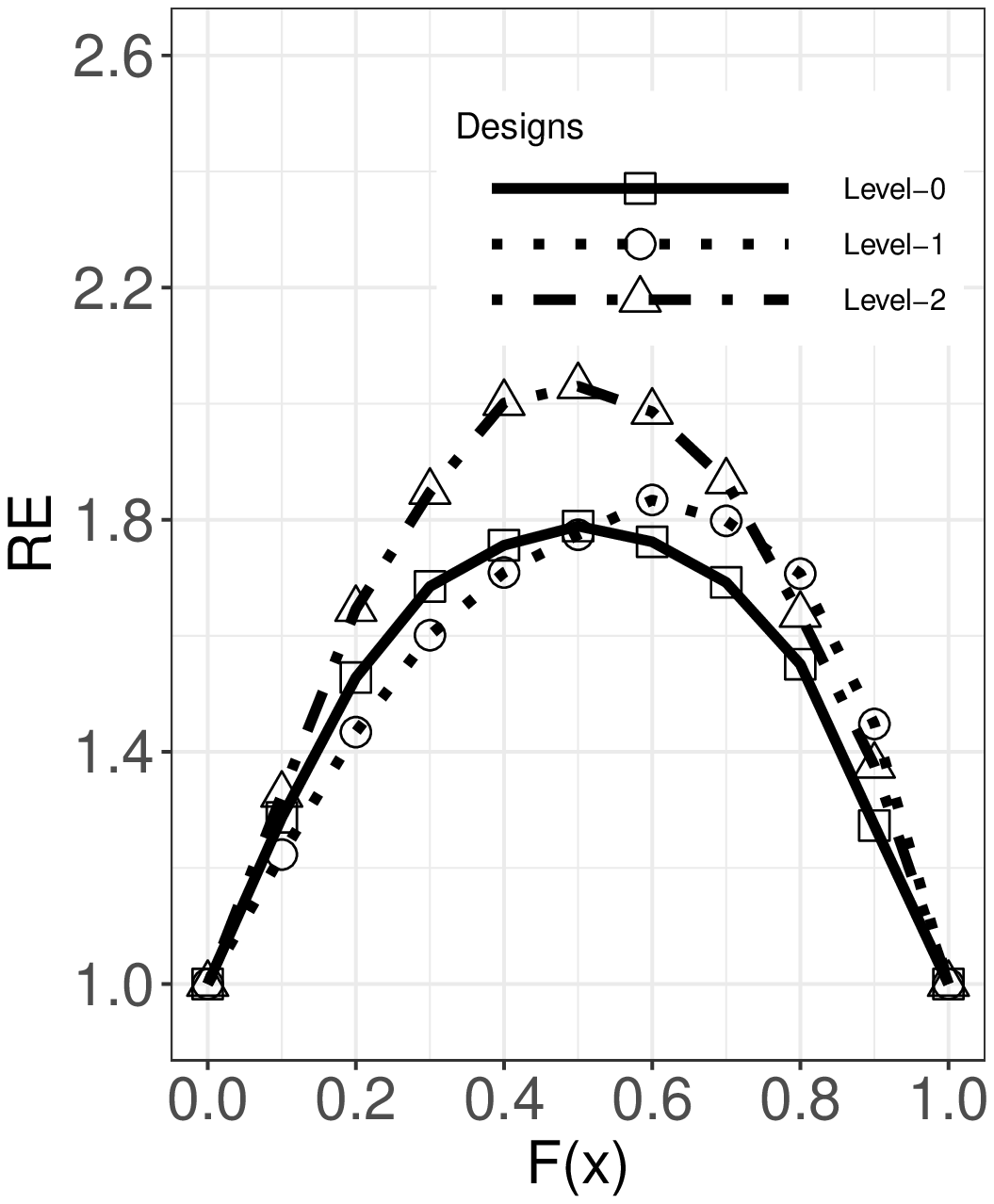}}
	\subfloat[For $U(0,1)$.]{\includegraphics[width=6.8cm,height=7.5cm]{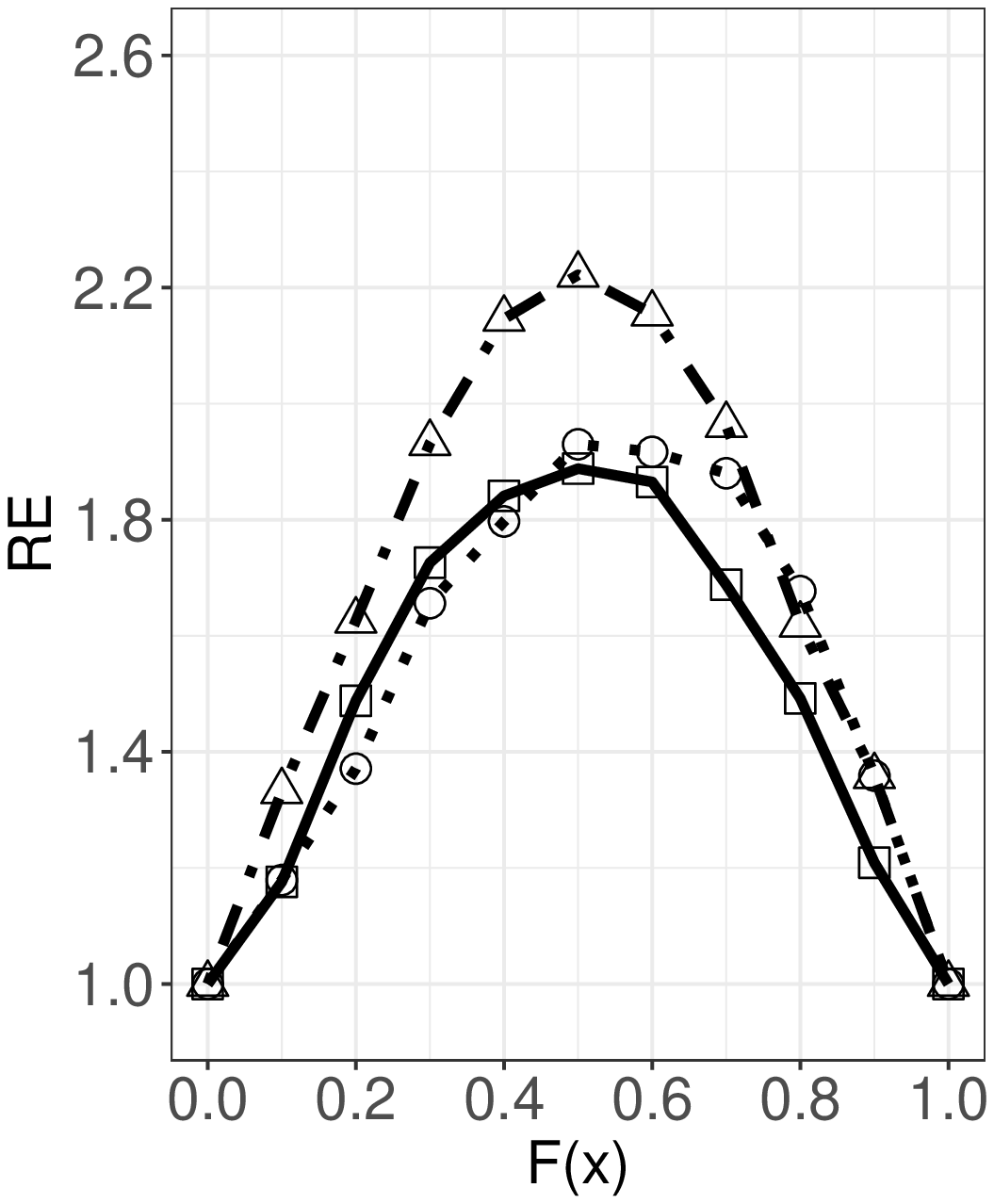}}
	\hskip2em
	\subfloat[For $Exp(1)$.]{\includegraphics[width=6.8cm,height=7.5cm]{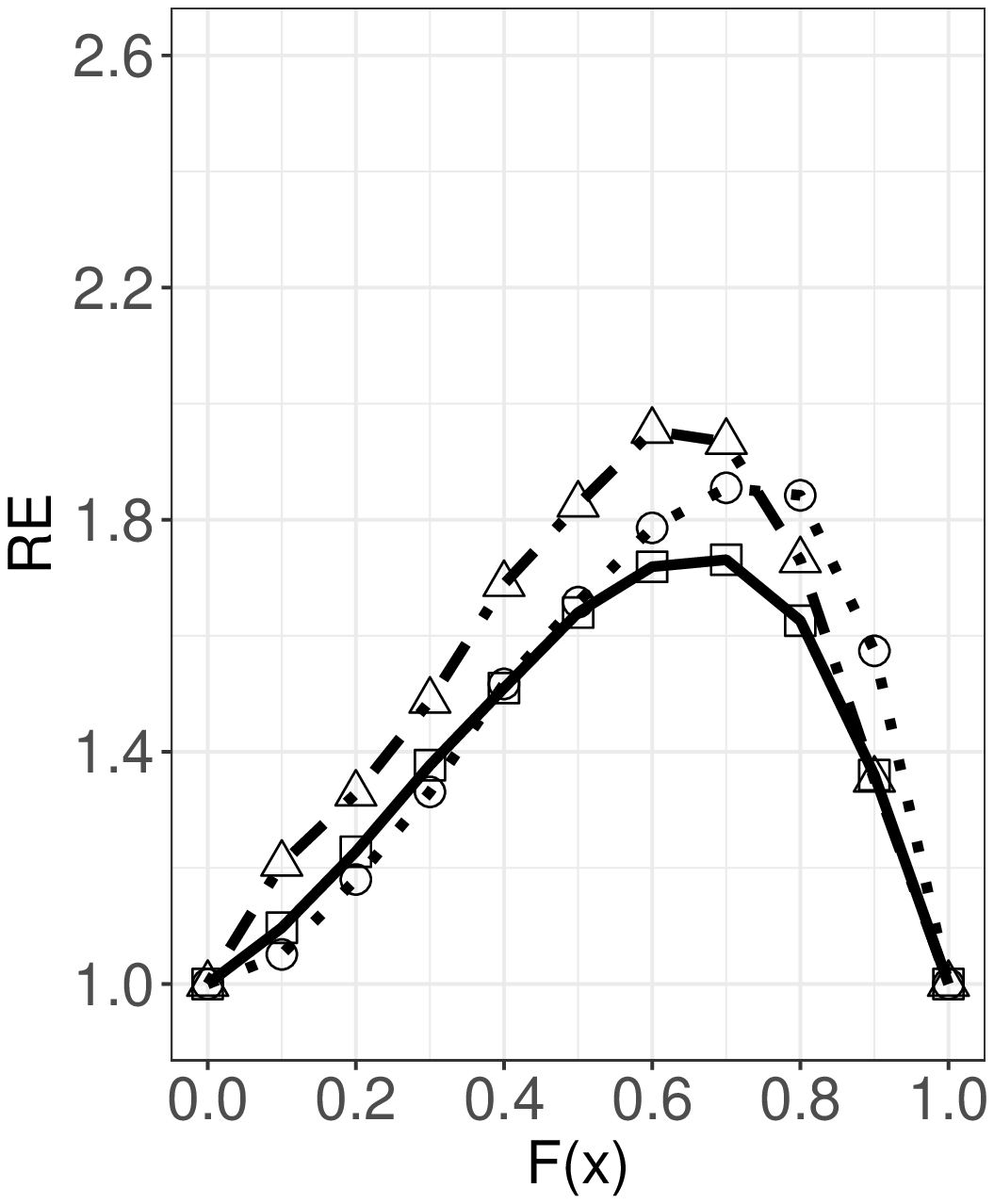}}
	\subfloat[For $Beta(5,2)$.]{\includegraphics[width=6.8cm,height=7.5cm]{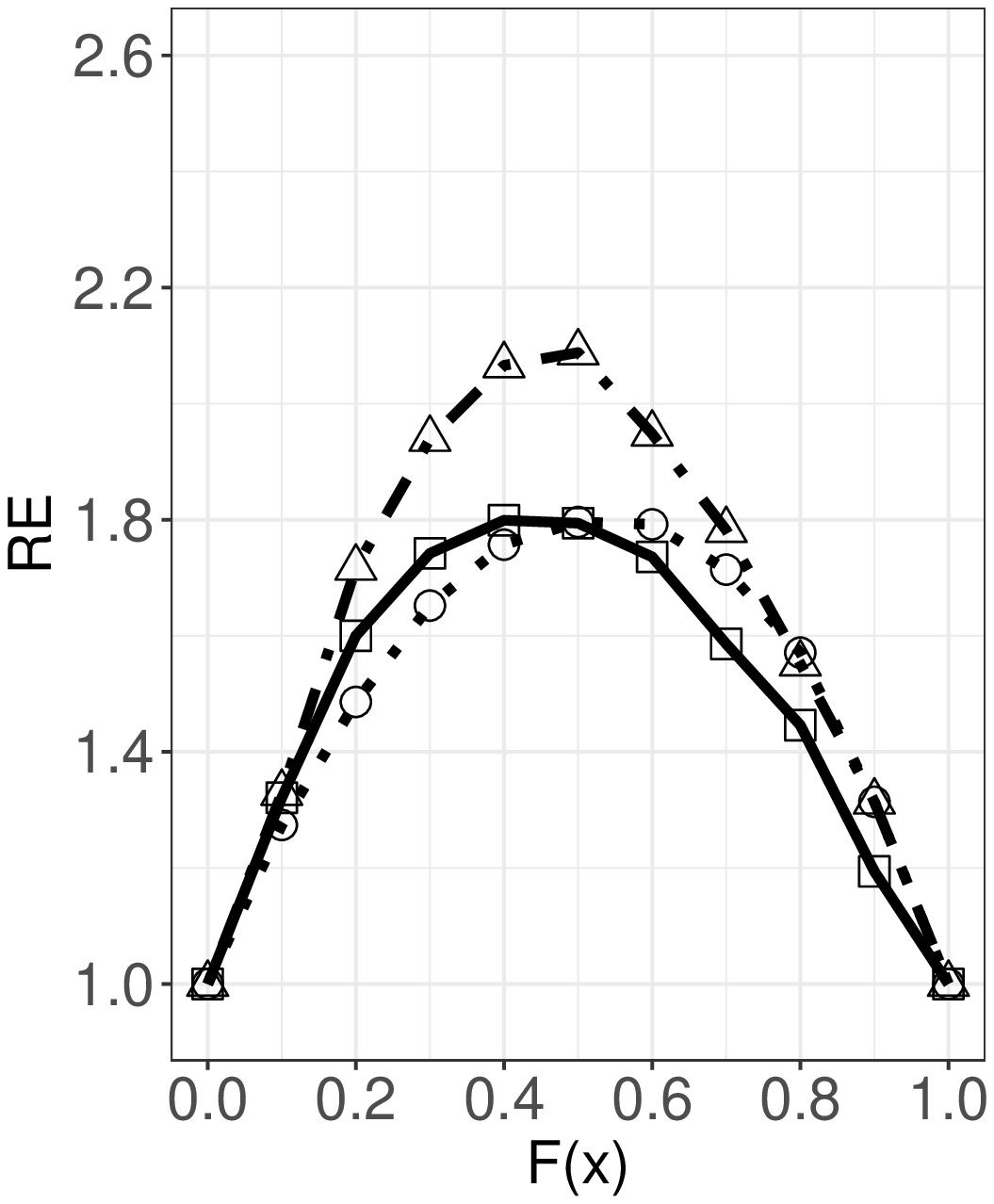}}
	\caption{For the simulation parameters $N=20$, $k=4$ and $\rho=0.9$, the values of $RE(\hat{F}_{h}(x_{p}),\hat{F}_{L-t}^{\ast }(x_{p}))$ for $t=0,1,2$ where $F(x)=p$}
	\label{fig4}
\end{figure}
\begin{figure}[http!]
	\centering
	\subfloat[For $N(0,1)$.]{\includegraphics[width=6.86cm,height=7.5cm]{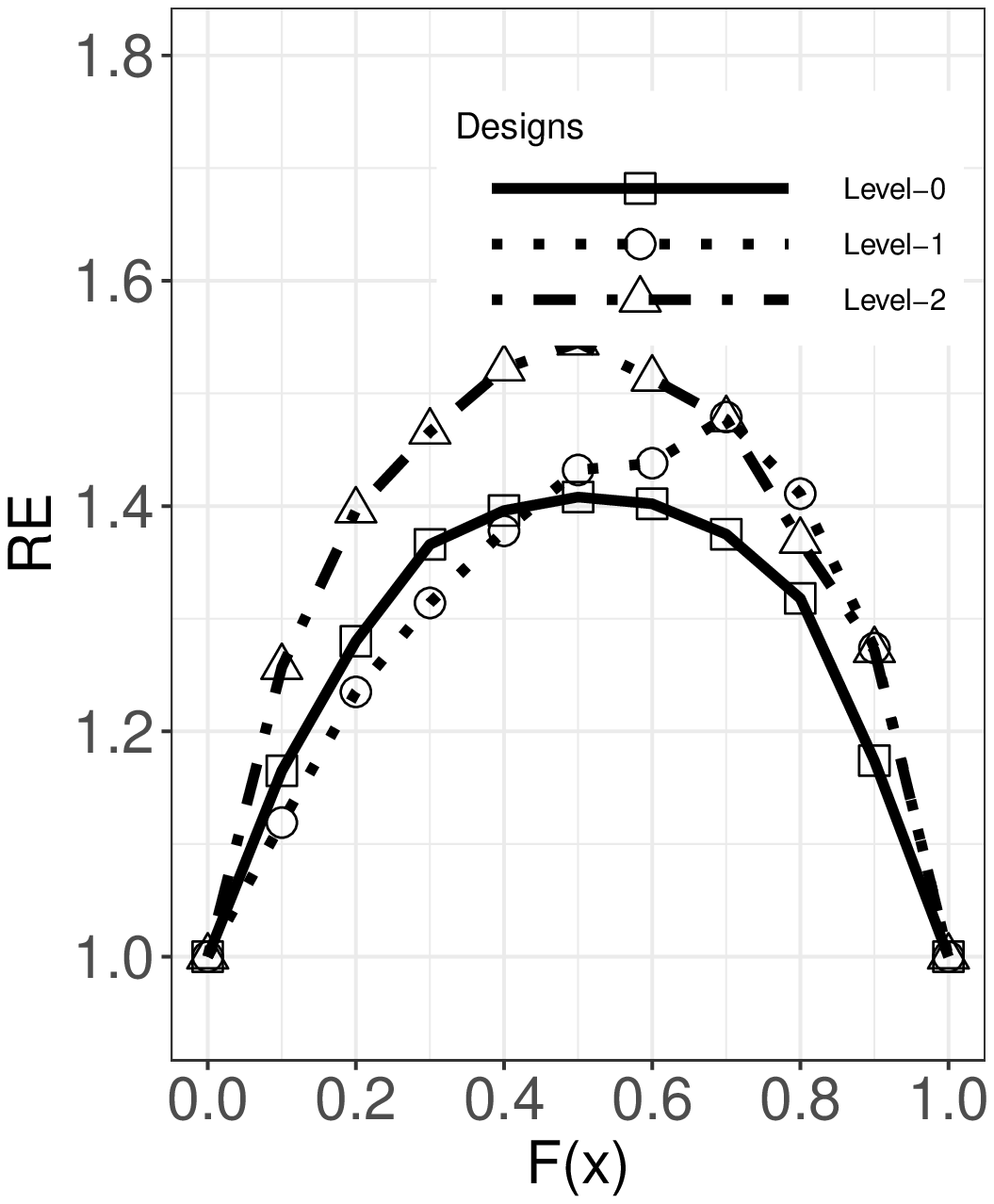}}
	\subfloat[For $U(0,1)$.]{\includegraphics[width=6.8cm,height=7.5cm]{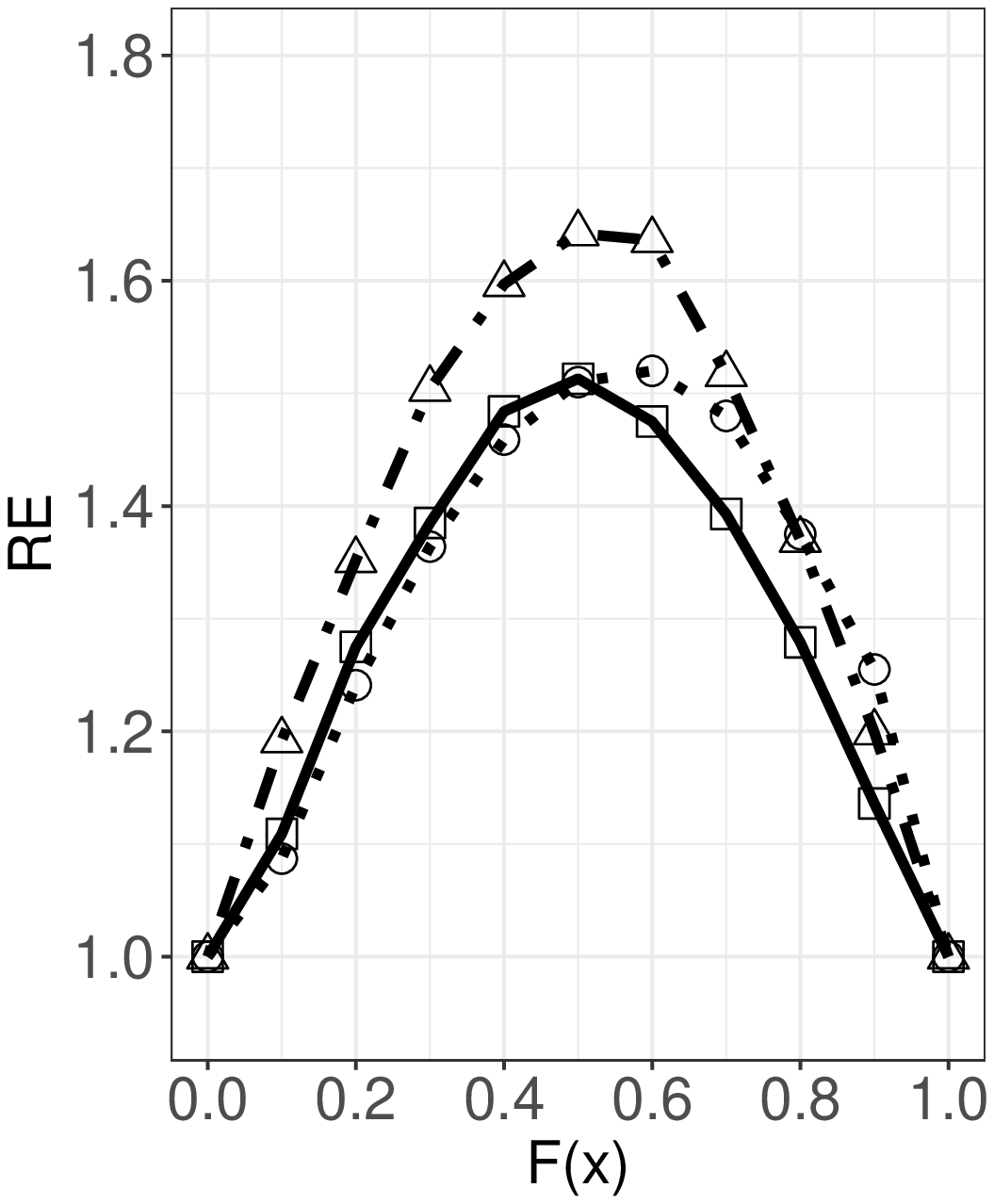}}
	\hskip2em
	\subfloat[For $Exp(1)$.]{\includegraphics[width=6.8cm,height=7.5cm]{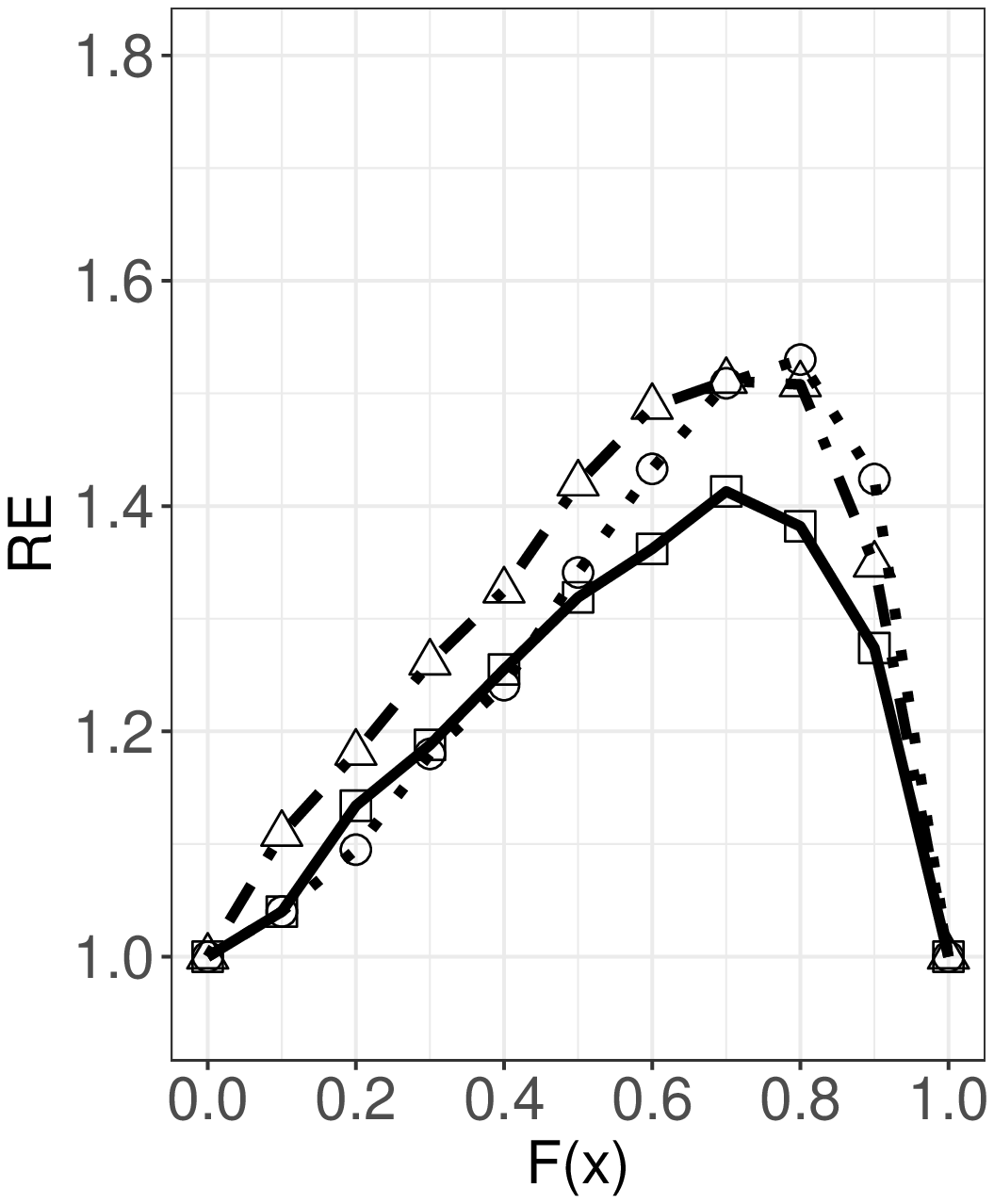}}
	\subfloat[For $Beta(5,2)$.]{\includegraphics[width=6.8cm,height=7.5cm]{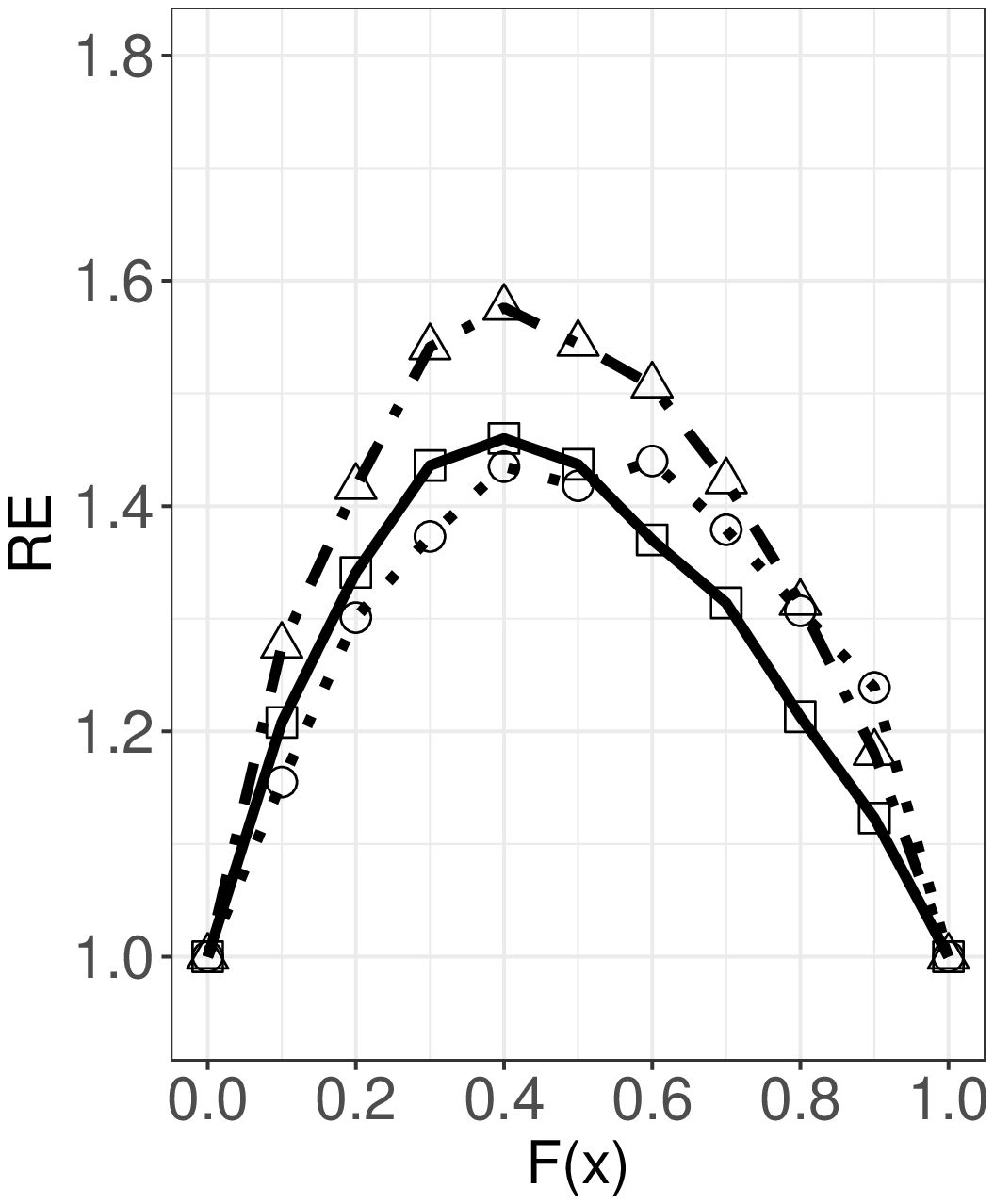}}
	\caption{For the simulation parameters $N=20$, $k=4$ and $\rho=0.75$, the values of $RE(\hat{F}_{h}(x_{p}),\hat{F}_{L-t}^{\ast }(x_{p}))$ for $t=0,1,2$ where $F(x)=p$}
	\label{fig5}
\end{figure}
\begin{figure}[http!]
	\centering
	\subfloat[For $N(0,1)$.]{\includegraphics[width=6.8cm,height=7.5cm]{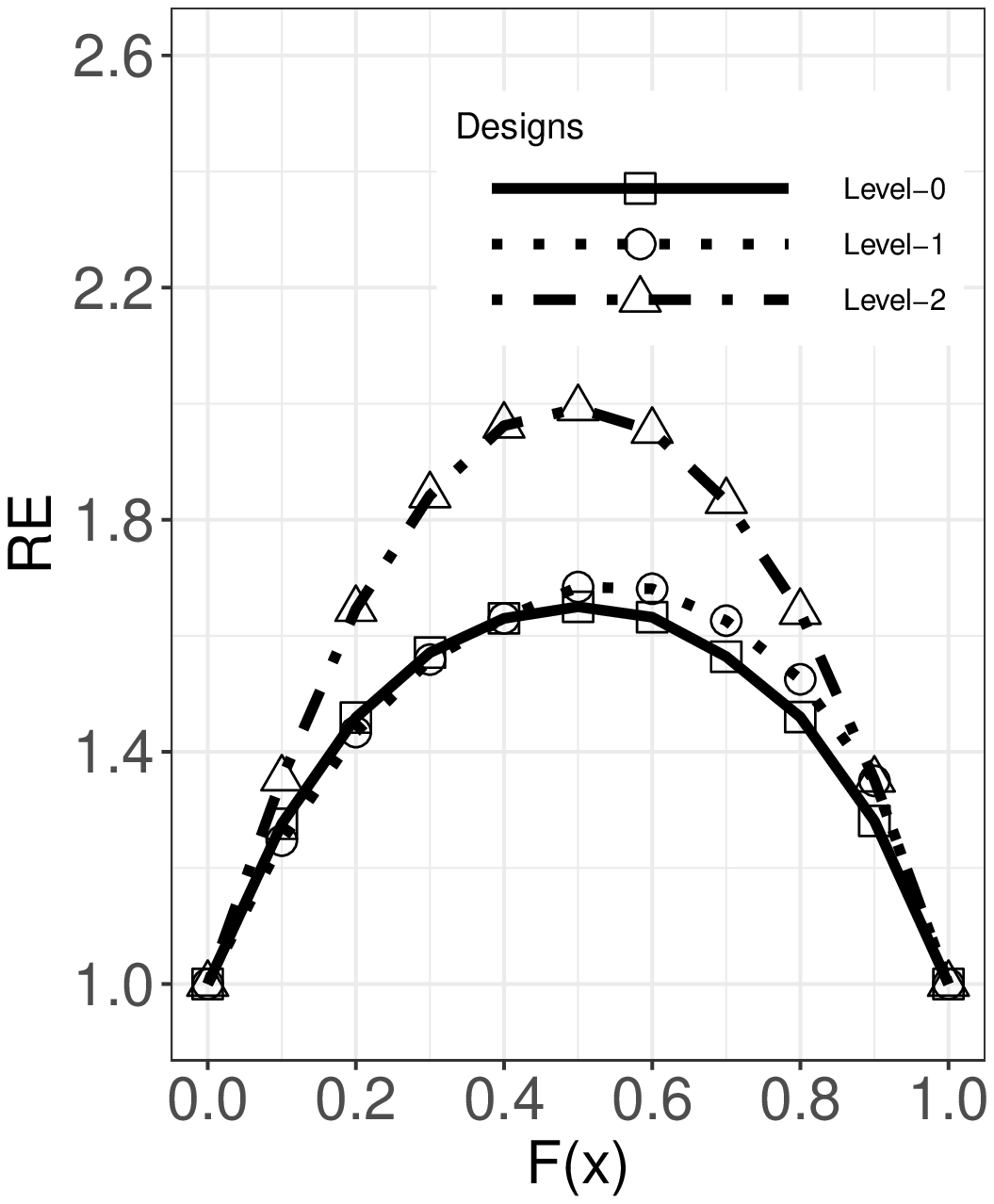}}
	\subfloat[For $U(0,1)$.]{\includegraphics[width=6.8cm,height=7.5cm]{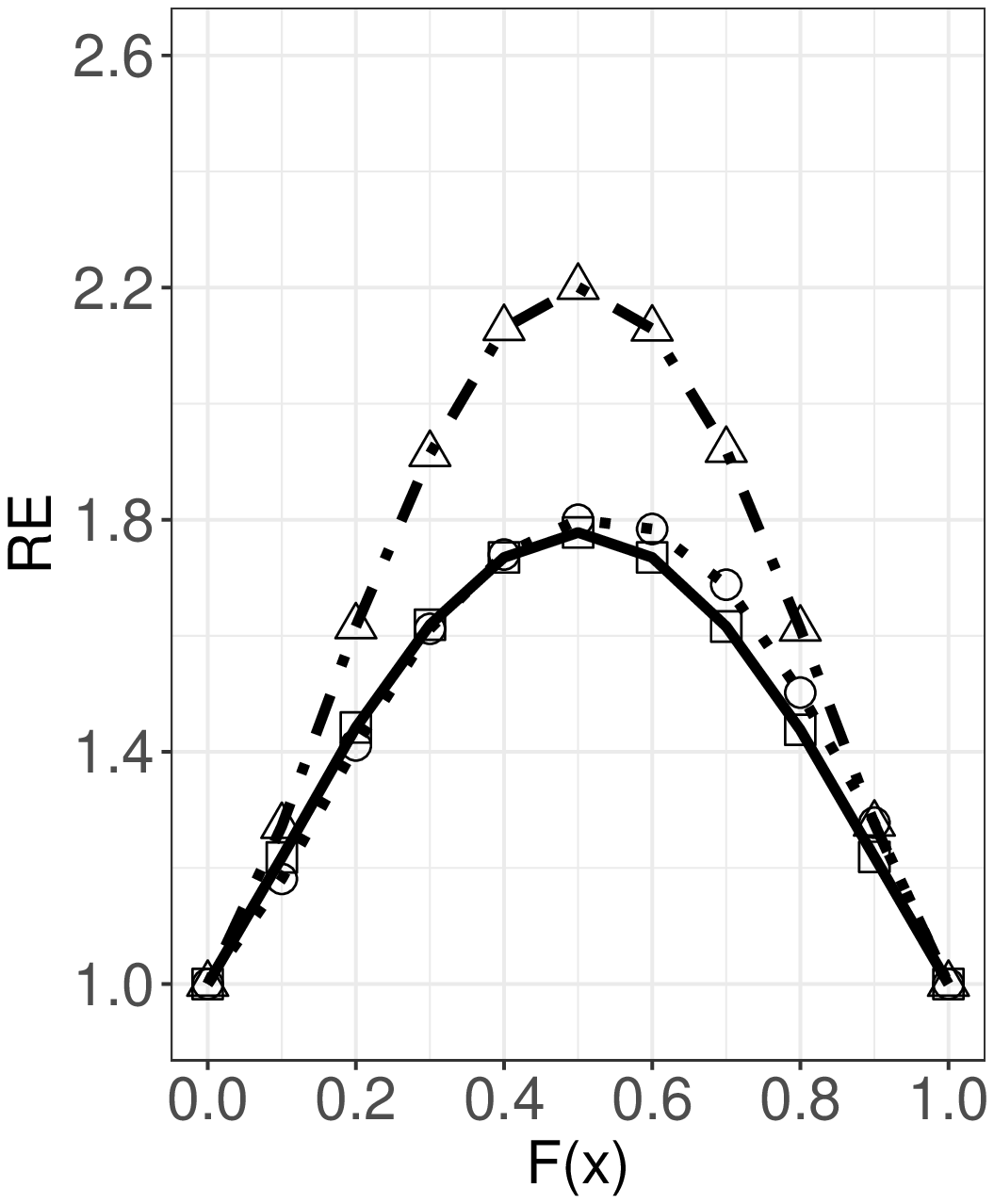}}
	\hskip2em
	\subfloat[For $Exp(1)$.]{\includegraphics[width=6.8cm,height=7.5cm]{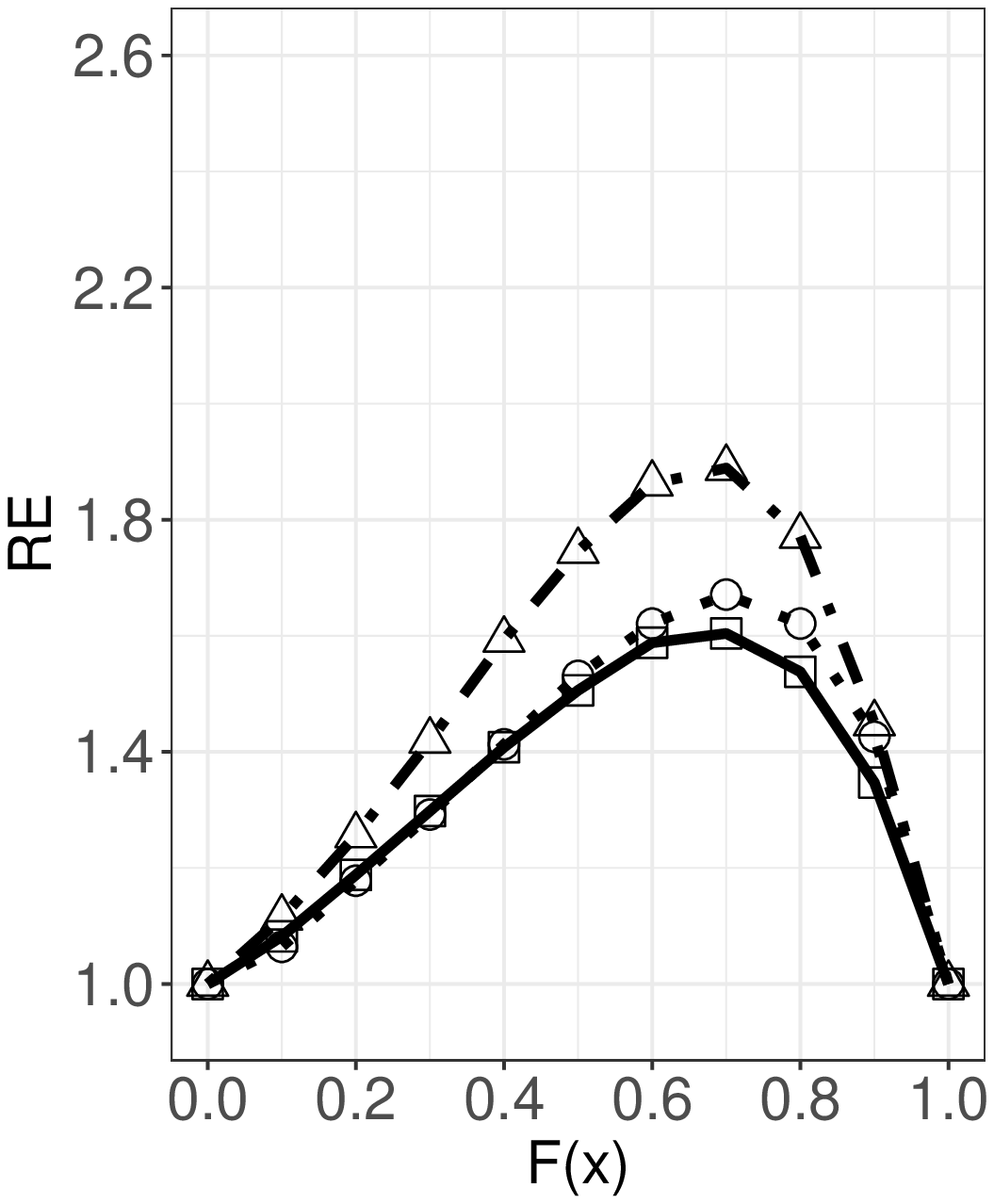}}
	\subfloat[For $Beta(5,2)$.]{\includegraphics[width=6.8cm,height=7.5cm]{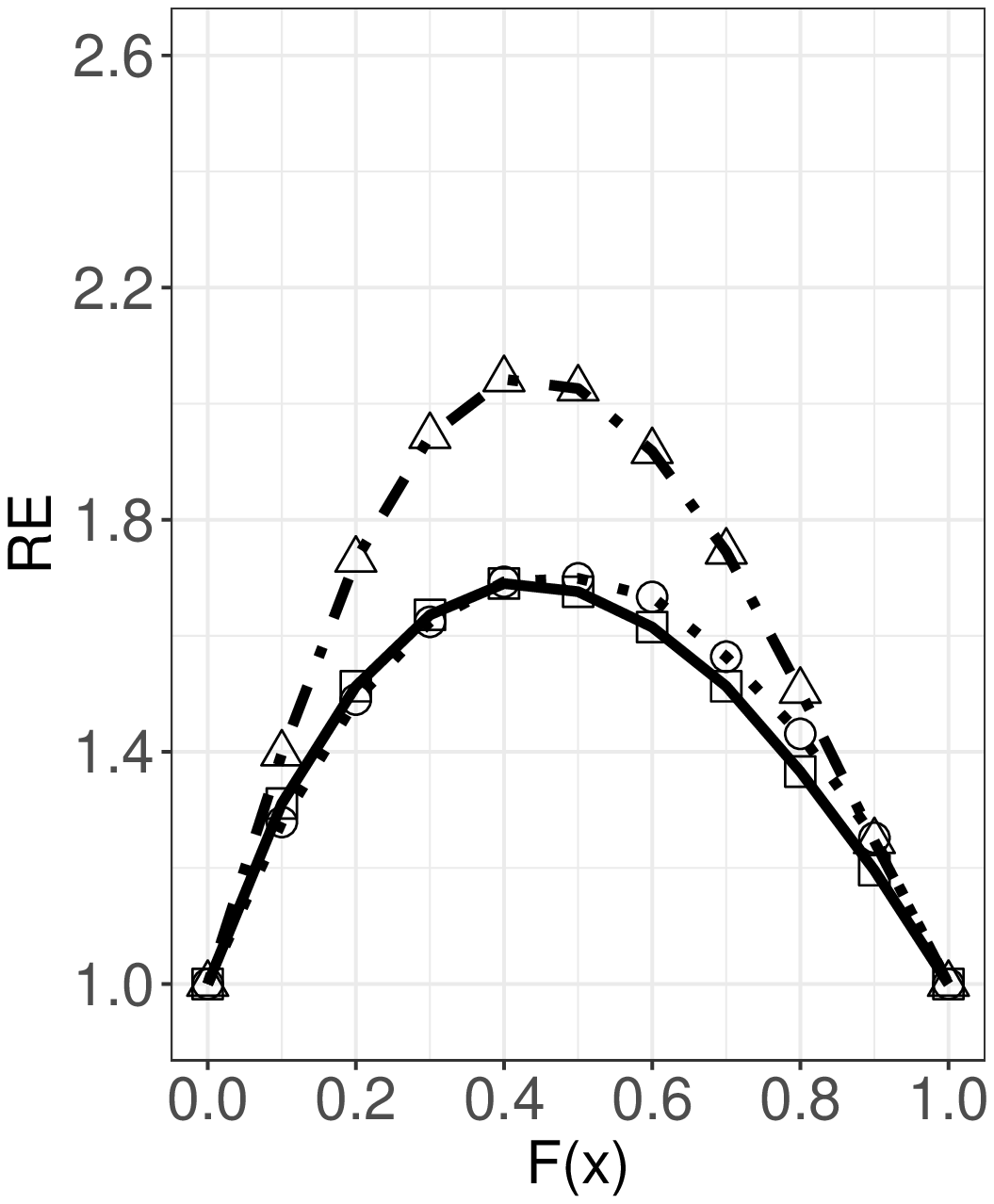}}
	\caption{For the simulation parameters $N=100$, $m=4$, $k=5$ and $\rho=0.9$, the values of $RE(\hat{F}_{h}(x_{p}),\hat{F}_{L-t}^{\ast }(x_{p}))$ for $t=0,1,2$ where $F(x)=p$}
	\label{fig6}
\end{figure}
\begin{figure}[http!]
	\centering
	\subfloat[For $N(0,1)$.]{\includegraphics[width=6cm,height=7.5cm]{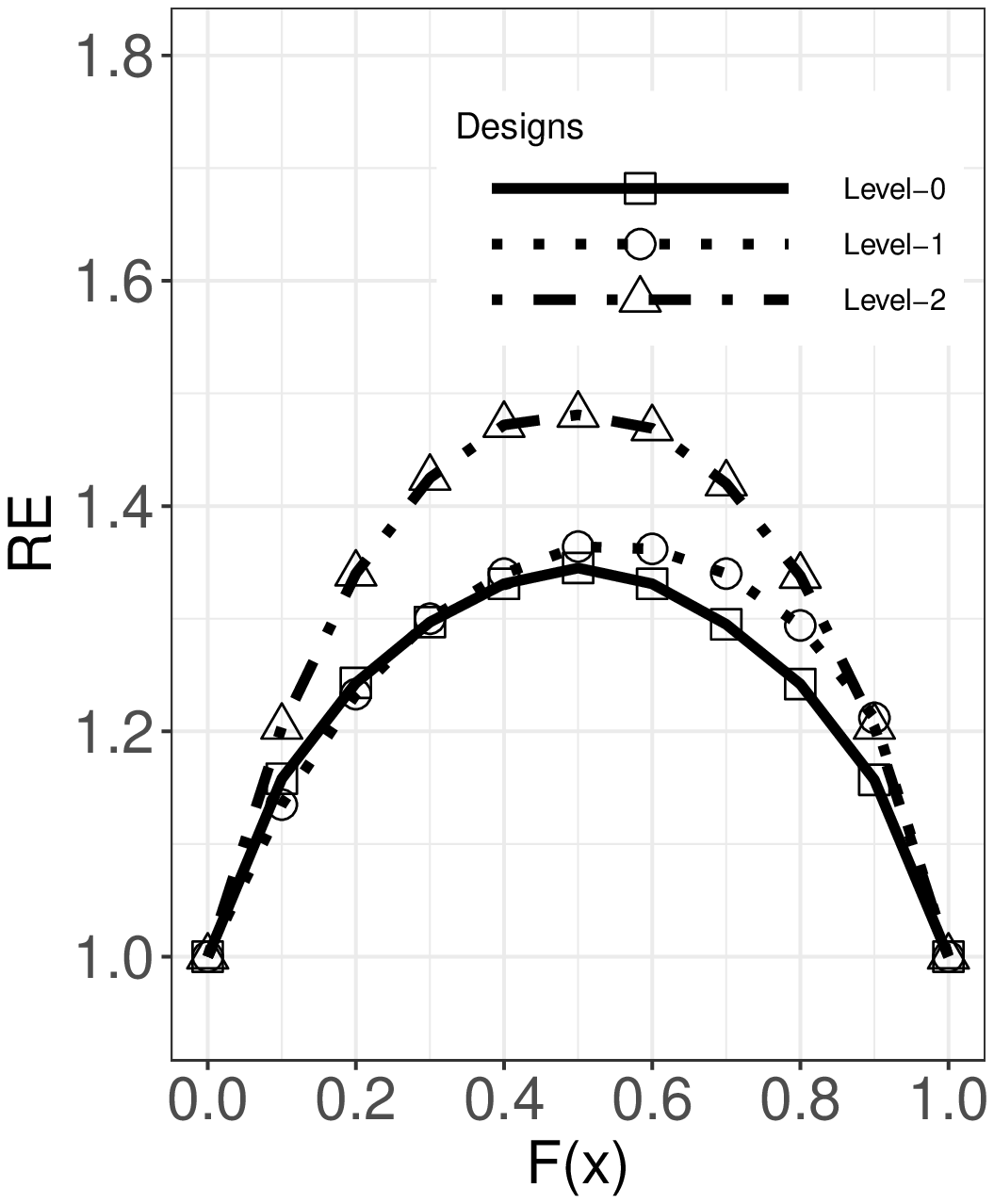}}
	\subfloat[For $U(0,1)$.]{\includegraphics[width=6cm,height=7.5cm]{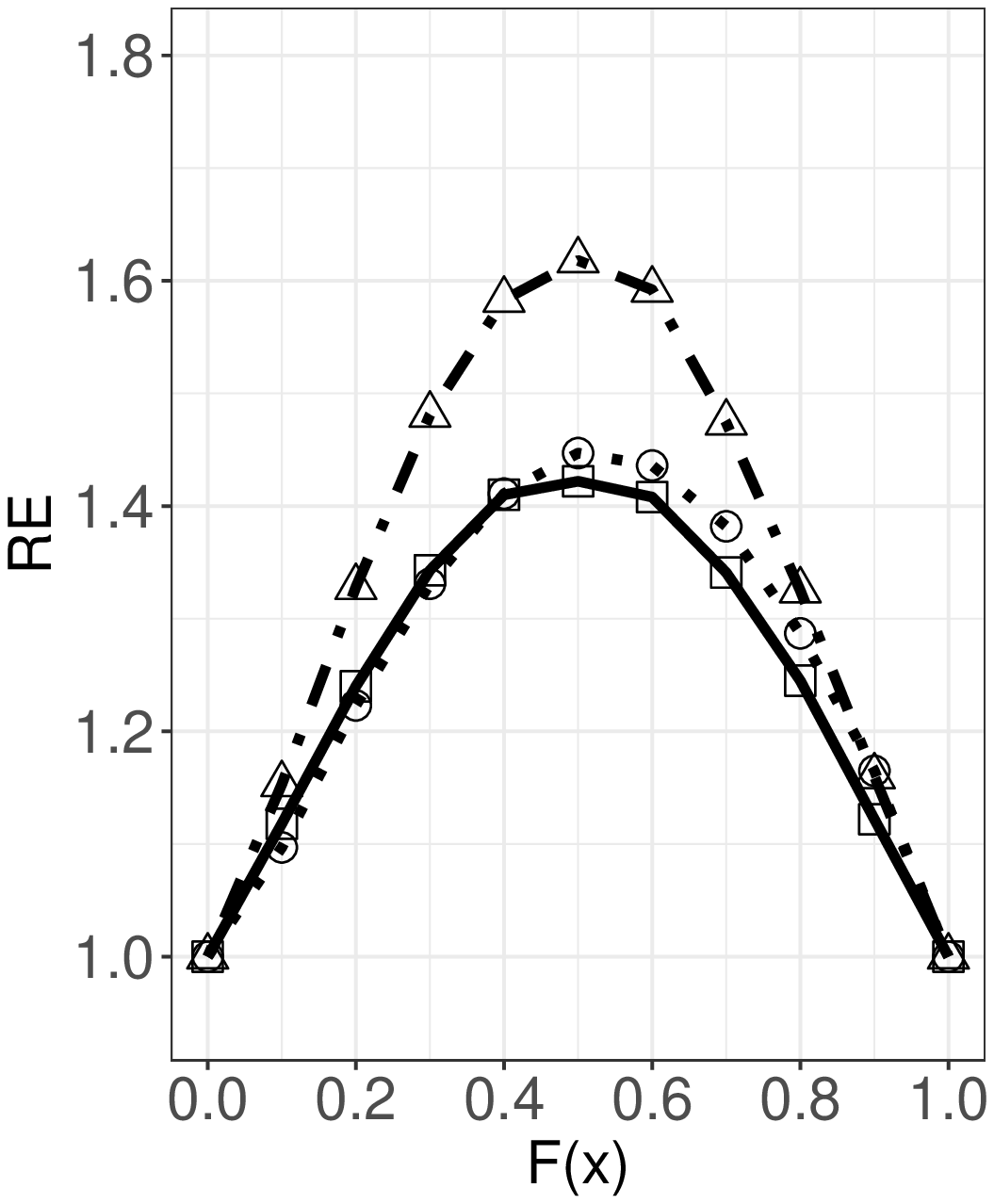}}
	\hskip2em
	\subfloat[For $Exp(1)$.]{\includegraphics[width=6cm,height=7.5cm]{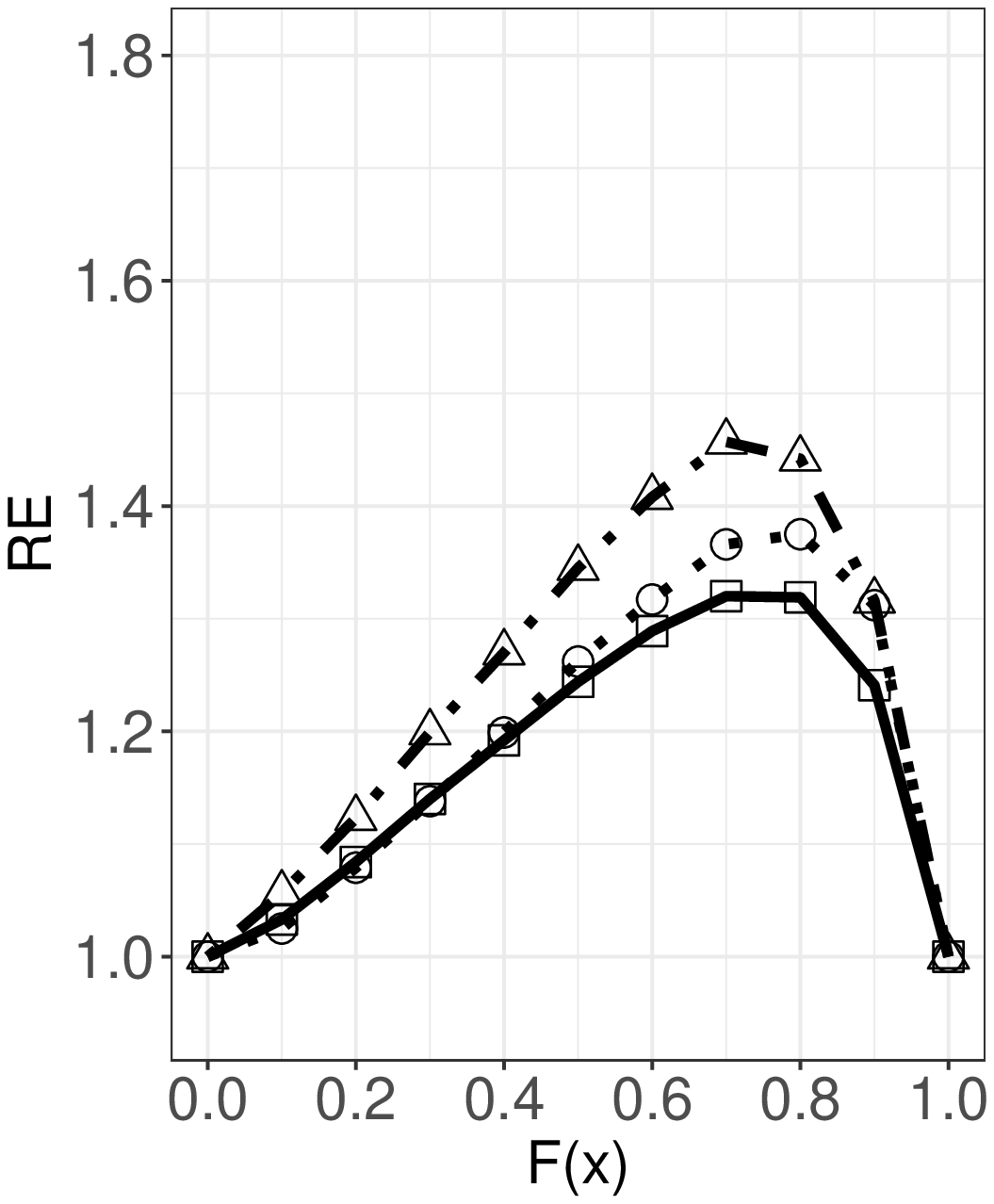}}
	\subfloat[For $Beta(5,2)$.]{\includegraphics[width=6cm,height=7.5cm]{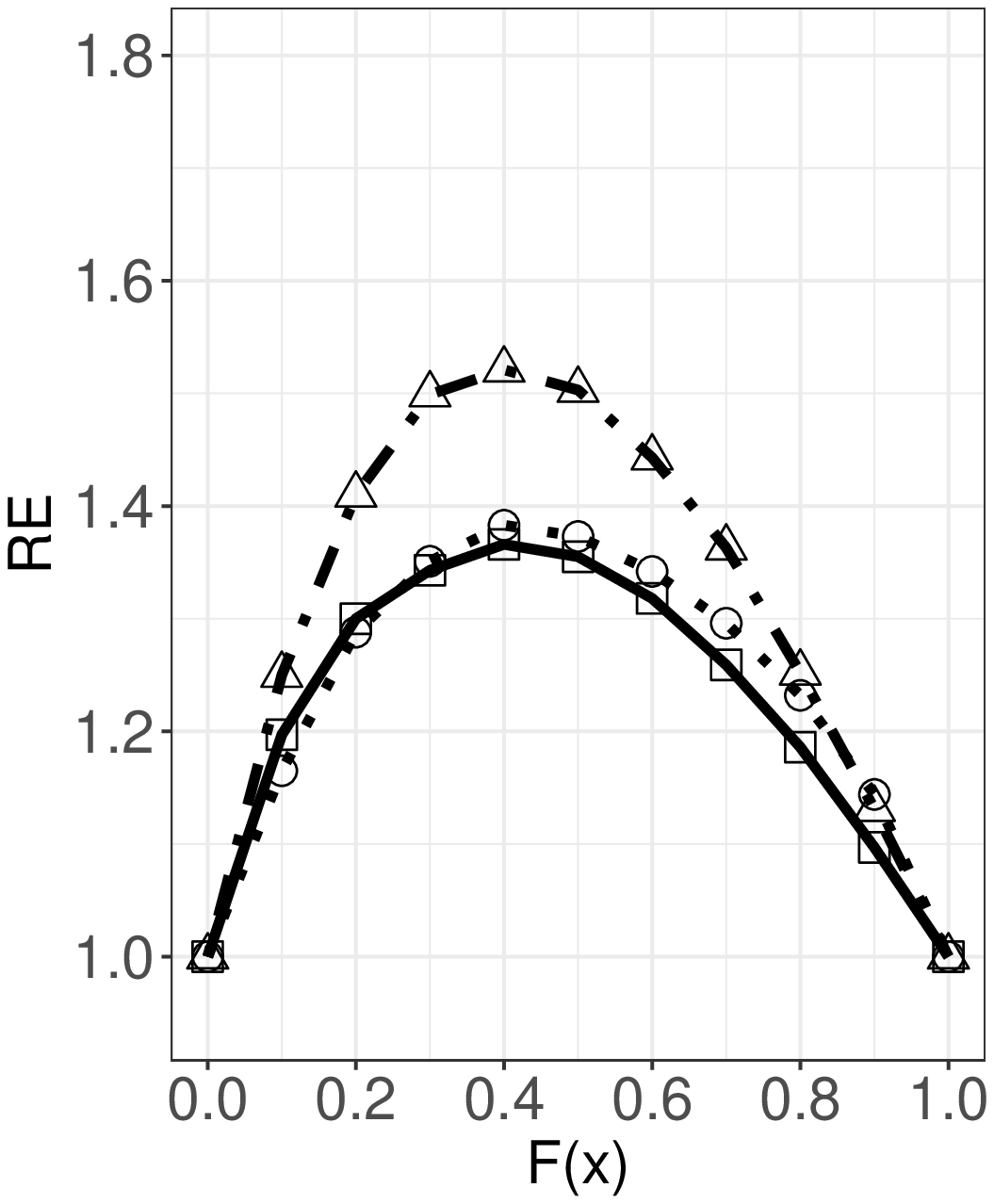}}
	\caption{For the simulation parameters $N=100$, $m=4$, $k=5$ and $\rho=0.75$, the values of $RE(\hat{F}_{h}(x_{p}),\hat{F}_{L-t}^{\ast }(x_{p}))$ for $t=0,1,2$ where $F(x)=p$}
	\label{fig7}
\end{figure}
Moreover, it is seen that the largest RE is obtained for $U(0,1)$ when $F(x)=0.5$. While the correlation coefficient gets closer to $0.75$, the REs decrease as expected. Figure S7 demonstrates that the REs of the EDFs based on sampling designs in RSS becomes poor under the case when the ranking is imperfect. Figures S1-S6 indicate that the RE is increasing in $k$. Also, REs of the EDFs based on level-t are decreasing functions of $\rho$ for each $t=0,1,2$. For Figures \ref{fig6}-\ref{fig7}, we can say that the REs have symmetric distribution around the median of $N(0,1)$ and $U(0,1)$. The shapes of REs for $Exp(1)$ and for $Beta(5,2)$ is left-skewed and slightly right-skewed, respectively. On the other hand, it is appeared that the EDF based on level-2 outperforms among the all EDFs in most cases. Because of the lack of symmetry in the level-1 sampling process, the superiority of the EDF based level-1 can be observed in right tail of the distribution of the REs. However, the superiority can be ignored. Also, we observe that the largest RE is obtained for $U(0,1)$ when $F(x)=0.5$. According to Figures S8-S17 of the supplementary material, it is seen that the REs increase while the set size and/or the number of cycle increase. Moreover, it is clearly observed that increase in the set size rather than increase in the number of cycle has more effect on the REs. For example, the Figures S11 and S14 indicate that REs for $n=10$ ($m=2$ and $k=5$) is larger than the REs for $n=12$ ($m=4$ and $k=3$). In other words, the REs in S11 are obtained under the case $k>m$ while the REs in S14 are obtained under the case $k<m$. Thus, it is an evident that increase in the set size has more effect on REs than increase in the number of cycle. Also, the REs get closer to one while $\rho\rightarrow 0.5$.

\section{Application}

In this section, we apply the EDF estimator based on level-2 sampling design to sheep data since level-2 has outperformance among the three sampling designs and SRS. Here, we explain the sampling procedure and estimation of the distribution function. We aim to show that the sampling procedure is applicable to any environmental or biological data and the proposed estimator can be used easily. 

This data set has been collected by the Research Farm of Ataturk University, Erzurum, Turkey and includes 224 sheep. Aim of the research is to increase meat quality and production. Therefore, a sample is selected periodically to monitor the biological growth and to provide estimates for the population means. The problem is that young sheep are very active animals and it is labor intensive to hold them secure during the measurement process. The measurement errors are mostly appeared because of this active behavior. On the other hand, auxiliary variables such as mother mating weight (kg) and lamb weight (kg) at birth can be used to rank the interested variable, young sheep weight (kg), since these auxiliary variables are accessible or can be measured with less effort than the young sheep. Thus, aim of the present paper is to show that the ranked-based sampling designs can be used effectively to reduce the number of sampled sheep. Also, we show that the sampling designs provide more efficient EDF estimators than the EDF based on SRS for distribution function of sheep's weights at seven months. 

This data set includes three variables which are mother's weight ($\pmb{Y}_{1}$), lamb weight at birth ($\pmb{Y}_{2}$) and sheep weights at seven months ($\pmb{X}$). Note that bold notations are used since each variable is assumed to contain $N$ observations. The correlation coefficients, $Corr(\pmb{X},\pmb{Y}_{1})$ and $Corr(\pmb{X},\pmb{Y}_{2})$, are $0.43$ and $0.79$, respectively. The magnitude of the correlation coefficient determines the quality of ranking. Thus, it is suggested that the lamb weight at birth is used in ranking procedure. The other descriptive statistics are given in Table \ref{Tab1}.
\begin{table}[!htp] 
	\small
	\centering
	\caption{Descriptive statistics}
	\begin{tabular}{l|l|l|l}
		\hline
		& $\pmb{X}$ & $\pmb{Y}_{1}$ & $\pmb{Y}_{2}$ \\ \hline
		Minimum & $20.30$ & $42.20$ & $2.500$ \\ 
		$1$st Quantile & $25.50$ & $49.67$ & $3.875$ \\ 
		Median & $27.90$ & $52.30$ & $4.400$ \\ 
		Mean & $28.11$ & $52.26$ & $4.361$ \\ 
		Standard Deviation & $3.90$ & $4.38$ & $0.792$ \\ 
		$3$rd Quantile & $31.00$ & $55.10$ & $4.800$ \\ 
		Maximum & $40.50$ & $63.70$ & $6.700$ \\ \hline
	\end{tabular}
	\label{Tab1}
\end{table}
Also, Figure \ref{fig10} indicates that the weights of the sheep at seven months has a slightly right skewed distribution. 
\begin{figure}[hbt!]
	\centering
	\includegraphics[width=12cm,height=11cm]{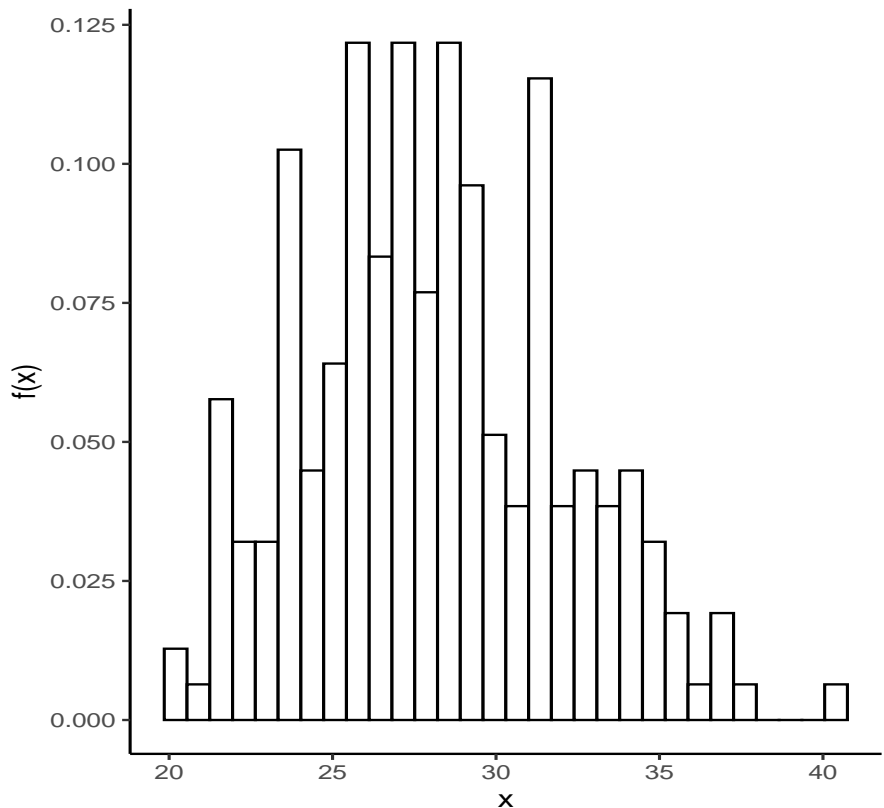}
	\caption{The distribution of weights of the sheep at seven months}
	\label{fig10}
\end{figure}
We note that this data set is available in \cite{hollander2013nonparametric}.

For the level-2 sampling design, we need $mk^{2}\leq N$. To obtain a ranked set sample, we follow the instructions in the Problem 32 \cite[Chapter~15]{hollander2013nonparametric}. We set $k=3$ and $m=7$ with reference to the Problem 32. First, the sheep are enumerated, $1\leq i\leq 224$. Then, three sheep are selected without replacement among the 224 sheep. By using the mother's weight at the time of mating as auxiliary variable, the selected sheep's weights are ranked from the smallest to the largest. After that, the sheep which has the smallest weight among the three sheep is selected and its weight is measured, $X_{1[1]1}=27.6$. The none of the sheep is returned to the other 221 sheep. After the first set, another three sheep are selected without replacement among the 221 sheep. The weights of the sheep are ranked from the smallest to the largest by using their mother's weight at the time of mating. In this set, the sheep which has the second smallest weight is selected and its weight is measured, $X_{2[2]1}=27.9$. The none of the sheep is returned to the other $218$ sheep. Then, three sheep are selected without replacement among the $218$ sheep. The first cycle is completed after the sheep which has the largest weight among the three sheep is selected and its weight is measured, $X_{3[3]1}=34.0$. The measured weights are given in the first row of Table \ref{Tab2}. The procedure is repeated in each cycle. Eventually, a ranked set sample of size $n=21$ is obtained by using level-2 sampling design. The weights of the $21$ sheep are given by Table \ref{Tab2}. First order inclusion probability of each sheep is $\pi_{i}^{(2)}=0.09375$ where $1\leq i\leq 224$. The second order inclusion probabilities are provided by Table S1 of the supplementary material. 
\begin{table}[!htp] 
	\small
	\centering
	\caption{A ranked set sample which is obtained by using level-2 sampling design}
	\begin{tabular}{l|lll}
		\hline
		& $1^{st}$ & $2^{nd}$ & $3^{rd}$ \\ \hline
		Cycle 1 & $27.6$ & $27.9$ & $34.0$ \\ 
		Cycle 2 & $25.5$ & $30.2$ & $25.5$ \\ 
		Cycle 3 & $26.5$ & $23.5$ & $25.9$ \\ 
		Cycle 4 & $23.0$ & $25.0$ & $40.5$ \\ 
		Cycle 5 & $20.5$ & $30.5$ & $35.1$ \\ 
		Cycle 6 & $23.0$ & $31.0$ & $33.5$ \\ 
		Cycle 7 & $27.9$ & $33.5$ & $35.5$ \\ \hline
	\end{tabular}
	\label{Tab2}
\end{table}

Now, we give an example for pointwise estimate of distribution function. It is assumed that the median $M$ is known to be $27.90$. Here, the goal is to estimate $F(27.90)$. By using the Eq. (7), we can write
\begin{equation*}
	\hat{F}_{L-2}^{\ast }(27.90)=\sum\limits_{i\in \pmb{D}_{2}}\frac{\frac{
			I\left( X_{i}\leq 27.90\right) }{0.09375}}{\sum\limits_{\imath\in \pmb{D}_{2}}\frac{1}{0.09375}}
\end{equation*}
where $1\leq i\leq 224$. It is found that $\hat{F}_{L-2}^{\ast }(27.90)=0.5714$. By using Eq. (11), $95\%$ confidence interval of $F(27.90)$ is obtained as $\left(0.3978, 0.745\right) $. On the other hand, we suppose that the median $M$ is not known. In this case, any quantile such as median can be estimated by using proposed EDF estimators. To find the estimator $\hat{M}$ of the median $M$, we set
\begin{equation*}
	\hat{M}=\hat{F}_{L-2}^{-1}(0.5)
\end{equation*}
where $\hat{F}_{L-2}^{-1}$ is inverse function of $\hat{F}_{L-2}^{\ast }$. First, the estimated distribution function $\hat{F}_{L-t}^{\ast }(x)$ is calculated by using Eq. (7). The probabilities $\hat{F}_{L-t}^{\ast }(x)$ are given by Table \ref{Tab3}.
\begin{table}[htt!] 
	\small
	\centering
	\caption{The values of $\hat{F}_{L-t}^{\ast }(x)$}
	\begin{tabular}{l|l}
		\hline
		$x$ & $\hat{F}_{L-t}^{\ast }(x)$ \\ \hline
		$x<20.5$ & $0$ \\ 
		$20.5\leq x<23.0$ & $0.0476$ \\ 
		$23.0\leq x<23.5$ & $0.1429$ \\ 
		$23.5\leq x<25.0$ & $0.1905$ \\ 
		$25.0\leq x<25.5$ & $0.2381$ \\ 
		$25.5\leq x<25.9$ & $0.3333$ \\ 
		$25.9\leq x<26.5$ & $0.3810$ \\ 
		$26.5\leq x<27.6$ & $0.4286$ \\ 
		$27.6\leq x<27.9$ & $0.4762$ \\ 
		$27.9\leq x<30.2$ & $0.5714$ \\ 
		$30.2\leq x<30.5$ & $0.6190$ \\ 
		$30.5\leq x<31.0$ & $0.6667$ \\ 
		$31.0\leq x<33.5$ & $0.7143$ \\ 
		$33.5\leq x<34.0$ & $0.8095$ \\ 
		$34.0\leq x<35.1$ & $0.8571$ \\ 
		$35.1\leq x<35.5$ & $0.9048$ \\
		$35.5\leq x<40.5$ & $0.9524$ \\ 
		$40.5\leq x$ & $1$ \\ \hline
	\end{tabular}
	\label{Tab3}
\end{table}
Then, the estimated median is obtained as $\hat{F}_{L-t}^{-1}(0.5)=\hat{M}=27.9$ according to Table \ref{Tab3}. Now, we define $c_{1}$ and $c_{2}$ to find approximate $95\%$ confidence interval for $M$. We set
\begin{equation*}
	\Pr \left\{ c_{1}\leq \hat{F}_{L-t}^{\ast }(M)\leq c_{2}\right\} =0.95
\end{equation*}
then, the $95\%$ confidence interval can be written as
\begin{equation*}
	\left(\hat{F}_{L-t}^{-1}(c_{1}),\hat{F}_{L-t}^{-1}(c_{2}) \right). 
\end{equation*}
By using Eq. (11), $c_{1}$ and $c_{2}$ can be expressed as
\begin{equation}
	c_{1}=0.5-1.96\left( \hat{V}\left[ \hat{F}_{L-t}^{\ast }(\hat{M})\right] \right)
	^{1/2}, \hspace{0.7cm} c_{2}=0.5+1.96\left( \hat{V}\left[ \hat{F}_{L-t}^{\ast }(\hat{M})\right] \right)
	^{1/2}. 
\end{equation}
By using Eqs. (9) and (14), $c_{1}$ and $c_{2}$ are calculated as follows:
\begin{equation*}
	\begin{split}
		c_{1}= & 0.5-1.96\sqrt{0.0072} \\ 
		=&0.3264 \\
		c_{2}= & 0.5+1.96\sqrt{0.0072} \\
		=&0.6736
	\end{split}
\end{equation*}
The $95\%$ confidence interval of $M$ is $\left(25.7112,30.7360 \right) $. Thus, we are $95\%$ confident that the median $M$ of the sheep's weights at seven months is between $25.71$ kg and $30.73$ kg.

\section{Conclusion}

In many studies such as environmental, ecological, agricultural, biological etc., researchers face with sampling problem. In general, sample observations are selected by using SRS without replacement. The protocol of SRS must be carefully planned since it is expected that each individual measurement in the sample is likely to be representative of the population characteristic of interest. Therefore, the $n$ observations should be selected from entire population. However, in practice, there is no guarantee that a single simple random sample of size $n$ is truly representative of the entire population. Of course, the problem can be solved simply if the sample size $n$ is increased by researcher. If taking actual measurements is difficult (e.g. costly and/or time consuming), increasing the sample size will not be a good solution. 

In this paper, we investigate three ranked-based sampling strategies which are called as level-0, level-1 and level-2. These sampling designs use additional information to create an artificially stratified population. In other words, $k$ artificial strata is deemed to consist when the set size is $k$. If the number of cycles is $m$, then $m$ measurements can be taken from each stratum and it makes possible to obtain more representative sample. On top of this statement, an interesting question has been pointed out by the anonymous referee and is as follows: Could the stratified simple random sampling (SSRS) with $k$-equal-sized strata be more efficient than RSS with set size $k$, since the strata wouldn't overlap as in RSS? In the procedure of RSS, obtaining $m$ units from each stratum is only an assumption. It is possible to obtain a different number of units from each stratum. However, it can be said that sampling designs provide more efficient estimator than the SSRS. Because, negative covariance will occur between any two units measured from different strata, especially when level-1 and level 2 sampling designs are used. It means that the negative covariance reduces the magnitude of the variances of the proposed estimators based on sampling designs. However, it cannot be observed a negative covariance between the any two measured units from different strata, when SSRS is used. Because, these samples which are selected from strata are totally independent.   

We have developed design-based estimators of distribution function for these sampling designs. We have examined the theoretical properties of the proposed estimators. Also, numerical results have been provided. Moreover, the EDF based on level-2 has been used to estimate the distribution function of sheep's weights at seven months. Thus, we can give some recommendations as follows:
\begin{enumerate}
	\item[1.] Level-2 sampling design shows outperformance among the three sampling designs since it is constructed by using without replacement policy. According to the REs of the proposed estimators w.r.t EDF based on SRS, we recommend to use the EDF based on level-2 sampling design.
	\item[2.] Figures \ref{fig1}-\ref{fig3} indicate that increase in the set size and the number of cycles enhances the REs between $0.2$ and $0.6$. Also, increase in the set size is showed to be more effective on REs. Depending on the difficulty of ranking the observations in the set, it is preferred to increase the set size rather than increase in the number of cycles.
	\item[3.] Regardless of the distribution of the interested population, the proposed estimator based on level-t sampling design ($t=0,1,2$) have been observed to be more efficient than the EDF based on the SRS even if the ranking is imperfect. However, the authors prefer that the correlation coefficient is greater than or equal to $0.75$. Because, the efficiency improvement diminishes as the ranking information becomes poor.
\end{enumerate}

In the literature, some authors such as Patil et al. \cite{patil1995finite}, Deshpande et al. \cite{deshpande2006nonparametric} Frey \cite{frey2011recursive}, Ozturk \cite{ozturk2014estimation} showed that Level-2 sampling scheme provides better statistical inference.   

In application, we provide an illustration for pointwise estimate of distribution function and estimation of the median by using EDF based on level-2 sampling design. We aim to show that $F(x)$ and $F^{-1}(p)$ can be estimated for a quantile $x$ and probability $p$, if the distribution function is estimated. Also, we give confidence intervals for $F(x)$ and $F^{-1}(p)$.

\section*{Appendix}
In this section, we provide a proof for the fourth part of Theorem 3.2. First, we show that $V\left[ \hat{F}_{h}(x)\right] =\frac{N-n}{N-1}\frac{\sigma ^{2}}{n}$ where $\sigma ^{2}=V\left[ I\left( X_{i}\leq x\right) \right] =F(x)\left(
1-F(x)\right) $, $i=1,\cdots, N$. Let us define another form of $V\left[ \hat{F}_{h}(x)\right]$ as follows:
\begin{equation*}
	V\left[ \hat{F}_{h}(x)\right] =\frac{\aleph +\Im }{N^{2}}.
\end{equation*}
Recall that $\pi _{i}=\frac{n}{N}$ and $\pi _{ii^{\prime }}=\frac{n(N-1)}{N(N-1)}$,
\begin{equation*}
	\begin{split}
		\aleph= & \sum\limits_{i=1}^{N}\left( \pi _{i}-\pi _{i}^{2}\right) \frac{%
			\left( I\left( X_{i}\leq x\right) -F(x)\right) ^{2}}{\pi _{i}^{2}} \\ 
		=&\left( \frac{n}{N}-\frac{n^{2}}{N^{2}}\right) \frac{N^{2}}{n^{2}}%
		\sum\limits_{i=1}^{N}\left( I\left( X_{i}\leq x\right) -F(x)\right) ^{2} \\
		=& N\left( \frac{N}{n}-1\right) \left( F(x)-F^{2}(x)\right), \\
		\Im= & \underset{i\neq i^{\prime }}{\sum \sum }\left( \pi _{ii^{\prime
		}}-\pi _{i}\pi _{i^{\prime }}\right) \left( \frac{I\left( X_{i}\leq x\right)
			-F(x)}{\pi _{i}}\right) \left( \frac{I\left( X_{i^{\prime }}\leq x\right)
			-F(x)}{\pi _{i^{\prime }}}\right) \\
		= & \left( \frac{N(n-1)}{n(N-1)}-1\right) \left(
		N^{2}F^{2}(x)-NF(x)-N(N-1)F^{2}(x)\right) \\
		= & -N\left( \frac{N(n-1)}{n(N-1)}-1\right) \left( F(x)-F^{2}(x)\right).
	\end{split}
\end{equation*}
Then,
\begin{equation}
	V\left[ \hat{F}_{h}(x)\right] =\frac{N-n}{N-1}\frac{\sigma ^{2}}{n}.
\end{equation}

Now, we define some notations to use in the rest of the proof. By using one of the designs which are level-0, level-1 and level-2, the following ranked set sample of size $mk$ is obtained.
$$%
\begin{array}{cccc}
	X_{(1)1} & X_{(2)1} & \cdots  & X_{(k)1} \\ 
	X_{(1)2} & X_{(2)2} & \cdots  & X_{(k)2} \\ 
	\vdots  & \vdots  &  & \vdots  \\ 
	X_{(1)m} & X_{(2)m} & \cdots  & X_{(k)m}%
\end{array}%
$$
In this matrix, $X_{(r)\tau}$ denotes the $r$th ranked unit from the $r$th set in $\tau$th cycle, $r=1,\cdots,k$ and $\tau=1,\cdots,m$. Also, we assume that $C(r,s)=Cov\left[ I\left( X_{(r)\tau }\leq x\right) ,I\left( X_{(s)\tau
	^{\prime }}\leq x\right) \right] $. By Patil et al. \cite{patil1995finite}, $C(r,s)$ were described as following.
\begin{equation*}
	C(r,s)=\pmb{\Psi} ^{T }\left( \pmb{\Delta} _{rs}-\pmb{\nabla} _{r}\pmb{\nabla}
	_{s}^{T }\right) \pmb{\Psi}
\end{equation*}
where $\pmb{\Psi}=\left( I\left( X_{1}\leq x\right) ,\cdots ,I\left( X_{N}\leq x\right)
\right) ^{T }$, $\pmb{\nabla} _{r}$ is $N$ dimensional column vector and $\pmb{\Delta} _{rs}$ is $N\times N$ dimensional matrix, $1\leq r,s\leq k$. We note that $\pmb{\Psi} ^{T }$ is the transpose of the vector $\pmb{\Psi}$. Here, $\pmb{\nabla} _{r}$ includes the components $\nabla _{r}^{i}$ which is probability that $r$th ranked unit in the set is $i$th ranked unit in the population. Also, $\pmb{\Delta} _{rs}$ includes the components $\Delta _{rs}^{ii^{\prime }}$ which is probability that $r$th ranked unit from a set has rank $i$ in the population and $s$th ranked unit from another set has rank $i^{\prime }$ in the population, $1\leq i,i^{\prime }\leq N$. Thus, it is clearly seen that $C(r,s)$ vary depending RSS design.

As in Patil et al. \cite{patil1995finite}, we define a variance form for $\hat{F}_{L-t}^{\ast }(x)$ as following.
\begin{equation}
	\left( mk\right) ^{2}V\left[ \hat{F}_{L-t}^{\ast }(x)\right]
	=m\sum\limits_{r=1}^{k}\sigma
	_{(r)}^{2}+m^{2}\sum\limits_{r=1}^{k}\sum\limits_{s=1}^{k}C(r,s)-m\sum%
	\limits_{r=1}^{k}C(r,r),
\end{equation}
where $\sigma
_{(r)}^{2}=V\left[I\left( X_{(r) }\leq x\right) \right] $. To define the first summation of the right hand side in (16), it is assumed that a simple random of size $n=mk$ partition into $m$ subsamples of size $k$. Considering that each subsample is ranked from the smallest to the largest, a form of $V\left[ \hat{F}_{h}(x)\right]$ can be obtained as follows:
\begin{equation}
	V\left[ \hat{F}_{h}(x)\right] =\frac{1}{mk^{2}}\sum\limits_{\tau =1}^{m}%
	\left[ \sum\limits_{r=1}^{k}\sigma _{(r)}^{2}+\underset{r\neq s}{\sum
		\sum }Cov\left( I\left( X_{(r)\tau }\leq x\right) ,I\left( X_{(s)\tau }\leq
	x\right) \right) \right]
\end{equation}
By using the Eqs. (15) and (17), the following equation is obtained.
\begin{equation}
	\sum\limits_{r=1}^{k}\sigma _{(r)}^{2}=k\frac{N-mk}{N-1}\sigma
	^{2}-\sum\limits_{\tau =1}^{m}\underset{r\neq s}{\sum \sum }Cov\left(
	I\left( X_{(r)\tau }\leq x\right) ,I\left( X_{(s)\tau }\leq x\right) \right),
\end{equation}
where $\sigma ^{2}=\frac{mk\left( N-1\right) }{N-mk}V\left[ \hat{F}_{h}(x)%
\right] $.
Finally, $V\left[ \hat{F}_{L-t}^{\ast }(x)\right]$ is obtained as follows:
\begin{align*}
	V\left[ \hat{F}_{L-t}^{\ast }(x)\right] & =V\left[ \hat{F}_{h}(x)\right] -%
	\frac{1}{mk^{2}} \left( \sum\limits_{\tau =1}^{m}\underset{r\neq s}{\sum
		\sum }Cov\left( I\left( X_{(r)\tau }\leq x\right) ,I\left( X_{(s)\tau }\leq
	x\right) \right) \right.\nonumber\\
	&\qquad \left. {} \hspace{3.2cm} -m\sum\limits_{r=1}^{k}\sum\limits_{s=1}^{k}C(r,s)+\sum%
	\limits_{r=1}^{k}C(r,r) \right)
\end{align*}
According to the theoretical results in Takahasi and Futatsuya \cite{takahasi1998dependence}, we can say that $Cov\left( I\left( X_{(r)\tau }\leq x\right) ,I\left( X_{(s)\tau }\leq
x\right) \right)\geq 0$, $C(r,s)\leq 0$ and $C(r,r)\leq 0$. Considering that $m\sum\limits_{r=1}^{k}\sum\limits_{s=1}^{k}C(r,s)>\sum%
\limits_{r=1}^{k}C(r,r)$, the proof is complete.

\bibliographystyle{tfnlm}
\bibliography{interactnlmsample}
	
\end{document}